\keywords{
  Basic Feasible Functions;
  Higher-Order Term Rewriting;
  Tuple Interpretation;
  Implicit Computational Complexity
}
\newcommand{\syntaxPurble}[1]{\textcolor{MaterialPurple}{#1}}
\tikzset{root/.style={draw, circle, minimum size=0.5cm}}
\tikzset{rhsRoot/.style={draw, minimum size=0.5cm}}
\tikzset{
    purple arrow/.style = {
      ->, %
      color=\syntaxPurble{} %
    }
}
\tikzset{
    redirect arrow/.style={
        ->, %
        color=GoogleBlue, %
        dashed
    }
}
\newcommand{\sortFont}[1]{\mathsf{#1}}
\newcommand{\varFont}[1]{#1}
\newcommand{\csPairFont}[1]{\mathcal{#1}}
\newcommand{\compFont}[1]{\mathsf{#1}}
\newcommand{\consFont}[1]{\mathsf{#1}}
\newcommand{\defFont}[1]{\mathsf{#1}}
\newcommand{\metaFont}[1]{\mathtt{#1}}
\newcommand{\nat}{\sortFont{nat}}
\newcommand{\bit}{\sortFont{bit}}
\newcommand{\word}{\sortFont{word}}
\newcommand{\nnat}{\sortFont{nnat}}
\newcommand{\cost}{\compFont{c}}
\newcommand{\size}{\compFont{s}}
\newcommand{\zero}{\consFont{0}}
\newcommand{\nil}{\consFont{[]}}
\newcommand{\suc}{\consFont{s}}
\newcommand{\bito}{\consFont{o}}
\newcommand{\biti}{\consFont{i}}
\newcommand{\bitb}{\consFont{b}}
\newcommand{\consInfix}{\consFont{::}}
\newcommand{\natz}{\consFont{o}}
\newcommand{\natsuc}{\consFont{n}}
\newcommand{\makelist}[1]{[#1]}
\newcommand{\listSep}{;}
\newcommand{\limitsize}[2]{|#1|_{#2}}
\newcommand{\add}{\defFont{add}}
\newcommand{\mult}{\defFont{mult}}
\newcommand{\helper}{\defFont{helper}}
\newcommand{\minus}{\defFont{minus}}
\newcommand{\funcProd}{\defFont{fnProd}}
\newcommand{\blank}{\mathtt{B}}
\newcommand{\asort}{\iota}
\newcommand{\bsort}{\kappa}
\newcommand{\csort}{\nu}
\newcommand{\atype}{\sigma}
\newcommand{\btype}{\tau}
\newcommand{\ctype}{\rho}
\newcommand{\afun}{\defFont{f}}
\newcommand{\bfun}{\defFont{g}}
\newcommand{\funF}{\defFont{F}}
\newcommand{\funG}{\defFont{G}}
\newcommand{\funS}{\defFont{S}}
\newcommand{\avar}{x}
\newcommand{\bvar}{y}
\newcommand{\cvar}{z}
\newcommand{\aFuncVar}{F}
\newcommand{\aterm}{s}
\newcommand{\bterm}{t}
\newcommand{\cterm}{u}
\newcommand{\encode}[1]{\underline{\mathsf{#1}}}
\newcommand{\aDom}{\csPairFont{F}}
\newcommand{\costInt}[1]{\mathcal{C}_{#1}}
\newcommand{\sizeInt}[1]{\mathcal{S}_{#1}}
\newcommand{\aTOFunc}{f}
\newcommand{\bffDom}{W}
\newcommand{\syntaxSig}{\mathbb{F}}
\newcommand{\signature}{\Sigma}
\newcommand{\var}{\mathbb{X}}
\newcommand{\terms}{\mathsf{T}(\syntaxSig,\var)}
\newcommand{\vars}[1]{\mathtt{vars}(#1)}
\newcommand{\sortset}{\mathbb{B}}
\newcommand{\simpletypeset}{{\mathbb{T}(\sortset)}}
\newcommand{\rules}{\mathcal{R}}
\newcommand{\Nat}{\mathbb{N}}
\newcommand{\trs}{{(\syntaxSig,\rules)}}
\newcommand{\polM}{P_M}
\newcommand{\transitions}{\mathcal{T}}
\newcommand{\Pos}{\metaFont{pos}}
\newcommand{\emptyPos}{\sharp}
\newcommand{\rulesArrow}{\to}
\newcommand{\arrzR}{\rulesArrow_\rules}
\newcommand{\arrz}{\rulesArrow}
\newcommand{\arrzT}{\mathrel{\rulesArrow_\rules^{+}}}
\newcommand{\arrzTR}{\mathrel{\rulesArrow_\rules^{*}}}
\newcommand{\sizeinterpret}[1]{{\llbracket{} #1 \rrbracket}^\size}
\newcommand{\costinterpret}[1]{{\llbracket{} #1 \rrbracket}^\cost}
\newcommand{\totalcost}[1]{\mathsf{cost}(#1)}
\newcommand{\totalcostprime}[1]{\mathsf{cost^{\star}}(#1)}
\newcommand{\funcinterpret}[1]{\mathcal{J}_{#1}}
\newcommand{\funcinterpretsize}[1]{\mathcal{J}_{#1}^\size}
\newcommand{\funcinterpretcost}[1]{\mathcal{J}_{#1}^\cost}
\newcommand{\costGt}{>}
\newcommand{\costGe}{\geq}
\newcommand{\sizeGe}{\sqsupseteq}
\newcommand{\typeVec}[1]{\vec{#1}}
\newcommand{\polySet}[1]{
    {\mathtt{Pol}^2_\Nat[#1]}
}
\newcommand{\fatlambda}{\lambda\!\!\!\lambda}
\newcommand{\app}{\,}
\newcommand{\hasType}{\mathrel{:}}
\NewDocumentCommand{\irc}{o}{
    \IfValueTF{#1}{
        \metaFont{irc}_{\rules_{#1}}
    }{
        \metaFont{irc}_{\rules}
    }
}
\newcommand{\fpTime}{\ensuremath{\mathtt{FP}}}
\newcommand{\bff}{\ensuremath{\mathtt{BFF}}}
\newcommand{\bffT}{{\bff_{2}}}
\newcommand{\sfSymbol}{\defFont{S}_f}
\newcommand{\rulesExtdvec}{\rules_{+\vec{f}}}
\newcommand{\rulesExtd}{\rules_{+f}}
\newcommand{\subtermR}{\mathrel{\unrhd}}
\newcommand{\asub}{\gamma}
\newcommand{\defName}[1]{\textbf{#1}}
\newcommand{\defTerm}[1]{\emph{#1}}
\newcommand{\ar}{\metaFont{typeOf}}
\newcommand{\lvl}[1]{\metaFont{ord}(#1)}
\newcommand{\arrtype}{\Rightarrow}
\newcommand{\arrfunc}{\longrightarrow}
\newcommand{\arrfuncwm}{\Longrightarrow}
\newcommand{\cs}{cost--size} %
\newcommand{\cstitle}{Cost--Size} %
\newcommand{\unary}[1]{\ulcorner\mathsf{#1}\urcorner}
\newcommand{\aGraph}{G}
\newcommand{\bGraph}{H}
\newcommand{\cGraph}{I}
\newcommand{\dGraph}{J}
\newcommand{\arrg}{\rightsquigarrow}
\newcommand{\aVert}{v}
\newcommand{\gLab}{\metaFont{label}}
\newcommand{\gSucc}{\metaFont{succ}}
\newcommand{\gApp}{@}
\newcommand{\gVar}{\bot}
\newcommand{\gRoot}{{\Lambda}}
\newcommand{\gVert}{V}
\newcommand{\gEmptyW}{\varepsilon}
\newcommand{\gHom}{\phi}
\newcommand{\gRedirect}{\mathbin{\gg}}
\newcommand{\toGraph}[1]{[#1]_{\mathbb{G}}}
\newcommand{\fromGraph}[1]{[#1]^{-1}_\mathbb{G}}
\newcommand{\restr}[2]{\mathsf{reach}(#1,#2)}
\newcommand{\aTTFunc}{\Psi}
\newcommand{\startState}{q_0}
\newcommand{\startStateSymb}{\mathtt{q_0}}
\newcommand{\leftM}{L}
\newcommand{\rightM}{R}
\newcommand{\cfgSep}{{\#}}
\newcommand{\aOTM}{M}
\newcommand{\aState}{q}
\newcommand{\cfgTrans}{\rightsquigarrow}
\newcommand{\tapeL}{\consFont{L}}
\newcommand{\tapeR}{\consFont{R}}
\newcommand{\tapeSplit}{\consFont{split}}
\newcommand{\leftTy}{\sortFont{left}}
\newcommand{\rightTy}{\sortFont{right}}
\newcommand{\tapeTy}{\sortFont{tape}}
\newcommand{\stateSymb}{\defFont{q}}
\newcommand{\aStateSymb}{\mathtt{q}}
\newcommand{\bStateSymb}{\mathtt{l}}
\newcommand{\querystate}{\mathtt{query}}
\newcommand{\answerstate}{\mathtt{answer}}
\newcommand{\finalstate}{\mathtt{end}}
\newcommand{\machinestep}{\defFont{step}}
\newcommand{\clean}{\defFont{clean}}
\newcommand{\execute}{\defFont{execute}}
\newcommand{\len}{\defFont{len}}
\newcommand{\limit}{\defFont{limit}}
\newcommand{\returnif}{\defFont{retif}}
\newcommand{\tryapply}{\defFont{tryapply}}
\newcommand{\maxsymb}{\defFont{max}}
\newcommand{\tryall}{\defFont{tryall}}
\newcommand{\extract}{\defFont{extract}}
\newcommand{\cfgTy}{\sortFont{config}}
\newcommand{\init}[1]{\mathsf{initial}(#1)}
\newcommand{\setnil}{\consFont{\emptyset}}
\newcommand{\setcons}{\consFont{scons}}
\newcommand{\setsort}{\sortFont{set}}
\newcommand{\tobin}{\defFont{toBin}}
\newcommand{\lengthOf}{\defFont{lengthOf}}
\newcommand{\compute}{\defFont{compute}}
\newcommand{\main}{\defFont{start}}
\newcommand{\accV}{\varFont{acc}}
\newcommand{\counterV}{\varFont{i}}
\newcommand{\binAddInfix}{\mathbin{\defFont{+_{B}}}}
\definecolor{shade}{HTML}{F4F4FF}%
\theoremstyle{definition}
\newtheorem{theorem}[thm]{Theorem}
\newtheorem{lemma}[theorem]{Lemma}
\newtheorem{corollary}[theorem]{Corollary}
\newtheorem{definition}[theorem]{Definition}
\newtheorem{example}[theorem]{Example}
\newtheorem{remark}[theorem]{Remark}
\begin{document}

\title[A Characterization of \texorpdfstring{\( \bff_2 \)}{BFF} Through Tuple Interpretations]{A Characterization of Basic Feasible Functionals Through Higher-Order Rewriting and Tuple~Interpretations}

\thanks{This work is supported by the NWO TOP project
``Implicit Complexity through Higher-Order Rewriting'',
NWO 612.001.803/7571,
the NWO VIDI project
``Constrained Higher-Order Rewriting and Program Equivalence'',
NWO VI.Vidi.193.075,
and the ERC CoG
``Differential Program Semantics'',
GA 818616.
}

\author[P.~Baillot]{Patrick Baillot\lmcsorcid{0009-0002-9364-1140}}[a]
\author[U.~Dal~Lago]{Ugo Dal Lago\lmcsorcid{0000-0001-9200-070X}}[b]
\author[C.~Kop]{Cynthia Kop\lmcsorcid{0000-0002-6337-2544}}[c]
\author[D.~Vale]{Deivid Vale\lmcsorcid{0000-0003-1350-3478}}[c]

\address{Univ. Lille, CNRS, Inria, Centrale Lille, UMR 9189 CRIStAL, F-59000 Lille, France}
  \email{patrick.baillot@univ-lille.fr}

\address{Universit\`a di Bologna, Inria}
  \email{ugo.dallago@unibo.it}

\address{Radboud University Nijmegen}
  \email{c.kop@cs.ru.nl, deividvale@cs.ru.nl}

\begin{abstract}
  \noindent
  The class of type-two basic feasible functionals (\( \bffT \)) is the analogue of
  \fpTime{} (polynomial time functions) for type-2 functionals, that is,
  functionals that can
  take (first-order) functions as arguments\@. \( \bffT \) can be defined through
  Oracle Turing Machines with running time bounded by second-order polynomials.
  On the other hand, higher-order term rewriting
  provides an elegant formalism for expressing higher-order computation.
  We address the problem of characterizing \( \bffT \) by higher-order term rewriting.
  Various kinds of interpretations for \textit{first-order}
  term rewriting have been introduced in the literature for proving termination and characterizing first-order complexity classes.
  In this paper,
  we consider a recently introduced notion of \cs{} tuple interpretations
  for higher-order term rewriting and see
  second order rewriting as ways of computing type-2 functionals.
  We then prove that the class of functionals represented by higher-order terms
  admitting polynomially bounded \cs{} interpretations exactly corresponds to \( \bffT \).
\end{abstract}

\maketitle

\section{Introduction}\label{section:introduction}

Computational complexity classes --- and in particular those relating to
polynomial time and space~\cite{DBLP:journals/isci/HartmanisS69,cobham:65} ---
capture the concept of a feasible problem, and as
such have been scrutinized with great care by the scientific community in the
last fifty years.
The fact that even apparently simple problems
(such as nontrivial separation between those classes)
remain open today has highlighted the need for a comprehensive study aimed at investigating the deep nature of
computational complexity.
The so-called implicit computational
complexity~\cite{bellantoni:cook:92,leivant:91,oitavem:01,DBLP:journals/tcs/LagoH11,DBLP:journals/iandc/BaillotBR18}
fits into this picture, and is concerned with characterizations of complexity
classes based on tools from mathematical logic and the theory of programming
languages.

One of the areas involved in this investigation is certainly that of
term rewriting~\cite{terese}, which has %
proved useful as a tool for the
characterization of complexity classes. %
In particular, the class \fpTime{} (i.e., of first-order functions computable in polynomial time) has been
characterized through variations of techniques originally
introduced for \emph{termination}, e.g., the %
interpretation method~\cite{manna:ness:70,lankford:79},
path orders~\cite{dershowitz:82}, or dependency pairs~\cite{giesel:thiemann:schneider:04}.
Some examples of such characterizations can be found in~\cite{beckmann:weiermann:96,DBLP:journals/jfp/BonfanteCMT01,DBLP:journals/tcs/BonfanteMM11,DBLP:journals/corr/AvanziniM13,DBLP:conf/csl/BaillotL12}.

After the introduction of \fpTime{},
it became clear that the study of computational complexity
also applies to \textit{higher-order functionals},
which are functions that take not only data but also other functions as inputs.
The pioneering work of
Constable~\cite{constable:73},
Mehlhorn~\cite{mehlhorn:76},
and Kapron~and~Cook~\cite{kapron:cook:96}
laid the foundations of the so-called higher-order complexity,
which remains a prolific research area to this day.
Some motivations for this line of work can be found for instance
in computable analysis~\cite{DBLP:journals/toct/KawamuraC12},
NP search problems~\cite{DBLP:journals/jcss/BeameCEIP98},
and programming language theory~\cite{DBLP:conf/popl/DannerR06}.

There have been several proposals for a class of type-2
functionals that generalizes \fpTime{}.
However, the most widely accepted one is the class \( \bffT \)
of \emph{type-two basic feasible functionals}.
This class can be characterized based on function algebras,
similar to Cobham-style,
but it can also be described using Oracle Turing Machines.
The class \( \bffT \) was then the object of study by the research community,
which over the years has introduced
a variety of characterizations, e.g.,
in terms of programming languages with
restricted recursion schemes~\cite{DBLP:journals/jfp/IrwinRK01,DBLP:conf/popl/DannerR06},
typed imperative languages~\cite{DBLP:conf/lics/HainryKMP20,DBLP:conf/fossacs/HainryKMP22},
and restricted forms of iteration in OTMs~\cite{DBLP:conf/lics/KapronS18}.

The studies cited above present structurally complex programming languages and
logical systems, precisely due to the presence of higher-order functions.
It is not currently known whether it is possible to give a characterization of
\( \bffT \) in terms of mainstream concepts of rewriting theory,
although the latter has long been known to provide tools
for the modeling and analysis of functional programs with higher-order
functions~\cite{klop:oostrom:raamsdonk:93}.

This paper goes precisely in this direction by showing that the interpretation method in the form studied by Kop~and~Vale~\cite{DBLP:conf/fscd/KopV21,kop:vale:23} provides the right tools to characterize \( \bffT \).
More precisely,
we consider a class of higher-order rewrite systems admitting \cs{} tuple interpretations
(with some mild upper-bound conditions on their cost and size components)
and show that this class contains exactly the functionals in \( \bffT \).
Such a characterization could not have been obtained employing classical
integer interpretations as e.g.\@ in~\cite{DBLP:journals/jfp/BonfanteCMT01}
because \( \bffT \) crucially relies on some conditions both on \emph{size} and on \emph{time}.
We believe that a benefit of this characterization is that it opens the way to effectively handling programs
or executable specifications implementing \( \bffT \) functions, in full generality.
For instance,
we expect that such a characterization could be integrated into rewriting-based
tools for complexity analysis of term rewriting systems
such as e.g.,~\cite{DBLP:conf/tacas/AvanziniMS16}.

\paragraph{Contributions}
We consider simply-typed term rewriting systems (STRS for short)
with partial application but no \( \lambda \)-abstraction.
The contributions of this paper are as follows.
\begin{itemize}
  \item We provide a compatibility theorem for \cs{} interpretations of STRSs with respect to
    innermost reduction, whose proof is simpler than that of \cite{kop:vale:23}; this captures the fact
    that \cs{} interpretations provide safe upper bounds on the length of reduction sequences, and that the size interpretation of a term cannot increase.

  \item  We propose a natural definition of computation of type-2 functionals by an STRS, which can be of more general interest than the specific characterization of the class \( \bffT \) that we focus on in this paper;
  this is purely operational in nature and revolves around the use of rewrite rules modeling calls for an ``oracle''.
  This notion of computation is, intuitively, a rewriting counterpart for oracle Turing machines.

  \item We prove a soundness result, stating that any (orthogonal) STRS with a polynomially bounded \cs{} interpretation computes a type-2 functional in \( \bffT \); this proof uses, in particular, a term-graph rewriting argument.

  \item Conversely,
  we define an encoding of polynomial time Oracle Turing Machines in an STRS
  which shows that any type-2 functional in \( \bffT \) can be computed
  by an STRS with a polynomially bounded \cs{} interpretation.
\end{itemize}

\paragraph{Related Work}
We describe here some related work about the topics addressed by this paper,
namely implicit computational complexity,
higher-order complexity classes,
higher-order rewriting systems,
and interpretations.

Implicit computational complexity refers to a line of work
aiming at characterizing complexity classes without
reference to machine models and explicit bounds on resources,
but instead by relying on logical systems and programming language restrictions.
It goes back to early work by Leivant~\cite{leivant:91} and
Bellantoni~and~Cook~\cite{bellantoni:cook:92}
and has used various methods coming in particular from
recursion theory~\cite{DBLP:conf/csl/Marion94,oitavem:01},
programming language restrictions~\cite{DBLP:journals/jfp/Jones01,DBLP:journals/lmcs/KopS17},
linear logic~\cite{DBLP:journals/iandc/Girard98,DBLP:journals/tcs/Lafont04,DBLP:conf/aplas/Baillot11} and
 type systems~\cite{DBLP:journals/iandc/Hofmann03,DBLP:journals/iandc/LagoS16}.
In the setting of term-rewriting it has taken advantage of contributions
in the area of polynomial interpretations~\cite{DBLP:conf/csl/BonfanteCMT98}
and has provided a variety of characterizations for
first-order complexity classes such as
\fpTime\ and \( \mathtt{PSPACE} \)~\cite{%
DBLP:conf/lpar/MarionM00,%
DBLP:conf/rta/BonfanteMM05,%
beckmann:weiermann:96,%
DBLP:journals/jfp/BonfanteCMT01,%
DBLP:journals/tcs/BonfanteMM11,%
DBLP:journals/corr/AvanziniM13,%
DBLP:journals/mscs/BaillotLM12%
}.

The class of Basic Feasible Functionals \( \bff \) was introduced by
Cook~and~Kapron~\cite{DBLP:conf/focs/CookK89} by means of bounded typed loop programs,
and they showed that its type-2 restriction \( \bffT \) coincides with a class that
had been defined by Melhorn~\cite{mehlhorn:76}.
They later provided a machine characterization of type-2 \( \bffT \)
by polynomial time Oracle Turing Machines
{(OTM)}~\cite{DBLP:conf/focs/KapronC91,kapron:cook:96}
which gave more confidence in the naturalness of this class.
Several works then provided alternative characterizations of \( \bffT \),
in particular by restricted recursion schemes in some
functional languages~\cite{DBLP:journals/jfp/IrwinRK01,DBLP:conf/popl/DannerR06},
or typed imperative languages with insights coming from
non-interference
analysis~\cite{DBLP:conf/lics/HainryKMP20,DBLP:conf/fossacs/HainryKMP22},
or by restricted forms of iteration in OTMs~\cite{DBLP:conf/lics/KapronS18}.

Higher-order interpretations have been introduced and investigated
in~\cite{PhDvdPol} in relation to termination issues
but not to complexity classes.
In~\cite{DBLP:conf/csl/BaillotL12,DBLP:journals/iandc/BaillotL16}
a notion of higher-order polynomial interpretations was proposed
which allowed to provide a characterization of the (first-order)
class \( \fpTime \) of polynomial time computable functions.
However, the codomain considered was the domain \( \Nat \) of natural numbers,
not tuples, and this approach did not consider higher-order complexity.
An investigation of higher-order complexity classes employing the higher-order
interpretation method in the context of a pure higher-order functional language
was proposed in~\cite{DBLP:journals/lmcs/HainryP20}.
However, this paper did not provide a characterization of the standard \( \bffT \) class.
Instead, it characterized a newly proposed class \( \mathtt{SFF}_2 \)
(Safe Feasible Functionals) which is defined as the restriction of \( \bffT \)
to argument functions in \fpTime{}
(see Sect.~4.2 and the conclusion in~\cite{DBLP:journals/lmcs/HainryP20}).
Another related line of work is that of \cite{DBLP:journals/tcs/FereeHHP15}: the authors consider a first order
functional stream language, which allows to implement second order functionals and they study (first-order) interpretations for this language.
In this way they define a subclass  of  \( \bffT \) but do not characterize  \( \bffT \) itself.
Closer to the present work, the paper~\cite{DBLP:conf/fscd/KopV21} introduced the notion of
tuple interpretation for higher-order rewriting systems
and in particular the \cs{} interpretations that we use here.
Then~\cite{kop:vale:23}
adapted this notion to the weak call-by-value strategy,
which allowed for tighter bounds
but also relaxed the approach by separating the cost and the size components.

\paragraph{Publication History}
This paper is a revised and extended version of~\cite{baillot:lago:kop:vale:24}.
This version contains detailed proofs of the results and additional
examples, which previously could not be added due to space constraints.
The results in this paper appear in monograph form
in the last author's PhD thesis~\cite[Chapter 6]{val:24}.

\paragraph{Outline of the Paper}
  The paper is organized as follows.
  We first provide some background on higher-order rewriting and simply typed term-rewriting system (STRS), and on type-two complexity (\cref{sec:preliminaries}).   We then recall the definition of \cs{} interpretations and prove a compatibility theorem for \cs{} interpretations of STRSs with respect to innermost reduction (\cref{sec:cs-int}).
  In \cref{sec:rw-bff} we state the main theorem of this paper, which says that the STRSs with polynomially bounded \cs{} interpretations exactly characterize the \( \bffT \) functionals.
  \cref{sec:soundness} is devoted to the proof of the first direction of this theorem, the soundness result.
  \cref{sec:completeness} contains the proof of the second direction, the completeness result.
  In \cref{sec:conclusion} we conclude the paper and discuss future work.

\paragraph{Technical Overview.}
  In this paper we see \( \bffT \) (\cref{def:bff-class})
  as the set of those type-2 functionals
  computed by an Oracle Turing Machine in polynomial time.
  We recall basic definitions of such theory in \cref{subsec:bff}.
  The main result of this paper is a characterization of the
  class \( \bffT \) via higher-order term rewriting.

  In order to formally give a statement of this result
  we need to first establish some important notions
  such as the rewriting counterpart of an oracle
  (see \cref{def:rw-oracle-simulation})
  and a computability notion for higher-order rewriting
  (see \cref{def:type-2-rw-computability}).
  We state the main result in \cref{thm:main}.
  It is proved in two parts.
  We first prove that if any term rewriting system in this class computes
  a higher-order functional, then this functional has to be in \( \bffT \) (\emph{soundness}).
  Conversely,
  we prove that all functionals in \( \bffT \) are computed by this class of rewriting systems
  (\emph{completeness}).
  We argue that the key ingredient towards achieving this characterization
  is the ability to split the dual notions of \textit{cost} and \textit{size} given by
  the usage of tuple interpretations.

  Soundness at first seems straightforward.
  From Kop~and~Vale~\cite{kop:vale:23}
  we know that (call-by-value) higher-order rewrite systems that admit polynomial
  interpretations (with certain conditions on the interpretation of data constructors)
  satisfy the property that their runtime complexity is polynomially bounded.
  We could temptingly say if a term rewrite sytem computes
  (in the sense of \cref{def:type-2-rw-computability}) a type-2 functional,
  it must do so in a polynomial number of steps,
  and hence the said functional must be in \( \bffT \).
  However, this is not generally the case due to the size-explosion problem,
  i.e., in a polynomial number of steps we could iterate over copied data.
  We solve this issue by restricting the interpretation of data constructors
  in \cref{def:polynomially-bounded-int} and
  by employing term graph rewriting (see \cref{subsec:graphs}).
  Additionally, we need to guarantee that polynomial interpretations alone are
  capable of controlling the size of the calls to the oracle.
  More precisely, such interpretations do not allow for unbounded repeated iteration of oracle
  calls and the size of the resulting oracle call
  is polynomially related to the size of its given input.
  This is established by the \textit{Oracle Subterm Lemma} (see \cref{lemma:oracle-subterm-lemma}).
  We then prove soundness in \cref{thm:soundness}.

  To prove completeness we work on an encoding of polynomial time Oracle
  Turing Machines (OTMs) in STRSs\@.
  We proceed as follows:
  we encode machine configurations as terms and machine transitions as rewriting
  rules that rewrite such configuration terms.
  With such encoding,
  we can faithfully simulate transitions on an OTM as one or more rewriting steps
  on the corresponding rewrite system.
  The correctness of such simulation is the subject of \cref{lem:stepencode}.
  Notice however that simulating OTMs by STRSs is not enough.
  We need to do it in polynomially many steps.
  For this we provide a rewrite system that can fully simulate a run of an OTM
  in polynomially many steps, which is given by \cref{thm:completeness}.

\section{Preliminaries}\label{sec:preliminaries}

In this section
we present a brief overview of simply typed term rewriting
and the basic notions on basic feasible functionals.
We assume the reader to be familiar with concepts from rewriting theory~\cite{terese}
and basic notions of computability and complexity theory~\cite{AroraBarak}.

\subsection{Higher-Order Rewriting}\label{subsec:hotr}

We roughly follow the definition of a simply-typed term rewriting system (STRS)~\cite{kus:01}: types are simple, consisting of either base or functional types; terms are applicative, meaning they may be partially applied but contain no lambda abstractions.
We limit our interest to second-order STRSs where all rules have base type
(which in practice means the left- and right-hand side of rules are fully applied terms).
Reduction follows an innermost evaluation strategy.
We make this intuition precise below.

\smallskip
First, let us define our notion of a type.
Let \( \sortset \) be a nonempty set of \defTerm{base types}
which are ranged over by \( \asort, \bsort, \csort \).
The set \( \simpletypeset \) of \defTerm{simple types} over
\( \sortset \) is defined by the %
grammar
\( \simpletypeset \coloneq \sortset \mid \simpletypeset \arrtype \simpletypeset \).
Types from \( \simpletypeset \) are ranged over by \( \atype, \btype, \ctype \).
The \( \arrtype \) type constructor is right-associative,
so we write \( \atype \arrtype \btype \arrtype \ctype \)
for \( {( \atype \arrtype ( \btype \arrtype \ctype ) )} \).
Hence, every type \( \atype \) can be written as
\( \atype_1 \arrtype \cdots \arrtype \atype_n \arrtype \asort \) with $n \geq 0$.
We may write such types as \( \vec{\atype} \arrtype \asort \).
The \defTerm{order} of a type is: %
\( \lvl{\asort} = 0 \) for \( \asort \in \sortset \) and
\( \lvl{\atype \arrtype \btype} = \max(1 + \lvl{\atype}, \lvl{\btype}) \).

Then, signatures are sets of function symbols, each of them equipped with a type.
Formally, a \defTerm{signature} \( \syntaxSig \) is a triple \( {(\sortset, \signature, \ar)} \)
where \( \sortset \) is a set of base types,
\( \signature \) is a nonempty set of symbols,
and \( \ar \) is a function from \( \signature \) to \( \simpletypeset \).

Next, we define terms.
For each type \( \atype \) with \( \lvl{\atype} \leq 1 \),
we assume given a set \( \var_\atype \) of countably
many variables and assume that \( \var_\atype \cap \var_\btype = \emptyset \) if
\( \atype \neq \btype \).
We let \( \var \) denote \( \cup_\atype \var_\atype \)
and assume that \( \signature \cap \var = \emptyset \).
The set \( \terms \) --- of \defTerm{terms} built from \( \syntaxSig \) and \( \var \) ---
collects those expressions \( \aterm \) for which a judgment
\( \aterm \hasType \atype \) can be deduced using the following rules:
\begin{prooftree}
    \AxiomC{\( \avar \in \var_\atype  \)}
    \LeftLabel{\small{(ax)}}
    \UnaryInfC{\( \avar \hasType \atype \)}
    \DisplayProof{}
    \AxiomC{\( \afun \in \signature \vphantom{\avar \in \var_\atype}\)}
    \AxiomC{\( \ar(\afun) = \atype \vphantom{\avar \in \var_\atype}\)}
    \LeftLabel{\small{(f-ax)}}
    \BinaryInfC{\( \afun \hasType \atype \)}
    \DisplayProof{}
    \AxiomC{\( \aterm \hasType \atype \arrtype \btype \vphantom{\avar \in
      \var_\atype}\)}
    \AxiomC{\( \bterm \hasType \atype \vphantom{\avar \in \var_\atype}\)}
    \LeftLabel{\small{(app)}}
    \BinaryInfC{\( (\aterm \app \bterm) \hasType \btype \)}
\end{prooftree}
As usual, application of terms is left-associative,
so we write
\( \aterm \app \bterm \app \cterm \) for \( {((\aterm \app \bterm) \app \cterm)} \).
Let \(\vars{\aterm}\) be the set of variables occurring in \(\aterm \).
A term \( \aterm \) is \defTerm{ground} if \( \vars{\aterm} = \emptyset \).
The \defTerm{head symbol} of a term \( \afun \app \aterm_1 \cdots \aterm_n \)
is \( \afun \).
We say \( \bterm \) is a \defTerm{subterm} of \( \aterm \)
(written \( \aterm \subtermR \bterm \)) if either
(a) \( \aterm = \bterm \), or
(b) \( \aterm = \aterm' \app \aterm'' \) and \( \aterm' \subtermR
\bterm \) or \( \aterm'' \subtermR \bterm \).
It is a \defTerm{proper subterm} of \( \aterm \) if \( \aterm \neq \bterm \).
For a term \( s \), \( \Pos(s) \) is the set of \emph{positions} in \( s \):
\( \Pos(\avar) = \Pos(\afun) = \{ \emptyPos \} \) and \( \Pos(
s \app t) = \{ \emptyPos \} \cup \{ 1 \cdot u \mid u \in \Pos(s) \} \cup \{ 2 \cdot
u \mid u \in \Pos(t) \} \).
For \( p \in \Pos(s)\),
the subterm \( s|_p \) at position \(p\) is given by:
\( s|_\emptyPos = s \) and \( (s_1 \app s_2)|_{i \cdot p} = s_i|_p \).

In this paper, we require that for all \( \afun \in \signature \),
\( \lvl{\ar(\afun)} \leq 2 \), so w.l.o.g.,
\( \afun \hasType
  (\typeVec{\asort}_1 \arrtype \bsort_1)
  \arrtype \cdots \arrtype
  (\typeVec{\asort}_k \arrtype \bsort_k)
  \arrtype \csort_1 \arrtype \cdots \arrtype \csort_l \arrtype
  \asort
\).
Hence, in a fully applied term %
\( \afun \app \aterm_1 \ldots \aterm_k \app \bterm_1 \ldots \bterm_l \)
we say the \( \aterm_i \) are the arguments of type-1 and the \( \bterm_j \)
are the arguments of type-0 for \( \afun \).
We say a term \emph{has type order $k$} if its type has order $k$.
A \emph{substitution} \( \asub \) is a type-preserving map
from variables to terms
such that %
\({\{ \avar \in \var \mid \asub(\avar) \neq \avar \}}\)
is finite.
We %
extend \( \asub \) to %
terms as usual:
\( \avar \asub = \asub(\avar) \),
\( \afun \asub = \afun \), and
\( (\aterm \app \bterm) \asub = (\aterm \asub) \app (\bterm \asub) \).
A \defTerm{context} \( C \) is a term with a single occurrence of a variable \( \Box \);
the term \( C[s] \) is obtained by substituting \( \Box \) by \( s \).

Finally,
we have all the ingredients needed to define rewriting rules and the dynamics
they give rise to.
A \emph{rewrite rule} \( \ell \to r \) is a pair of terms of the same type such
that \( \ell \) has a form \( \afun \app \ell_1 \cdots \ell_m \) with
\( \afun \in \signature \), and \( \vars{\ell} \supseteq \vars{r} \).
It is \emph{left-linear} if no variable occurs more than once in \( \ell \).
A \emph{simply-typed term rewriting system} \( \trs \)
is a pair consisting of a signature \( \syntaxSig \) and
a set of rewrite rules \( \rules \) over \( \terms \).
In this paper, we require that all rules have \emph{base} type; that is,
for each rule \( \ell \to r \),
the type of \( \ell \) (so also the type of \( r \)) is a base type.
An STRS is \emph{innermost orthogonal} if all rules are left-linear,
and any two distinct rules \( \ell_1 \arrz r_1,\ell_2 \arrz r_2 \) are
non-overlapping:
there are no substitutions \( \gamma \) and \( \delta \) such that \( \ell_1\gamma = \ell_2\delta \).
A \emph{reducible expression} (redex) is a term of the form \( \ell \asub \) for
a rule \(\ell \rulesArrow r\) and substitution \( \asub \).
A term is in \defTerm{normal form} if none of its subterms is a redex.

The \textit{innermost rewrite relation} induced by \( \rules \) is defined as
follows:
\begin{itemize}
    \item \( \ell\gamma \arrzR r\gamma \), if \( \ell \arrz r \in \rules \) and
        \( \ell\gamma \) has no proper subterm that is a redex;
    \item \( \aterm \app \bterm \arrzR \cterm \app \bterm \),
        if \( \aterm \arrzR \cterm \) and
        \( \aterm \app \bterm \arrzR \aterm \app \cterm \),
        if \( \bterm \arrzR \cterm \).
\end{itemize}
We write \( \arrzT \) for the transitive closure of \( \arrzR \)
and \( \arrzTR \) for the transitive-reflexive closure.
An STRS \( \rules \) is \emph{innermost terminating} if no infinite rewrite
sequence \( \aterm \arrzR \bterm \arrzR \ldots \) exists.
It is \emph{innermost confluent} if \( \aterm \arrzT \bterm \) and \( \aterm \arrzT \cterm \)
implies that some \( v \) exists with \( \bterm \arrzTR v \) and \( \cterm \arrzTR v \).
It is well-known that innermost orthogonality implies innermost confluence~\cite{mayr:nipkow:98}.
In this paper, we will typically drop the ``innermost'' adjective and simply
refer to terminating/orthogonal/confluent STRSs.

\begin{example}\label{ex:add}\label{ex:times}\label{ex:rec}
Let us consider a simple example of STRS\@.
We start by defining the following signature \( \syntaxSig \).
The base types are \( \sortset = \{ \nat \} \) and \( \signature \) includes
\(
  \zero \hasType \nat,
  \suc \hasType \nat \arrtype \nat
\).
We add rules for basic operations such as
\( \add,\mult \hasType \nat \arrtype \nat \arrtype \nat \)
and a higher-order function
\( \funcProd \hasType (\nat \arrtype \nat) \arrtype \nat \arrtype \nat \arrtype \nat \).
Finally, the set of rules \( \rules \) is given by:
\[
\begin{array}{rclcrcl}
\add \app \zero \app y & \arrz & y & \quad &
\add \app (\suc \app x) \app y & \arrz & \suc \app (\add \app x \app y) \\
\mult \app \zero \app y & \arrz & \zero & \quad &
\mult \app (\suc \app x) \app y & \arrz & \add \app y \app (\mult \app x \app y)
  \\
\funcProd \app F \app \zero \app y & \arrz & y & \quad &
\funcProd \app F \app (\suc \app x) \app y & \arrz & \funcProd \app F \app x \app
  (\mult \app y \app (F \app x)) \\
\end{array}
\]
Notice that this system is left-linear and non-overlapping;
hence, it is innermost orthogonal, and therefore innermost confluent.

\end{example}

Hereafter,
we write \( \unary{n} \) for
the term
\( \suc \app (\suc \app (\dots \app \zero \dots)) \) with \( n \) \( \suc \)s.

\begin{remark}
In this paper, we have chosen to limit interest to second-order systems with base-type rules.
The restriction to base-type rules is not so significant --- since one may simply pad both sides of
the rules with fresh variables until they have base type --- but the order limitation is arguably
more restrictive.

We impose this restriction because our primary goal is to identify a convenient class of term
rewriting systems that characterizes \( \bffT \) --- \emph{not} to find the most general class of
term rewriting systems that may do so.
Indeed, we have already derived a variation of the results in this work for call-by-value systems
of unrestricted type order following the definitions of \cs{} interpretation in~\cite{kop:vale:23}
(see~\cite[Chapter 6]{val:24}).
Hence, we have every reason to believe that the type restriction could be dropped also in the
current innermost setting, provided the ``main'' function \( \funF \) that we will identify
in \cref{def:type-2-rw-computability} remains second-order, i.e., the oracle functions considered
are still first-order in such an envisioned extension.
We have not done so in this work because extending the definition of ``interpretation'' to higher-order innermost systems would add a lot of technical burden (as it also does for call-by-value reduction), and inhibit understanding of the methodology, for arguably little expressivity gain.

We also \emph{suspect} that the results will extend if lambda-abstraction and beta reduction are
permitted, but we have not confirmed this as it would require a larger change, primarily in the
construction of \cref{subsec:graphs}.  This is also not a major restriction, since abstractions in
the right-hand side of rules can be replaced by fresh function symbols.
\end{remark}

\subsection{Basic Feasible Functionals}\label{subsec:bff}

We assume familiarity with Turing Machines and basic notions of computability
and complexity theory~\cite{AroraBarak}. %
In this paper,
we consider \textit{deterministic multi-tape Turing Machines}.
Those are, conceptually, machines consisting of a finite set of states,
one or more (but a fixed number of) right-infinite \textit{tapes} divided into cells.
Each tape is equipped with a tape head that scans the symbols on the tape's cells and
may write on it.
The head can move to the left or right. A k-ary \textit{Oracle Turing Machine} (OTM)
is a deterministic multi-tape Turing Machine with at least \( 2k + 1 \) tapes:
one main tape for (input/output),
\( k \) designated \emph{query} tapes ($\mathtt{querytape}_i$ for $1 \leq i \leq k$)
and \( k \) designated \emph{answer} tapes ($\mathtt{answertape}_i$).
It also has \( k \) distinct \emph{query states} \( q_i \) and
\( k \) \emph{answer states} \( a_i \).

Let \( \bffDom = {\{0,1\}}^* \). A computation with a k-ary OTM \( \aOTM \) requires \( k \) fixed \textit{oracle functions}
\( \aTOFunc_1, \ldots, \aTOFunc_k : \bffDom \arrfunc \bffDom \).
We write \( \aOTM_{\vec{f}} \) to denote a run of \( \aOTM \) with these functions.
A run of \( \aOTM_{\vec{f}} \) on \( w \) starts with \( w \) written on the main tape.
It ends when the machine halts,
and yields the word that is written on the main tape as output.
As usual, we only consider machines that halt on all inputs.
The computation proceeds as usual for non-query states.
To query the value of \( f_i \) on \( w \),
the machine writes \( w \) on the corresponding query tape $\mathtt{tape}_{2*i}$
and enters the query state \( q_i \).
Then, \textit{in one step}, it transitions to the answer state \( a_i \) as follows:
\begin{enumerate}[label=(\alph*)]
\item the query value \( w \) written in $\mathtt{querytape}_i$, the query tape for \( \aTOFunc_i \), is read;
\item the contents of $\mathtt{answertape}_i$, are changed to \( \aTOFunc_i(w) \);
\item the query value \( w \) is erased from $\mathtt{querytape}_i$; and
\item the head of $\mathtt{answertape}_i$ is moved to its first symbol.
\end{enumerate}
The \defTerm{running time} of \( \aOTM_{\vec{f}} \) on \( w \)
is the number of steps used in the computation.

We say \( x \) is of type-0 whenever \( x \in \bffDom \).
A \defTerm{type-1 function} is a mapping in \( \bffDom \arrfunc \bffDom \).
A \defTerm{type-2 functional} of rank \( (k,l) \) is a mapping in
\( {(\bffDom \arrfunc \bffDom)}^k \arrfunc \bffDom^l \arrfunc \bffDom \).

\begin{definition}\label{def:type-2-computability}
We say an OTM %
\( \aOTM \)
\defName{computes} a type-2 functional \( \aTTFunc \) of rank \( (k,l) \)
iff for all
type-1 functions
\( \aTOFunc_1, \dots, \aTOFunc_k\)
and
\( x_1,\dots,x_l \in \bffDom \),
whenever \( \aOTM_{\aTOFunc_1,\dots,\aTOFunc_k} \) is started with
\( x_1, \dots,x_l \) written on its main tape (separated by blanks),
it halts with
\( \aTTFunc(\aTOFunc_1,\dots,\aTOFunc_k,x_1,\dots,x_l) \)
written on its main tape.
\end{definition}

\begin{definition}\label{def:poly-expr}
    Let
    \( {\{ F_1, \dots, F_k \}} \)
    be a set of \emph{type-1 variables} and
    \( {\{ x_1, \dots, x_l \}} \)
    a set of \emph{type-0 variables}.
    The set
    \( \polySet{F_1, \dots, F_k; x_1, \dots, x_l} \)
    of \defName{second-order polynomials} over \( \Nat \)
    with indeterminates \( F_1, \dots, F_k, x_1, \dots, x_l \)
    is generated by: %
    \[ P, Q \coloneq n \mid x \mid P + Q \mid P * Q \mid F(Q) \]
    where
    \( n \in \Nat \),
    \( x \in \{ x_1 , \dots, x_l \} \),
    and \( F \in \{ F_1, \dots, F_k \} \).
\end{definition}
Notice that such polynomial expressions can be naturally viewed as type-2 functionals, e.g.,
\( {P(F,x)} = {3 * F(x) + x} \) is a type-2 functional of rank \( (1,1) \).

Given \( w \in \bffDom \),
we write \( |w| \) for its length and define the length
\( |f| \) of \( f : \bffDom \arrfunc \bffDom \) as
\( |f| = \fatlambda n.
\max\limits
_{{|y|} \leq n} |f(y)|
\).
This allows us to define \( \bffT \) as the class of functionals computable by
OTMs with running time bounded by a second-order polynomial:

\begin{definition}\label{def:bff-class}
A type-2 functional \( \aTTFunc \) is in \( \bffT \)
iff
there exist an OTM \( \aOTM \) and a second-order polynomial \( P \)
such that
\( M \) computes \( \aTTFunc \) and for all \( \vec{\aTOFunc} \) and \( \vec{x} \):
the running time of \( \aOTM_{\aTOFunc_1,\dots, \aTOFunc_k} \) on \( x_1,\dots,x_l \)
is at most \( P(|\aTOFunc_1|,\dots, |\aTOFunc_k|,|x_1|,\dots,|x_l|) \).
\end{definition}

\section{\cstitle{} Interpretations}\label{sec:cs-int}

In this section
we define \cs{} interpretations for the syntax of
types and terms in the STRS format.
The interpretations we develop differ from those
in~\cite{kop:vale:23} in two key aspects:
here we limit ourselves to second-order systems and
the rewriting strategy is innermost instead of call-by-value.
As a consequence,
the notion of \textit{cost} here is explicitly expressed as a function
\( \totalcost{\cdot} \) that inductively computes the total cost of reduction.
Furthermore,
these changes require that we prove a higher-order version of the innermost
compatibility theorem.

\subsection{The Interpretation of Types and Terms}\label{subsec:interpretationdef}

In \cs{} interpretations, each base-type term is mapped to two different components:
a \emph{size} and a \emph{cost}.  Intuitively, the size component imposes a restriction
on the shape of the normal form of a term; for example, by limiting the total number of
function symbols in the normal form, or a weighted total where some symbols are weighed
more heavily than others.
Meanwhile, the cost component is an integer that bounds the number of steps needed to
reach a normal form.
By separating these two notions, we can obtain a more fine-grained complexity bound
compared to the classical interpretation methods that map to for instance natural
numbers~\cite{DBLP:journals/iandc/BaillotL16,DBLP:journals/jfp/BonfanteCMT01,DBLP:journals/lmcs/HainryP20}.
For non-base type terms, the cost and size interpretations are represented by
\emph{functions}.

In this paper both the cost and size components must be elements of quasi-ordered sets,
which we define formally as follows.

\paragraph{Size interpretation}
For sets \( A \) and \( B \),
we write \( A \arrfunc B \) for the set of functions from \( A \) to \( B \).
A \emph{quasi-ordered set} \( (A,\sizeGe) \) consists of a nonempty set \( A \)
and a reflexive and transitive relation \(\sizeGe \) on \( A \).
For quasi-ordered sets \( (A_1,\sizeGe_1) \) and \( (A_2,\sizeGe_2) \),
we write \( A_1 \arrfuncwm A_2 \) for the set of functions
\( f \in A_1 \arrfunc A_2 \) such that \( f(x) \sizeGe_2 f(y) \) whenever
\( x \sizeGe_1 y \), i.e.,
\(A_1 \arrfuncwm A_2 \) is the space of \emph{weakly monotonic} functions:
functions that preserve the quasi-ordering.

For every \( \asort \in \sortset \), let a quasi-ordered set
\( (\sizeInt{\asort},\sizeGe_\asort) \) be given.
We extend this to \( \simpletypeset \) by defining
\( \sizeInt{\atype \arrtype \btype} =
(\sizeInt{\atype} \arrfuncwm \sizeInt{\btype}, \sizeGe_{\atype \arrtype \btype})\)
where \( f \sizeGe_{\atype \arrtype \btype} g \)
iff \( f(x) \sizeGe_\btype g(x) \) for all \( x \in \sizeInt{\atype} \).
So \( \sizeInt{\atype \arrtype \btype} \) is a set of weakly monotonic functionals,
using pointwise comparison.

We assume given a fixed \emph{size interpretation function} \( \funcinterpretsize{} \),
which maps each \( \afun \in \signature \) to some \( \funcinterpretsize{\afun} \in
\sizeInt{\ar(\afun)} \).
If we are additionally given a valuation \( \alpha \) that maps each
\( x \in \var_\atype \) to \( \sizeInt{\atype} \), we can map each term
\( s \hasType \atype \) to an element of \( \sizeInt{\atype} \)
naturally as follows:
(a) \( \sizeinterpret{\avar}_\alpha = \alpha(\avar) \);
(b) \( \sizeinterpret{\afun}_\alpha = \funcinterpretsize{\afun} \);
(c) \( \sizeinterpret{s \app t}_\alpha = \sizeinterpret{s}_\alpha(\sizeinterpret{t}_\alpha) \).

(Note that the superscript ${}^\size$ always indicates \emph{size}, while we will use
${}^\cost$ for \emph{cost}.)

\paragraph{Cost interpretation}
For every type \( \atype \) with \( \lvl{\atype} \leq 2 \), we define
\( \costInt{\atype} \) as follows:
\begin{enumerate}[label=(\alph*)]
\item \( \costInt{\bsort} = \Nat \) for \( \bsort \in \sortset \);
\item \( \costInt{\asort \arrtype \btype} = \sizeInt{\asort} \arrfuncwm
  \costInt{\btype} \) for \( \asort \in \sortset \), and
\item \( \costInt{\atype \arrtype \btype} = \costInt{\atype} \arrfuncwm
  \sizeInt{\atype} \arrfuncwm \costInt{\btype} \) if \( \lvl{\atype} = 1 \).
\end{enumerate}
For the weak monotonicity, we again use pointwise comparison: for \( f,g \in
\costInt{\atype \arrtype \btype} \) we say \( f \costGe g \) if \( f(x) \costGe g(x) \)
for every \( x \in \costInt{\atype} \).

We assume given a fixed \emph{cost interpretation function}
\( \funcinterpretcost{} \), which maps each \( \afun \in \signature \) to some
\( \funcinterpretcost{\afun} \in \costInt{\ar(\afun)} \);
and for each type \( \atype \) with \( \lvl{\atype} \leq 1 \), we assume given
\defTerm{valuations} \( \alpha : \var_\atype \arrfunc \sizeInt{\atype} \)
and \( \zeta : \var_\atype \arrfunc \costInt{\atype} \).
We then define, for terms of type order $\leq 1$:
\[
\begin{array}{rcl}
\costinterpret{\avar \app s_1 \cdots s_n}_{\alpha,\zeta} & = &
  \zeta(\avar)(\sizeinterpret{s_1}_\alpha,\dots,\sizeinterpret{s_n}_\alpha) \\
\costinterpret{\afun \app s_1 \cdots s_k \app t_1 \cdots t_n}_{\alpha,\zeta} & = &
  \funcinterpretcost{\afun}(
    \costinterpret{s_1}_{\alpha,\zeta},\sizeinterpret{s_1}_\alpha, \ldots,
    \costinterpret{s_k}_{\alpha,\zeta},\sizeinterpret{s_k}_\alpha,
    \sizeinterpret{t_1}_\alpha,\ldots,
    \sizeinterpret{t_n}_\alpha
  ) \\
\end{array}
\]
This is well-defined under our assumptions that all variables have a type of order
\( 0 \) or \( 1 \) (so the arguments to a variable have type order $0$), and that we
can write
\( \afun \hasType
  (\typeVec{\asort_1} \arrtype \bsort_1)
  \arrtype
  \cdots
  \arrtype
  (\typeVec{\asort_k} \arrtype \bsort_k)
  \arrtype
  \csort_1 \arrtype \cdots \arrtype \csort_l \arrtype \asort
\) for each function symbol \( \afun \).
Observe also that \( \funcinterpretcost{\afun} \), when fully applied, maps
to a natural number.  As the left- and right-hand sides of rules have base type,
this definition assigns a natural number as the cost interpretation for both sides of
each rule.

Intuitively, for a ground base-type term \( \afun\ s_1 \cdots s_k\ t_1 \cdots t_n \),
\( \costinterpret{\afun \app s_1 \cdots s_k \app t_1 \cdots t_n}_{\alpha,\zeta} \)
expresses the cost of the reduction to normal form \emph{not including} the cost of
first normalising each $s_i$ and $t_i$.  To compute the full reduction cost, we let

\[ \totalcost{s}_{\alpha, \zeta} = \sum \{ \costinterpret{t}_{\alpha,\zeta} \mid
s \subtermR t \text{ and \( t \) is a non-variable term of base type}\} \]

In addition, in our proofs it is very convenient to have a second cost measure:
\[
  \totalcostprime{s}_{\alpha, \zeta} =
  \sum \{
    \costinterpret{t}_{\alpha,\zeta} \mid s \subtermR t
    \text{ and } t
    \text{ is a non-variable base-type term not in normal form}
  \}
\]
The difference between \( \mathsf{cost} \) and \( \mathsf{cost^{\star}} \) is that subterms
that are already in normal form may still contribute a non-zero part to the
\( \mathsf{cost} \) measure of a term, but not to \( \mathsf{cost^{\star}} \).

\paragraph{Compatibility}

A \defTerm{\cs{} interpretation} \( \aDom \) for a second order signature
\( \syntaxSig = (\sortset, \signature, \ar) \)
is a choice of a quasi-ordered set \( \sizeInt{\asort} \),
for each \( \asort \in \sortset \),
along with cost- and size-interpretation functions
\( \funcinterpretcost{} \) and \( \funcinterpretsize{} \) defined as above.
Let \( \trs \) be an STRS over \( \syntaxSig \).

\begin{definition}\label{def:compatible}
We say \( \trs \) is \defTerm{compatible} with a \cs{} interpretation
if for any valuations \( \alpha \) and \( \zeta \), we have
{(a)} \( \costinterpret{\ell}_{\alpha, \zeta} > \totalcost{r}_{\alpha,\zeta} \)
and
{(b)} \( \sizeinterpret{\ell}_{\alpha} \sizeGe \sizeinterpret{r}_{\alpha} \),
for all rules \( \ell \arrz r \) in \( \rules \).
In this case we say such \cs{} interpretation \defTerm{orients} all rules
in \( \rules \).
\end{definition}
As in all notions of interpretation where the underlying domain is that of the natural numbers, the compatibility check is, in general, bound to be undecidable.
This is the case even if sound but complete techniques exist.
In any case, a study about the possibility of automating our technique is outside the scope of this paper and is thus left to future work.

To avoid heavy notation, we will often omit the valuations \( \alpha \) and \( \zeta \)
from subscripts when quantifying over them, and keep the quantification implicit.
For example, instead of writing
\[
\forall \alpha,\zeta.\
  \costinterpret{\funcProd \app F \app (\suc \app x) \app y}_{\alpha, \zeta} >
  \totalcost{\funcProd \app F \app x \app (\mult \app y \app (F \app x))}_{\alpha,\zeta}
\]
we will use:
\[
  \costinterpret{\funcProd \app F \app (\suc \app x) \app y} >
  \totalcost{\funcProd \app F \app x \app (\mult \app y \app (F \app x))}
\]
When we do this, we will also leave \( \alpha \) and \( \zeta \) out of concreate
interpretations; for example, instead of
\[
  \costinterpret{\funcProd \app F \app x \app y}_{\alpha, \zeta} =
  3 * \alpha(x) * \alpha(y) * {\max(\alpha(F)(\alpha(x)),1)}^{\alpha(x)+1} + \alpha(x) * \zeta(F)(\alpha(x)) + 2 * \alpha(x) + 1
\]
we will use:
\[
  \costinterpret{\funcProd \app F \app x \app y}_{\alpha, \zeta} =
  3 * x * y * {\max(F^\size(x),1)}^{x+1} + x * F^\cost(x) + 2 * x + 1
\]
Where \( \alpha(x) \) is simply denoted as \( x \) for type-0 variables, and as
\( x^\size \) for type-1 variables; \( \zeta(x) \) is always denoted as
\( x^\cost \) (note that, when considering the cost interpretation of rules,
\( \zeta(x) \) only occurs for type-1 variables).

\begin{example}\label{ex:addinterpret}
  Let us consider the signature of the STRS given in \cref{ex:add}.
  We can interpret the base type \( \nat \) as the natural quasi-order set
  \( \sizeInt{\nat} = (\Nat,\geq) \).
  We need to give interpretations for the constructors of \( \nat \).
  Let us start by size.
  \begin{align*}
    \funcinterpretsize{\zero} &= 0 & \funcinterpretsize{\suc} &= \fatlambda x.x+1
  \end{align*}
  This gives us \( \sizeinterpret{\unary{n}} = n \) for all \( n \in \Nat \).
  So intuitively our interpretation is counting how many \( \suc \) symbols there
  are in the unary representation of a natural number, and setting this as its size.
  For the cost measure we do as follows.
  \begin{align*}
    \funcinterpretcost{\zero} &= 0 &  \funcinterpretcost{\suc} &= \fatlambda x.0
  \end{align*}
  This gives us \( \costinterpret{\unary{n}} = \totalcost{n} = 0 \),
  which makes sense since the base type \( \nat \)
  is intended to represent data values,
  from which no computation can be started.
  Here we see a first contrast with the classical interpretation method.
  By splitting cost and size we can faithfully encode
  complexity information into the interpretation.

  Next, we give size and cost interpretation functions for \( \add \) and \( \mult \).
  So we let \( \funcinterpretsize{\add} = \fatlambda xy.x+y \) and
  \( \funcinterpretsize{\mult} = \fatlambda xy.x*y \); then indeed
  \( \sizeinterpret{\ell} \geq \sizeinterpret{r} \) for the first four rules of
  \cref{ex:add}
  (e.g., \( \sizeinterpret{\mult \app (\suc \app x) \app y}
  = (x+1) * y \geq y + (x * y) = \sizeinterpret{\add \app y \app (\mult \app x
  \app y)} \)).
  For the cost interpretation functions, note that both \( \add \) and \( \mult \)
  are type-1, so we must choose functions that take the \emph{sizes} of their
  arguments as input, and return a cost.
  Here, we choose \( \funcinterpretcost{\add} = \fatlambda x y.x + 1 \) and
  \( \funcinterpretcost{\mult} = \fatlambda x y.x * y + 2 * x + 1 \).
  To see why this makes sense, consider arguments in \emph{normal form}:
  e.g., reducing \( \add\ (\suc^n\ \zero)\ (\suc^m\ \zero) \) to normal form takes
  \( n + 1 \) steps.  We indeed have the required inequality
  \( \costinterpret{\ell} > \totalcost{r} \) for the first four rules; for
  example:
  \[
  \begin{array}{rclcl}
  \costinterpret{\mult \app (\suc \app x) \app y} & = &
  \sizeinterpret{\suc \app x} * \sizeinterpret{y} + 2 * \sizeinterpret{\suc \app x} + 1 \\
  & = & (x + 1) * y + 2 * (x + 1) + 1 & = & x * y + 2 * x + y + 3 \\
  & > \\
  \totalcost{\add \app y \app (\mult \app x \app y)} & = &
  \costinterpret{\add \app y \app (\mult \app x \app y)} + \costinterpret{\mult \app x \app y} \\
  & = & (y+1) \quad + \quad (x * y + 2 * x + 1) & = & x * y + x * x + y + 2 \\
  \end{array}
  \]

  Regarding \( \funcProd \): note that the cost of evaluating a term
  \( \funcProd\ F\ n\ m \) depends not only on the sizes of the arguments \( F \),
  \( n\) and \( m \), but also on the \emph{behaviour} of \( F \): if \( F \) is a
  function that computes its result in linear time in the size of the input, then
  \( \funcProd\ F\ n\ m \) will reach a normal form much faster than if \( F \)
  operates in exponential time.  The computation time is also affected by the size
  of the output that \( \funcProd \) produces, as this is fed into \( \mult \).
  This is why \( \funcinterpretcost{\funcProd} \) takes both a cost function and a
  size function as argument.
  We can orient both rules by choosing
  \( \funcinterpretsize{\funcProd} = \fatlambda F x y.y * {\max(F(x),1)}^x \)
  and
  \( \funcinterpretcost{\funcProd} =
  \fatlambda F^\cost F^\size x y.3 * x * y * {\max(F^\size(x),1)}^{x+1} + x * F^\cost(x) + 2 * x + 1 \).

  Notice that \( \funcinterpretcost{\funcProd} \) is not polynomial,
  but that is allowed in the general case.
\end{example}

\begin{example}\label{ex:costversuscostprime}
Consider an STRS with base types \( \sortset = \{ \nat \} \), function symbols
\(
  \zero \hasType \nat,
  \suc \hasType \nat \arrtype \nat,
  \minus \hasType \nat \arrtype \nat \arrtype \nat
\), and rules:
\[
\begin{array}{rclcrcl}
\minus\ (\suc\ x)\ (\suc\ y) & \to & \minus\ x\ y & \quad\quad &
\minus\ x\ \zero & \to & x \\
\end{array}
\]
If we choose \( \funcinterpretsize{\zero} \), \( \funcinterpretsize{\suc} \),
\( \funcinterpretcost{\zero} \) and \( \funcinterpretcost{\suc} \) like in
\cref{ex:addinterpret}, and \( \funcinterpretcost{\minus} = \fatlambda xy.y \),
then we have \( \totalcost{\minus\ \zero\ (\suc\ \zero)} = 1 \), while
\( \totalcostprime{\minus\ \zero\ (\suc\ \zero)} = 0 \).
This is because \( \minus\ \zero\ (\suc\ \zero) \) is in normal form, even though
it is not fully built from constructors.
\end{example}

\subsection{The Compatibility Theorem}

We now prove the Innermost Compatibility Theorem for STRSs.
The proof is analogous to that in~\cite{kop:vale:23} with some adaptations
to the definitions of \cref{subsec:interpretationdef} which
are possible here since our terms are applicative and second-order only; we also make some adaptations to the evaluation strategy
being innermost rather than call-by-value, in particular through the use of the \( \totalcostprime{\cdot} \) function.

\begin{restatable}[Innermost Compatibility]{theorem}{compat}\label{thm:compatibility}
  Suppose \( \trs \) is an STRS compatible with a \cs{} interpretation \( \aDom \)
  (following \cref{def:compatible}).
  Then for any valuations \( \alpha \) and \( \zeta \)
  we have
  \( \totalcostprime{s}_{\alpha,\zeta} > \totalcostprime{t}_{\alpha,\zeta} \) and
  \( \sizeinterpret{s}_\alpha \sizeGe \sizeinterpret{t}_\alpha \)
  whenever \( s \arrzR t \).
\end{restatable}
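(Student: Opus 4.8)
The plan is to proceed by induction on the derivation of the innermost step \( s \arrzR t \), after first recording two standard auxiliary facts. The first is a pair of substitution lemmas: for every substitution \( \gamma \), \( \sizeinterpret{s\gamma}_\alpha = \sizeinterpret{s}_{\alpha^\gamma} \) and \( \costinterpret{s\gamma}_{\alpha,\zeta} = \costinterpret{s}_{\alpha^\gamma,\zeta^\gamma} \), where \( \alpha^\gamma(\avar) = \sizeinterpret{\gamma(\avar)}_\alpha \) and \( \zeta^\gamma(\avar) = \costinterpret{\gamma(\avar)}_{\alpha,\zeta} \). The cost version is the delicate one: the clause for a variable head \( \avar\app s_1\cdots s_n \) feeds only the \emph{sizes} \( \sizeinterpret{s_i}_\alpha \) into \( \zeta(\avar) \), so the lemma hinges on the identity \( \costinterpret{\gamma(\avar)\app s_1\cdots s_n}_{\alpha,\zeta} = \costinterpret{\gamma(\avar)}_{\alpha,\zeta}(\sizeinterpret{s_1}_\alpha,\dots,\sizeinterpret{s_n}_\alpha) \). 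The second fact is weak monotonicity: every \( \funcinterpretsize{\afun} \), every \( \funcinterpretcost{\afun} \), and every \( \zeta(\avar) \) lives in a \( \arrfuncwm \)-space, hence preserves \( \sizeGe \) (respectively the cost ordering) in each argument.

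The size statement \( \sizeinterpret{s}_\alpha \sizeGe \sizeinterpret{t}_\alpha \) I would dispatch by a separate, easy induction on \( \arrzR \). At the root, \( s = \ell\gamma \), \( t = r\gamma \), and the substitution lemma together with compatibility clause (b) gives \( \sizeinterpret{\ell\gamma}_\alpha = \sizeinterpret{\ell}_{\alpha^\gamma} \sizeGe \sizeinterpret{r}_{\alpha^\gamma} = \sizeinterpret{r\gamma}_\alpha \); the two congruence cases follow at once from weak monotonicity of \( \sizeinterpret{\cdot} \) in applicative position.

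For the cost statement I would run a single induction on \( \arrzR \) proving \emph{simultaneously} that (i) if \( s \) has functional (order-\( 1 \)) type then \( \costinterpret{s}_{\alpha,\zeta} \geq \costinterpret{t}_{\alpha,\zeta} \) pointwise, and (ii) \( \totalcostprime{s}_{\alpha,\zeta} > \totalcostprime{t}_{\alpha,\zeta} \). The root case of (ii) is where compatibility is consumed: since the step is innermost, every proper subterm of \( \ell\gamma \) is a normal form, so \( \totalcostprime{\ell\gamma}_{\alpha,\zeta} = \costinterpret{\ell\gamma}_{\alpha,\zeta} \) (only the whole redex is a non-normal base subterm). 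The substitution lemma and clause (a) give \( \costinterpret{\ell\gamma}_{\alpha,\zeta} = \costinterpret{\ell}_{\alpha^\gamma,\zeta^\gamma} > \totalcost{r}_{\alpha^\gamma,\zeta^\gamma} \), and because each \( \gamma(\avar) \) is itself a normal form, the non-normal base subterms of \( r\gamma \) all sit at skeleton positions of \( r \); as costs are natural numbers, dropping the normal-form summands only decreases the total, so \( \totalcostprime{r\gamma}_{\alpha,\zeta} \leq \totalcost{r}_{\alpha^\gamma,\zeta^\gamma} \). Chaining these inequalities yields \( \totalcostprime{\ell\gamma} > \totalcostprime{r\gamma} \).

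The congruence cases are the main obstacle, and this is exactly where the separation of cost from size pays off. Decomposing \( \totalcostprime{\cdot} \) over the application, the summands from the untouched side cancel, the strict drop is supplied by the induction hypothesis (ii) on the reduced side, and what remains to check is that the cost of each base-type subterm \( v \) that strictly \emph{encloses} the contracted redex does not increase. I would settle this by cases on how the redex sits in \( v \): if \( v \) has a variable head, or if the redex lies in a type-\( 0 \) argument of a function head, then \( \costinterpret{v} \) depends on the reduced piece only through its size interpretation, which does not increase, so weak monotonicity of \( \zeta(\avar) \) (resp.\ \( \funcinterpretcost{\afun} \)) closes the case. The genuinely higher-order case is when the redex lies inside a type-\( 1 \) argument \( s_i \), where \( \costinterpret{v} \) really does depend on \( \costinterpret{s_i} \); but then the redex must be \emph{strictly} below the head of \( s_i \) (a redex has base type whereas \( s_i \) is functional), so induction hypothesis (i) gives \( \costinterpret{s_i} \geq \costinterpret{s_i'} \) pointwise and weak monotonicity of \( \funcinterpretcost{\afun} \) applies again. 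The point to get exactly right is that a base-type subterm can indeed raise its own cost under a projection-like rule \( \ell \arrz \avar \), yet it influences any enclosing cost only through its size; this is why (i) is asserted for functional types only, while (ii) absorbs the base-type behaviour through \( \totalcostprime{\cdot} \).
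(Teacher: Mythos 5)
Your proposal is correct and takes essentially the same route as the paper's proof: substitution lemmas for size and cost, then induction on \( s \arrzR t \), establishing the size inequality first and feeding it into the cost argument, with the key auxiliary fact being pointwise cost non-increase for internally reduced type-1 terms. Your simultaneous claim (i) is precisely the paper's \cref{lemma:cost-arr-ty-costge} folded into the main induction, and your positional sub-sum argument showing \( \totalcostprime{r\gamma}_{\alpha,\zeta} \leq \totalcost{r}_{\alpha^\asub,\zeta^\asub} \) is a root-case specialization of the paper's \cref{lemma:cost-gtecost-prime}.
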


Hence, if we find a suitable \cs{} interpretation, which orients all the rules,
then we have a \emph{bound} on the derivation height of any given term: that is, the
number of reduction steps we need to take to reduce any given term to normal form.
For example, suppose $s_0 \arrzR s_1 \arrzR \dots \arrzR s_n$.
Let \( \alpha \) and \( \zeta \) map all variables either to \( 0 \) or to a constant
function \( \fatlambda \vec{x}.0 \).  Then \( \totalcostprime{s_0}_{\alpha,\zeta} \geq
\totalcostprime{s_1}_{\alpha,\zeta} + 1 \geq \dots \geq
\totalcostprime{s_n}_{\alpha,\zeta} + n \) --- and therefore, necessarily
\( \totalcostprime{s_0}_{\alpha,\zeta} \geq n \).
This holds whether \( s_0 \) has base type or not, since \( \totalcostprime{\cdot} \)
always maps to a natural number.
Observing that always \( \totalcost{s}_{\alpha,\zeta} \geq \totalcostprime{s}_{\alpha,
\zeta} \):

\begin{corollary}\label{cor:compatibility}
Suppose \( \trs \) is an STRS compatible with a \cs{} interpretation \( \aDom \),
and let \( s \) be a term with derivation height \( n \).
Then for any valuations \( \alpha \) and \( \zeta \),
\( \totalcost{s}_{\alpha,\zeta} \geq n \).
\end{corollary}

\medskip
In order to prove \cref{thm:compatibility},
we first stablish some useful lemmas.
Recall that in this paper,
all rules are of base type, i.e., they are fully applied.
Since reduction is innermost,
a rule may only be fired if the matching substitution
(i.e., the substitution \( \asub \) on the base case \( \ell \asub \arrz r \asub \)),
maps all variables to irreducible terms.
So in the lemmas below, without loss of generality,
we restrict ourselves to this type of substitutions and notice that
\( \totalcostprime{\avar \asub} = 0 \) for any variable \( \avar \).

\begin{lemma}\label{lem:casethingy}
For all terms \( s \app t \) with \( t \) of base type:
\( \sizeinterpret{s \app t}_\alpha = \sizeinterpret{s}_\alpha(\sizeinterpret{t}_\alpha) \) and
\( \costinterpret{s \app t}_{\alpha,\zeta} = \costinterpret{s}_{\alpha,\zeta}(\sizeinterpret{t}_{\alpha}) \)
for all \( \alpha,\zeta \).
\end{lemma}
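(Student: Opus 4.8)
The plan is to prove the two identities essentially by unfolding definitions, treating the size component directly and the cost component by a case analysis on the head symbol of \( s \). Throughout I take for granted (as the statement implicitly requires, since both \( \costinterpret{s \app t}_{\alpha,\zeta} \) and \( \costinterpret{s}_{\alpha,\zeta} \) must be defined) that \( s \app t \) is of type order at most \( 1 \) and all its variables are of order at most \( 1 \); the same then holds for its subterm \( s \).

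For the size component there is nothing to do beyond reading off the definition: the claimed equation \( \sizeinterpret{s \app t}_\alpha = \sizeinterpret{s}_\alpha(\sizeinterpret{t}_\alpha) \) is exactly clause (c) of the definition of \( \sizeinterpret{\cdot} \). In particular the base-type hypothesis on \( t \) is not needed here and I would remark on this.

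For the cost component, first I would fix the reading of the multi-argument notation in the defining clauses: an expression such as \( h(a_1,\dots,a_m) \) denotes iterated application \( h(a_1)\cdots(a_m) \) in the curried function spaces built from \( \arrfuncwm \), so that \( h(a_1,\dots,a_{m-1},a_m) = (h(a_1,\dots,a_{m-1}))(a_m) \); this reassociation is precisely what turns ``peeling off the last argument'' into semantic application. Then I split on the head of \( s \). If the head is a variable, \( s = \avar \app s_1 \cdots s_n \) and \( s \app t = \avar \app s_1 \cdots s_n \app t \); since \( t \) has base type it is a legal extra argument, and the variable clause gives \( \costinterpret{s \app t}_{\alpha,\zeta} = \zeta(\avar)(\sizeinterpret{s_1}_\alpha,\dots,\sizeinterpret{s_n}_\alpha,\sizeinterpret{t}_\alpha) \) while \( \costinterpret{s}_{\alpha,\zeta} = \zeta(\avar)(\sizeinterpret{s_1}_\alpha,\dots,\sizeinterpret{s_n}_\alpha) \); the reassociation above identifies the former with the latter applied to \( \sizeinterpret{t}_\alpha \). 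If the head is a function symbol \( \afun \), I would argue that because every type-1 argument slot of \( \afun \) precedes every base-type slot, and \( t \) must fill a base-type slot, \( s \) already supplies all \( k \) type-1 arguments; writing \( s = \afun \app s_1 \cdots s_k \app t_1 \cdots t_n \), the term \( t \) enters as a further type-0 argument, and unfolding the function-symbol clause for \( s \app t \) and for \( s \) shows the two values differ only by the trailing size entry \( \sizeinterpret{t}_\alpha \), whence again \( \costinterpret{s \app t}_{\alpha,\zeta} = \costinterpret{s}_{\alpha,\zeta}(\sizeinterpret{t}_\alpha) \) by currying of \( \funcinterpretcost{\afun} \).

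The one step deserving genuine care, and the place where the base-type hypothesis does real work, is the function-symbol case: I must justify that a base-type \( t \) necessarily occupies a type-0 slot, so it contributes \emph{only} the size component \( \sizeinterpret{t}_\alpha \) and not also a cost component. This matches the shape \( \costInt{\asort \arrtype \btype} = \sizeInt{\asort} \arrfuncwm \costInt{\btype} \) for base \( \asort \), which is exactly what makes the single size-argument application on the right-hand side type-correct. Were \( t \) of order \( 1 \) instead, the relevant cost type would be \( \costInt{\atype} \arrfuncwm \sizeInt{\atype} \arrfuncwm \costInt{\btype} \) with \( \lvl{\atype} = 1 \), so the factoring would additionally consume \( \costinterpret{t}_{\alpha,\zeta} \) and the stated equation would fail. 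Thus the base-type assumption is essential, and making this slot-counting argument explicit is the main (and only) obstacle; the rest is routine unfolding.
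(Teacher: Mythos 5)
Your proof is correct and follows essentially the same route as the paper's: a case analysis on whether the head of \( s \) is a variable or a function symbol, with the base-type hypothesis on \( t \) used exactly as the paper uses it, namely to conclude that all higher-type arguments of \( \afun \) are already supplied so that \( t \) fills a type-0 slot and contributes only \( \sizeinterpret{t}_\alpha \). Your added remarks on currying the multi-argument notation and on the size part not needing the hypothesis are sound elaborations of what the paper leaves implicit.
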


\begin{proof}
The former statement holds by definition.
The latter holds by a straightforward case analysis on the form of \( s \):
it holds both if \( s = \avar \app s_1 \cdots s_n \)
and
\( s = \afun \app s_1 \cdots s_k \app t_1 \cdots t_n \).
For the \( \afun \) case, recall that we fixed
$\afun \hasType (\typeVec{\asort_1} \arrtype \bsort_1)
  \arrtype \cdots \arrtype (\typeVec{\asort_k} \arrtype \bsort_k) \arrtype
  \csort_1 \arrtype \cdots \arrtype \csort_l \arrtype \asort$
as the general type for function symbols \( \afun \), and since \( t \) has base
type, all higher-type arguments to \( \afun \) are necessarily already supplied.
Hence, \( \costinterpret{s}_{\alpha,\zeta} \) is indeed well-defined.
\end{proof}

Given a valuation \( \alpha \) and substitution \( \asub \),
we denote the \( \asub \)-extention of \( \alpha \)
by \( \alpha^\asub \) as the valuation defined by
\( \alpha^\asub(x) = \sizeinterpret{x\gamma}_\alpha \).
Let us start with some substitution lemmata.
\begin{lemma}\label{lemma:subst-size}
  Let \( \asub \) be a substitution mapping all variables to irreducible terms
  and \( \alpha \) be a valuation.
  Then, for any term \( \aterm \),
  \( \sizeinterpret{\aterm \asub}_\alpha = \sizeinterpret{\aterm}_{\alpha^\asub} \).
\end{lemma}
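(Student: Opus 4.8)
The plan is to prove the identity \( \sizeinterpret{\aterm \asub}_\alpha = \sizeinterpret{\aterm}_{\alpha^\asub} \) by structural induction on the term \( \aterm \), unfolding the three defining clauses of the size interpretation (variable, function symbol, application) in each case. The induction follows the grammar of terms directly, so there are two base cases and one inductive step.

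For the base cases, first suppose \( \aterm = \avar \) is a variable. Then \( \sizeinterpret{\avar \asub}_\alpha \) is precisely the value \( \alpha^\asub(\avar) \) by the very definition of the \( \asub \)-extension \( \alpha^\asub \), and this equals \( \sizeinterpret{\avar}_{\alpha^\asub} \) by clause (a) of the size interpretation; the two sides coincide definitionally. Next suppose \( \aterm = \afun \) is a function symbol. Since substitutions fix function symbols, \( \afun \asub = \afun \), so \( \sizeinterpret{\afun \asub}_\alpha = \sizeinterpret{\afun}_\alpha = \funcinterpretsize{\afun} \), and likewise \( \sizeinterpret{\afun}_{\alpha^\asub} = \funcinterpretsize{\afun} \) by clause (b), which is independent of the valuation. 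Both cases are immediate.

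For the inductive step, suppose \( \aterm = \aterm_1 \app \aterm_2 \) and assume the claim for the strictly smaller subterms \( \aterm_1 \) and \( \aterm_2 \). Using that substitution commutes with application, \( (\aterm_1 \app \aterm_2)\asub = (\aterm_1 \asub) \app (\aterm_2 \asub) \), and applying clause (c) of the size interpretation gives \( \sizeinterpret{(\aterm_1 \app \aterm_2)\asub}_\alpha = \sizeinterpret{\aterm_1 \asub}_\alpha(\sizeinterpret{\aterm_2 \asub}_\alpha) \). The induction hypothesis rewrites each factor, turning this into \( \sizeinterpret{\aterm_1}_{\alpha^\asub}(\sizeinterpret{\aterm_2}_{\alpha^\asub}) \), which is exactly \( \sizeinterpret{\aterm_1 \app \aterm_2}_{\alpha^\asub} \) by clause (c) again. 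This closes the induction.

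There is no genuine obstacle here: the argument is a routine compositionality induction, and the only point requiring care is recognizing that the variable case is settled purely by the definition of \( \alpha^\asub \), which is where all the real content resides. It is worth remarking that the hypothesis that \( \asub \) maps variables to irreducible terms plays no role whatsoever in this particular lemma — the identity holds for arbitrary substitutions — and the restriction is carried over only for uniformity with the surrounding development where irreducibility matters for the cost component.
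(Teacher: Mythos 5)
Your proof is correct and takes essentially the same approach as the paper's: a structural induction unfolding the three clauses of the size interpretation, with the variable case settled by the definition of \( \alpha^\asub \) and the application case by the induction hypothesis (the paper's proof omits the trivial function-symbol case, which you spell out). Your closing observation is also accurate: the irreducibility hypothesis on \( \asub \) is never used here and is carried along only because the surrounding cost lemmas rely on it (e.g.\ to get \( \totalcostprime{\avar\asub}_{\alpha,\zeta} = 0 \)).
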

\begin{proof}
  By induction on the structure of \( \aterm \).
  \begin{itemize}
    \item If \( \aterm \) is a variable,
    we have \( \sizeinterpret{\avar \asub}_\alpha =
      \alpha^\asub(\avar) = \sizeinterpret{\avar}_{\alpha^\asub} \).
    \item If \( \aterm = \bterm \app \cterm \) is an application,
    we have
    \begin{align*}
      \sizeinterpret{(\bterm \app \cterm) \asub} &=
        \sizeinterpret{\bterm \asub}_{\alpha}(\sizeinterpret{\cterm \asub}_{\alpha})\\
      \overset{IH}&{=} \sizeinterpret{\bterm}_{\alpha^\asub}(\sizeinterpret{\cterm}_{\alpha^\asub}) %
      = \sizeinterpret{\bterm \app \cterm}_{\alpha^\asub}
      \qedhere
    \end{align*}
  \end{itemize}

\end{proof}

We can now prove the \emph{size} part of \cref{thm:compatibility}
which is stated as the lemma below.

\begin{lemma}\label{lemma:size-compatibility}
Let \( \trs \) be an STRS compatible with a \cs{} interpretation following \cref{def:compatible},
and let \( \aterm,\bterm \) be terms of the same type such that \( \aterm \arrzR \bterm \).
Then \(  \sizeinterpret{\aterm}_{\alpha} \sizeGe \sizeinterpret{\bterm}_{\alpha} \) holds
for all \( \alpha \).
\end{lemma}

\begin{proof}
By induction on the form of \( \aterm \).

In the base case, we have \( \aterm \arrzR \bterm \) by \( \ell \asub \arrz r \asub \).
Then we combine the substitution lemma (\cref{lemma:subst-size}) with the compatibility
requirement for size, i.e., \( \sizeinterpret{\ell}_\alpha \sizeGe \sizeinterpret{r}_\alpha \), as follows:
  \begin{align*}
      \sizeinterpret{\ell \asub}_\alpha = \sizeinterpret{\ell}_{\alpha^\asub}
      \sizeGe \sizeinterpret{r}_{\alpha^\asub}
      = \sizeinterpret{r \asub}_\alpha
  \end{align*}

In the application case, \( \aterm = \aterm_1 \app \aterm_2 \) and either
\( \bterm = \bterm_1 \app \aterm_2 \) with \( \aterm_1 \arrzR \bterm_1 \) or
\( \bterm = \aterm_1 \app \bterm_2 \) with \( \aterm_2 \arrzR \bterm_2 \).
In the first case, \( \sizeinterpret{\aterm_1}_\alpha \sizeGe \sizeinterpret{\bterm_1}_\alpha \)
by the induction hypothesis, which by definition gives
\( \sizeinterpret{\aterm}_\alpha = \sizeinterpret{\aterm_1}_\alpha(\sizeinterpret{\aterm_2}_\alpha)
\sizeGe \sizeinterpret{\bterm_1}_\alpha(\sizeinterpret{\aterm_2}_\alpha) =
\sizeinterpret{\bterm}_\alpha \).
In the second case, \( \sizeinterpret{\aterm_2}_\alpha \sizeGe \sizeinterpret{\bterm_2}_\alpha \) by
the induction hypothesis and
\( \sizeinterpret{\aterm}_\alpha = \sizeinterpret{\aterm_1}_\alpha(\sizeinterpret{\aterm_2}_\alpha)
\sizeGe \sizeinterpret{\aterm_1}_\alpha(\sizeinterpret{\bterm_2}_\alpha) =
\sizeinterpret{\bterm}_\alpha \) because by definition,
\( \sizeinterpret{\aterm_1}_\alpha \) is a \emph{weakly monotonic} function.
\end{proof}

Let us move on to cost versions of substitution lemmata.
First,
notice that we cannot directly define a \(\asub \)-extention for cost valuations.
Indeed, \( \costinterpret{\cdot}_{\alpha,\zeta} \) also depends on
a size valuation \( \alpha \).
So given a size valuation \( \alpha \),
we write \( \zeta^\asub_\alpha \) to denote the valuation
\( \zeta^\asub_\alpha \) with
\( \zeta^\asub_\alpha(x) = \costinterpret{\asub(x)}_{\alpha,\zeta} \).

\begin{lemma}\label{lem:cost-subst-base}
  Given \cs{} valuations \( \alpha, \gamma \) and a term \( \aterm \)
  of type
  order at most \( 1 \).
  Then
  \( \costinterpret{\aterm \asub}_{\alpha,\zeta} =
    \costinterpret{\aterm}_{\alpha^\asub,\zeta^\asub_\alpha} \).
\end{lemma}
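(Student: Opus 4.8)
The plan is to proceed by induction on the structure of $\aterm$, distinguishing the two possible shapes of its head. Since $\aterm$ has order at most $1$ and all its variables have order at most $1$, it is either of the form $\avar \app s_1 \cdots s_n$ with $\avar \in \var$ of order $\le 1$ (so each $s_i$ has base type), or of the form $\afun \app s_1 \cdots s_k \app t_1 \cdots t_n$ with $\afun \in \signature$, all the type-$1$ arguments $s_1,\ldots,s_k$ present, and each $t_j$ of base type. Throughout I rely on the size substitution lemma (\cref{lemma:subst-size}), which gives $\sizeinterpret{u}_{\alpha^\asub} = \sizeinterpret{u\asub}_\alpha$ for every subterm $u$, and on the defining identity $\zeta^\asub(\avar) = \costinterpret{\avar\asub}_{\alpha,\zeta}$ coming from $\zeta^\asub = \costinterpret{\cdot}_{\alpha,\zeta} \circ \asub$.

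For the function-symbol case I would expand both sides with the defining clause of $\costinterpret{\cdot}$. On the left,
\[
\costinterpret{\aterm \asub}_{\alpha,\zeta}
= \funcinterpretcost{\afun}\bigl(
  \costinterpret{s_1 \asub}_{\alpha,\zeta}, \sizeinterpret{s_1 \asub}_\alpha, \ldots,
  \costinterpret{s_k \asub}_{\alpha,\zeta}, \sizeinterpret{s_k \asub}_\alpha,
  \sizeinterpret{t_1 \asub}_\alpha, \ldots, \sizeinterpret{t_n \asub}_\alpha
\bigr),
\]
while on the right the arguments to $\funcinterpretcost{\afun}$ are $\costinterpret{s_i}_{\alpha^\asub,\zeta^\asub}$, $\sizeinterpret{s_i}_{\alpha^\asub}$ and $\sizeinterpret{t_j}_{\alpha^\asub}$. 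The size arguments agree by \cref{lemma:subst-size}, and the cost arguments agree because each $s_i$ is a proper subterm of order $1$ whose variables all have order $\le 1$, so the induction hypothesis yields $\costinterpret{s_i}_{\alpha^\asub,\zeta^\asub} = \costinterpret{s_i \asub}_{\alpha,\zeta}$. Hence the two $\funcinterpretcost{\afun}$-applications coincide.

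The variable case is where I expect the real work to be. The difficulty is that after substitution the head $\avar\asub$ is in general a compound term, so the defining clause for variable-headed terms no longer matches $\aterm\asub = (\avar\asub) \app (s_1\asub) \cdots (s_n\asub)$ syntactically, and the cost interpretation cannot be unfolded by its defining equation. To get around this I would first record, by $n$-fold application of \cref{lem:casethingy} (each $s_i\asub$ having base type), the identity $\costinterpret{u \app r_1 \cdots r_n}_{\alpha,\zeta} = \costinterpret{u}_{\alpha,\zeta}(\sizeinterpret{r_1}_\alpha, \ldots, \sizeinterpret{r_n}_\alpha)$ for base-type $r_i$. Applying it with $u = \avar\asub$ and $r_i = s_i\asub$ gives $\costinterpret{\aterm\asub}_{\alpha,\zeta} = \costinterpret{\avar\asub}_{\alpha,\zeta}(\sizeinterpret{s_1\asub}_\alpha, \ldots, \sizeinterpret{s_n\asub}_\alpha)$. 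On the other side, the defining clause gives $\costinterpret{\aterm}_{\alpha^\asub,\zeta^\asub} = \zeta^\asub(\avar)(\sizeinterpret{s_1}_{\alpha^\asub}, \ldots, \sizeinterpret{s_n}_{\alpha^\asub})$. Now $\zeta^\asub(\avar) = \costinterpret{\avar\asub}_{\alpha,\zeta}$ by definition of $\zeta^\asub$, and $\sizeinterpret{s_i}_{\alpha^\asub} = \sizeinterpret{s_i\asub}_\alpha$ by \cref{lemma:subst-size}, so both sides are equal. The base cases $\aterm = \avar$ and $\aterm = \afun$ are just the instances $n = 0$ (resp. $k = n = 0$) of these computations.

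The only genuinely delicate point, and the one I would flag as the main obstacle, is precisely this mismatch in the variable case: because substitution can turn a variable head into an arbitrary spine, the cost interpretation cannot be read off from its syntactic definition, and one must instead route the decomposition through \cref{lem:casethingy}. Everything else is routine bookkeeping, matching size arguments via \cref{lemma:subst-size} and cost arguments of the type-$1$ subterms via the induction hypothesis.
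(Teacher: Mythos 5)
Your proposal is correct and takes essentially the same route as the paper's own proof: the same case split on the head symbol, with \cref{lem:casethingy} used to decompose the variable-headed case after substitution (the point you rightly flag as the only delicate step), \cref{lemma:subst-size} matching the size arguments, the identity \( \zeta^\asub(\avar) = \costinterpret{\avar\asub}_{\alpha,\zeta} \) handling the head, and the induction hypothesis matching the cost arguments of the type-1 subterms in the function-symbol case. The only difference is presentational: you attribute the size-argument step in the function case explicitly to \cref{lemma:subst-size}, where the paper folds it into its ``IH'' annotation.
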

\begin{proof} By induction on the size of \( \aterm \). We consider two cases:
  \begin{itemize}
    \item For the first case, we get \( \aterm = \avar \app \aterm_1 \ldots \aterm_n \),
    and
    \begin{itemize}
      \item If \( n = 0 \), we have
      \( \costinterpret{\avar \asub}_{\alpha,\zeta} = \zeta_\alpha^\asub(x) = \costinterpret{\avar}_{\alpha^\asub,\zeta_\alpha^\asub} \) by definition.
      \item If \( n > 0 \), we have
      \begin{align*}
        \costinterpret{(\avar \app \aterm_1 \ldots \aterm_n) \asub}_{\alpha,\zeta}
        &= \costinterpret{
          (\avar \asub) \app (\aterm_1 \asub) \ldots (\aterm_n \asub)
          }_{\alpha,\zeta}\\
        \overset{\text{\cref{lem:casethingy}}}&{=} \costinterpret{\avar \asub}_{\alpha,\zeta}(
            \sizeinterpret{\aterm_1 \asub}_{\alpha},
            \ldots,
            \sizeinterpret{\aterm_n \asub}_\alpha
        )\\
        \overset{\text{\cref{lemma:subst-size}}}&{=} \costinterpret{\avar \asub}_{\alpha,\zeta}(
            \sizeinterpret{\aterm_1}_{\alpha^\asub},
            \ldots,
            \sizeinterpret{\aterm_n}_{\alpha^\asub}
        )\\
        &= \zeta_\alpha^\asub(\avar)(
            \sizeinterpret{\aterm_1}_{\alpha^\asub},
            \ldots,
            \sizeinterpret{\aterm_n}_{\alpha^\asub}
        )\\
        &= \costinterpret{(\avar \app \aterm_1 \ldots \aterm_n)}_{\alpha^\asub,\zeta_\alpha^\asub}
      \end{align*}
    \end{itemize}
    \item For the second case we have \( \aterm = \afun \app \aterm_1 \ldots \aterm_k \app \bterm_1 \ldots \bterm_n \).
    Recall that we fixed
    $\afun \hasType (\typeVec{\asort_1} \arrtype \bsort_1)
      \arrtype \cdots \arrtype (\typeVec{\asort_k} \arrtype \bsort_k) \arrtype
      \csort_1 \arrtype \cdots \arrtype \csort_l \arrtype \asort$
    as the general type for \( \afun \).
    Hence, since we consider \( \aterm \) of type order at most 1,
    \( \afun \) must take at least \( k \) arguments, and \( 0 \leq n \leq l \).
    \begin{align*}
      &\costinterpret{(\afun \app \aterm_1 \ldots \aterm_k \app \bterm_1 \ldots \bterm_n) \asub}_{\alpha,\zeta}\\
      &=
      \costinterpret{\afun \app (\aterm_1 \asub) \ldots (\aterm_k \asub) \app (\bterm_1 \asub) \ldots (\bterm_n \asub)}_{\alpha,\zeta}\\
      &= \funcinterpretcost{\afun}(
        \costinterpret{\aterm_1\asub}_{\alpha,\zeta},
        \sizeinterpret{\aterm_1\asub}_{\alpha},
        \ldots,
        \costinterpret{\aterm_k\asub}_{\alpha,\zeta},
        \sizeinterpret{\aterm_k\asub}_{\alpha},
        \sizeinterpret{\bterm_1\asub}_\alpha,
        \ldots,
        \sizeinterpret{\bterm_n\asub}_\alpha
      )\\
      \overset{\text{IH, \cref{lemma:subst-size}}}&{=} \funcinterpretcost{\afun}(
        \costinterpret{\aterm_1}_{\alpha^\asub,\zeta_\alpha^\asub},
        \sizeinterpret{\aterm_1}_{\alpha^\asub},
        \ldots,
        \costinterpret{\aterm_k}_{\alpha^\asub,\zeta_\alpha^\asub},
        \sizeinterpret{\aterm_k}_{\alpha^\asub},
        \sizeinterpret{\bterm_1}_{\alpha^\asub},
        \ldots,
        \sizeinterpret{\bterm_n}_{\alpha^\asub}
      )\\
      &= \costinterpret{\afun \app \aterm_1 \ldots \aterm_k \app \bterm_1 \ldots \bterm_n}_{\alpha^\asub,\zeta_\alpha^\asub}
       \qedhere
    \end{align*}
  \end{itemize}

\end{proof}

Next, we connect the relationship between the two cost functions
we defined.

\begin{lemma}\label{lemma:cost-gtecost-prime}
  For any term \( \aterm \) of type order \( 0 \) or \( 1 \), and any
  substitution \( \asub \) such that all \( \asub(x) \) are in normal form, we have that
  \( \totalcost{\aterm}_{\alpha^\asub, \zeta_\alpha^\asub} \costGe \totalcostprime{\aterm \asub}_{\alpha,\zeta}\).
\end{lemma}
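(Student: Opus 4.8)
The plan is to compare the two sums occurrence-by-occurrence by relating the non-variable base-type subterms of \( \aterm \), which are counted by \( \totalcost{\aterm}_{\alpha^\asub,\zeta^\asub} \), with the non-variable base-type subterms of \( \aterm\asub \) that are not in normal form, which are counted by \( \totalcostprime{\aterm\asub}_{\alpha,\zeta} \). The two quantities differ in three respects: the term (\( \aterm \) versus \( \aterm\asub \)), the valuations (\( \alpha^\asub,\zeta^\asub \) versus \( \alpha,\zeta \)), and the side condition (\( \totalcostprime{} \) additionally discards subterms already in normal form). The first two discrepancies are bridged exactly by \cref{lem:cost-subst-base}, while the third is what turns the equality into the desired inequality.

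First I would set up the subterm correspondence. Every non-variable base-type subterm of \( \aterm\asub \) either (i) is of the form \( (\aterm|_p)\asub \) for a non-variable base-type subterm \( \aterm|_p \) of \( \aterm \), occurring at a skeleton position \( p \) of \( \aterm \); or (ii) is a subterm of some substitution image \( \asub(x) \) with \( x \in \vars{\aterm} \). Since \( \asub \) is a normalized substitution, every \( \asub(x) \) is irreducible, hence so is each of its subterms; therefore every subterm of type (ii) is in normal form and contributes nothing to \( \totalcostprime{\aterm\asub}_{\alpha,\zeta} \). Consequently the non-normal-form subterms counted by \( \totalcostprime{\aterm\asub}_{\alpha,\zeta} \) all fall under case (i), i.e. they are instances \( (\aterm|_p)\asub \) of non-variable base-type subterms of \( \aterm \).

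Then I would apply \cref{lem:cost-subst-base} to each such subterm. Each non-variable base-type subterm \( \aterm|_p \) of \( \aterm \) has base type (order \( 0 \)) and all of its variables occur in \( \aterm \), so they have order at most \( 1 \); thus the hypotheses of the lemma are met and it gives \( \costinterpret{(\aterm|_p)\asub}_{\alpha,\zeta} = \costinterpret{\aterm|_p}_{\alpha^\asub,\zeta^\asub} \). Summing over all non-variable base-type subterms of \( \aterm \) reproduces \( \totalcost{\aterm}_{\alpha^\asub,\zeta^\asub} \), whereas \( \totalcostprime{\aterm\asub}_{\alpha,\zeta} \) sums the same values \( \costinterpret{(\aterm|_p)\asub}_{\alpha,\zeta} \) only over those \( p \) whose instance is not in normal form. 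Because every base-type cost value lies in \( \costInt{\bsort} = \Nat \) and is therefore non-negative, restricting the sum to the non-normal-form summands can only decrease the total, which yields \( \totalcost{\aterm}_{\alpha^\asub,\zeta^\asub} \costGe \totalcostprime{\aterm\asub}_{\alpha,\zeta} \).

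The main obstacle is the bookkeeping in the subterm correspondence, specifically justifying that no non-normal-form base-type subterm of \( \aterm\asub \) can lie inside a substitution image: this is exactly where normalizedness of \( \asub \) is essential, and it is what forces the surviving summands to originate from genuine subterms of \( \aterm \). A secondary point of care is to read the defining sums as ranging over positions (equivalently, as multisets of occurrences) rather than over bare sets of terms, so that repeated occurrences of equal subterms are matched correctly across the position-preserving correspondence between \( \aterm \) and \( \aterm\asub \).
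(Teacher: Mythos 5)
Your proof is correct, but it takes a genuinely different route from the paper. The paper proves \cref{lemma:cost-gtecost-prime} by structural induction on \( \aterm \), with separate cases for variable-headed terms \( \avar \app \aterm_1 \cdots \aterm_n \) and function-headed terms \( \afun \app \aterm_1 \cdots \aterm_k \app \bterm_1 \cdots \bterm_n \), in each case unfolding the recursive decomposition \( \totalcost{\aterm} = \costinterpret{\aterm} + \sum_i \totalcost{\aterm_i} + \cdots \), invoking \cref{lem:cost-subst-base} (and, in the variable-headed case, \cref{lem:casethingy,lemma:subst-size}) on the head summand, and applying the induction hypothesis to the arguments; the case split on whether \( \aterm\asub \) is in normal form is handled locally at each level of the induction. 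You instead give a single global, non-inductive argument: a position-based correspondence showing that every occurrence counted by \( \totalcostprime{\aterm\asub}_{\alpha,\zeta} \) sits at a skeleton position of \( \aterm \) (occurrences inside substitution images being in normal form by normalizedness of \( \asub \), hence discarded), so that after applying \cref{lem:cost-subst-base} summand-wise, the right-hand sum is a sub-sum of the left-hand one over \( \Nat \). Both proofs rest on the same key lemma, so the mathematical content is essentially shared; what your version buys is a cleaner separation of the three discrepancies (term, valuations, side condition) and an explicit identification of where normalizedness and non-negativity enter, while the paper's induction meshes more directly with the recursive unfoldings of \( \totalcost{\cdot} \) and \( \totalcostprime{\cdot} \) that are reused verbatim in the proof of \cref{thm:compatibility}. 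One point you rightly flag, and which your argument depends on more visibly than the paper's: the defining ``sums over subterms'' must be read as sums over occurrences (positions), a reading the paper's own calculations also tacitly presuppose.
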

\begin{proof}
  We again consider two cases:
  \begin{itemize}
    \item For the first case, let \( \aterm = \avar \app \aterm_1 \ldots \aterm_n \).
      If \( n = 0 \) then \( \totalcost{\avar}_{\alpha^\asub, \zeta_\alpha^\asub} = 0 \)
      by definition, and since we assumed that \( \asub(\avar) \) is in normal form, also
      \( \totalcostprime{\avar\asub}_{\alpha,\zeta} = 0 \).
      If \( n > 0 \) and \( \aterm \) has base type, then
      \(  \totalcost{\aterm}_{\alpha^\asub, \zeta_\alpha^\asub} =
      \zeta_\alpha^\asub(\avar)(\sizeinterpret{\aterm_1}_{\alpha^\asub, \zeta_\alpha^\asub},\dots,\sizeinterpret{\aterm_n}_{\alpha^\asub, \zeta_\alpha^\asub}) +
      \Sigma_{i=1}^n \totalcost{\aterm_i}_{\alpha^\asub, \zeta_\alpha^\asub} =
      \costinterpret{\gamma(\avar)}_{\alpha,\zeta}(\sizeinterpret{\aterm_1\gamma}_{\alpha},\dots,\sizeinterpret{\aterm_n\gamma}_\alpha) +
      \Sigma_{i=1}^n \totalcost{\aterm_i}_{\alpha^\asub, \zeta_\alpha^\asub} \) by Lemmas~\ref{lemma:subst-size} and~\ref{lem:cost-subst-base}.
      By Lemma~\ref{lem:casethingy} and the induction hypothesis this
      \( \geq \costinterpret{(\avar\gamma) \app (\aterm_1\gamma) \cdots (\aterm_n\gamma)} +
      \Sigma_{i=1}^n \totalcostprime{\aterm_i\gamma}_{\alpha,\zeta} \).
      Since \( \avar\gamma \) is in normal form, either this is exactly \( \totalcostprime{\aterm\gamma} \), or
      \( \totalcostprime{\aterm\gamma} = 0 \) and we are done regardless.
      If \( n > 0 \) and \( \aterm \) does not have base type, we complete quickly with the induction hypothesis.
    \item For the second case, let \( \aterm = \afun \app \aterm_1 \ldots
      \aterm_k \app \bterm_1 \ldots \bterm_n \)
  We have two cases whether \( \aterm \asub \) is in normal form or not.
  In the first case, \( \totalcostprime{\aterm \asub}_{\alpha,\zeta} = 0 \)
  and certainly \( \totalcost{\aterm }_{\alpha^\asub,\zeta_\alpha^\asub} \geq 0 \).
  For the second case, \( \aterm \) is not in normal form.

  If \( \aterm \) has base type, then:
  \begin{align*}
    \totalcost{\aterm}_{\alpha^\asub,\zeta_\alpha^\asub}
    &= \totalcost{\afun \app \aterm_1 \ldots \aterm_k \app \bterm_1 \ldots \bterm_n}_{\alpha^\asub,\zeta_\alpha^\asub} \\
    &=\costinterpret{\aterm}_{\alpha^\asub,\zeta_\alpha^\asub} +
      \sum\limits_{i = 1}^k \totalcost{\aterm_i}_{\alpha^\asub,\zeta_\alpha^\asub}
    +
    \sum\limits_{j = 1}^n
      \totalcost{\bterm_j}_{\alpha^\asub,\zeta_\alpha^\asub}\\
    \overset{\text{\cref{lem:cost-subst-base}}}&{=}
      \costinterpret{\aterm \asub}_{\alpha,\zeta} +
      \sum\limits_{i = 1}^k \totalcost{\aterm_i}_{\alpha^\asub,\zeta_\alpha^\asub}
    +
    \sum\limits_{j = 1}^n
      \totalcost{\bterm_j}_{\alpha^\asub,\zeta_\alpha^\asub}\\
    \overset{IH}&{\geq} \costinterpret{\aterm \asub}_{\alpha,\zeta} +
    \sum\limits_{i = 1}^k \totalcostprime{\aterm_i \asub}_{\alpha,\zeta}
    +
    \sum\limits_{j = 1}^n
      \totalcostprime{\bterm_j \asub}_{\alpha,\zeta}\\
    &= \totalcostprime{\aterm \asub}_{\alpha,\zeta}
  \end{align*}
  If not, then:
  \begin{align*}
    \totalcost{\aterm}_{\alpha^\asub,\zeta_\alpha^\asub}
    &= \sum\limits_{i = 1}^k \totalcost{\aterm_i}_{\alpha^\asub,\zeta_\alpha^\asub}
    +
    \sum\limits_{j = 1}^n
      \totalcost{\bterm_j}_{\alpha^\asub,\zeta_\alpha^\asub}\\
    \overset{IH}&{\geq}
    \sum\limits_{i = 1}^k \totalcostprime{\aterm_i \asub}_{\alpha,\zeta}
    +
    \sum\limits_{j = 1}^n
      \totalcostprime{\bterm_j \asub}_{\alpha,\zeta}\\
    &= \totalcostprime{\aterm \asub}_{\alpha,\zeta}
    \qedhere
  \end{align*}
  \end{itemize}
\end{proof}

\begin{lemma}\label{lemma:cost-arr-ty-costge}
      Let \( \trs \) be an STRS compatible with a \cs{} interpretation following \cref{def:compatible}
      and \( \aterm,\bterm \) be type-1 terms of the same type, such that
      \( \aterm \arrzR \bterm \).
      Then we have that \( \costinterpret{\aterm}_{\alpha,\zeta} \costGe \costinterpret{\bterm}_{\alpha,\zeta} \).
\end{lemma}

\pagebreak[5]

\begin{proof}
By induction on the size of \( \aterm \).  There are two cases.

\begin{itemize}
\item First, \( \aterm = \avar  \app \aterm_1 \ldots \aterm_n \).
  We can write \( \avar \hasType \asort_1 \arrtype \cdots \arrtype \asort_k \arrtype \bsort \) and,
  since \( \aterm \) is type-1 (so \emph{not} of base type), \( n < k \).
  Then \( \bterm = \avar \app \aterm_1 \ldots \aterm_i' \ldots \aterm_n \)
  with \( \aterm_i \arrzR \aterm_i' \).
  Hence, by \cref{lemma:size-compatibility} and monotonicity of \( \zeta(\avar) \):
  \begin{align*}
    \costinterpret{\avar \app \aterm_1 \ldots \aterm_i \ldots \aterm_n}_{\alpha,\zeta}
    &= \zeta(\avar)(\sizeinterpret{\aterm_1}_\alpha,
    \ldots, \sizeinterpret{\aterm_i}_\alpha, \ldots, \sizeinterpret{\aterm_n}_\alpha)\\
    &\sizeGe
    \zeta(\avar)(\sizeinterpret{\aterm_1}_\alpha,
    \ldots, \sizeinterpret{\aterm_i'}_\alpha, \ldots, \sizeinterpret{\aterm_n}_\alpha)\\
    &= \costinterpret{\avar \app \aterm_1 \ldots \aterm_i' \ldots \aterm_n}_{\alpha,\zeta}
  \end{align*}

\item For the second part, \( \aterm = \afun \app \aterm_1 \cdots \aterm_k \app \bterm_1 \cdots \bterm_n \),
  recall that rules are of base type, and therefore reduction does not occur at head position.
  Hence, either \( \bterm = \afun \app \aterm_1 \cdots \aterm_i' \cdots \aterm_k \app \bterm_1 \cdots \bterm_n \)
  with \( \aterm_i \arrzR \aterm_i' \),
  or \( \bterm = \afun \app \aterm_1 \cdots \aterm_k \app \bterm_1 \cdots \bterm_i' \cdots \bterm_n \)
  with \( \bterm_i \arrzR \bterm_i' \).

  In the first case,
  \[
  \begin{array}{ll}
  & \costinterpret{\afun \app \aterm_1 \ldots \aterm_i \ldots \aterm_k \app \bterm_1 \ldots \bterm_n}_{\alpha,\zeta} \\
  = & \funcinterpretcost{\afun}(\costinterpret{\aterm_1}_{\alpha,\zeta},\sizeinterpret{\aterm_1}_\alpha,
    \ldots,\costinterpret{\aterm_i}_{\alpha,\zeta},\sizeinterpret{\aterm_i}_\alpha, \ldots,
    \costinterpret{\aterm_k}_{\alpha,\zeta},\sizeinterpret{\aterm_k}_\alpha,\sizeinterpret{\bterm_1}_\alpha,
    \dots,\sizeinterpret{\bterm_n}_\alpha)\\
  \costGe & \funcinterpretcost{\afun}(\costinterpret{\aterm_1}_{\alpha,\zeta},\sizeinterpret{\aterm_1}_\alpha,
    \ldots,\costinterpret{\aterm_i}_{\alpha,\zeta},\sizeinterpret{\aterm_i}_\alpha, \ldots,
    \costinterpret{\aterm_k}_{\alpha,\zeta},\sizeinterpret{\aterm_k}_\alpha,\sizeinterpret{\bterm_1}_\alpha,
    \dots,\sizeinterpret{\bterm_n}_\alpha) \\
  = & \costinterpret{\afun \app \aterm_1 \ldots \aterm_i' \ldots \aterm_k \app \bterm_1 \ldots \bterm_n}_{\alpha,\zeta} \\
  \end{array}
  \]
  because \( \sizeinterpret{\aterm_i}_\alpha \sizeGe \sizeinterpret{\aterm_i'} \) by
  \cref{lemma:size-compatibility},
  \( \costinterpret{\aterm_i}_{\alpha,\zeta} \costGe \costinterpret{\aterm_i'}_{\alpha,\zeta} \)
  by the induction hypothesis,
  and \( \funcinterpretcost{\afun} \) is weakly monotonic.
  For the second case, we only need \cref{lemma:size-compatibility} and monotonicity, not the induction hypothesis.
  \qedhere
\end{itemize}
\end{proof}

Finally, we are ready to prove the innermost compatibility theorem.
\compat*

\begin{proof}
  The second part of the theorem is given by \cref{lemma:size-compatibility}.
  The first part follows by induction on the reduction \( \aterm \arrzR \bterm \).

  \begin{itemize}
    \item For the base case, we have by \cref{lem:cost-subst-base,lemma:cost-gtecost-prime} that
    \[
      \totalcostprime{\ell \asub}_{\alpha,\zeta} = \costinterpret{\ell \asub}_\zeta =
      \costinterpret{\ell}_{\zeta_\alpha^\asub} \costGt \totalcost{r}_{\alpha^\asub,\zeta_\alpha^\asub} \costGe
      \totalcostprime{r \asub}_{\alpha,\zeta}
    \]
    \item For the application case with a variable root symbol, we have that
      \( \avar \app t_1 \ldots t_i \ldots t_n \arrzR \avar \app t_1 \ldots
      t_i' \ldots t_n \) with \( t_i \arrzR t_i' \).
      By induction we get \( \totalcostprime{t_i}_{\alpha,\zeta} \costGt \totalcostprime{t_i'}_{\alpha,\zeta} \),
      and by \cref{lemma:size-compatibility} we have \( \sizeinterpret{t_i}_\alpha \sizeGe \sizeinterpret{t_i'}_\alpha \).
      Then, if \( \aterm \) has base type:
      \begin{align*}
        &\totalcostprime{\avar \app t_1 \ldots t_i \ldots t_n}_{\alpha,\zeta}\\
        &= \costinterpret{\avar \app t_1 \ldots t_i \ldots t_n}_{\alpha,\zeta} +
        \sum\limits_{j = 1}^n \totalcostprime{t_j}_{\alpha,\zeta}\\
        &=\zeta(\avar)(\sizeinterpret{t_1}_\alpha, \ldots, \sizeinterpret{t_i}_\alpha, \ldots, \sizeinterpret{t_n}_\alpha) +
        (\sum\limits_{\begin{subarray}{l} j = 1 \dots n \\ j \neq i\end{subarray}}
        \totalcostprime{t_j}_{\alpha,\zeta}\ ) + \totalcostprime{t_i}_{\alpha,\zeta}\\
        &\costGe \zeta(\avar)(\sizeinterpret{\bterm_1}_\alpha, \ldots, \sizeinterpret{\bterm_i'}_\alpha, \ldots, \sizeinterpret{\bterm_n}_\alpha) +
        (\sum\limits_{\begin{subarray}{l} j = 1 \dots n \\ j \neq i\end{subarray}}
        \totalcostprime{t_j}_{\alpha,\zeta}\ ) +
        \totalcostprime{t_i}_{\alpha,\zeta},\\
        &\costGt \zeta(\avar)(\sizeinterpret{\bterm_1}_\alpha, \ldots, \sizeinterpret{\bterm_i'}_\alpha, \ldots, \sizeinterpret{\bterm_n}_\alpha) +
        (\sum\limits_{\begin{subarray}{l} j = 1 \dots n \\ j \neq i\end{subarray}}
        \totalcostprime{t_j}_{\alpha,\zeta}\ ) +
        \totalcostprime{t_i'}_{\alpha,\zeta}\\
        &= \totalcostprime{\avar \app t_1 \ldots t_i' \ldots t_n}_{\alpha,\zeta}
      \end{align*}
      If \( \aterm \) does not have base type, we have a similar reasoning without
      the component \( \costinterpret{\avar \app t_1 \ldots t_i \ldots t_n}_{\alpha,\zeta} \).
    \item For the application case with a function root symbol where the
      reduction is done in a base-type argument, we have that
      \( \afun \app s_1 \ldots s_k \app t_1 \ldots t_i \ldots t_n \arrzR
      \afun \app s_1 \ldots s_k \app t_1 \ldots t_i' \ldots t_n \) with
      \( t_i \arrzR t_i' \).
      Let us write \( \vec{s} \) for \( s_1 \ldots s_k \) and \( \cost(s) \)
      for \( \sum\limits_{j = 1}^k \totalcostprime{s_i}_{\alpha,\zeta} \)  below.
      We also abuse notation and write \( \costinterpret{\vec{s}}_{\alpha,\zeta},
      \sizeinterpret{\vec{s}}_\alpha \) for \( \costinterpret{s_1}_{\alpha,\zeta},
      \sizeinterpret{s_1}_\alpha,\dots,\costinterpret{s_k}_{\alpha,\zeta},
      \sizeinterpret{s_k}_\alpha \).
      If \( \aterm \) has base type:
      \begin{align*}
        &\totalcostprime{\afun \app \vec{s} \app t_1 \ldots t_i \ldots t_n}_{\alpha,\zeta} \\
        &= \costinterpret{\afun \app \vec{s} \app t_1 \ldots t_i \ldots t_n}_{\alpha,\zeta}
        + \cost(s) + \sum\limits_{j = 1}^n \totalcostprime{t_j}_{\alpha,\zeta}\\
        &= \funcinterpret{\afun}^\cost({{\costinterpret{\vec{s}}_{\alpha,\zeta}}}, \sizeinterpret{\vec{s}}_\alpha, \sizeinterpret{t_1}_\alpha, \ldots, \sizeinterpret{t_i}_{\alpha}, \ldots, \sizeinterpret{t_n}_\alpha)
        + \cost(s) + \sum\limits_{j = 1}^n \totalcostprime{t_j}_{\alpha,\zeta}\\
        &\costGe
        \funcinterpret{\afun}^\cost({{\costinterpret{\vec{s}}_{\alpha,\zeta}}}, \sizeinterpret{\vec{s}}_\alpha, \sizeinterpret{t_1}_\alpha, \ldots, \sizeinterpret{t_i'}_{\alpha}, \ldots, \sizeinterpret{t_n}_\alpha)
        + \cost(s) + \sum\limits_{j = 1}^n \totalcostprime{t_j}_{\alpha,\zeta}\\
        &\costGt \totalcostprime{\afun \app \vec{s} \app t_1 \ldots t_i' \ldots t_n}_{\alpha,\zeta}
      \end{align*}
      where in the last step we use \( \totalcostprime{t_i}_{\alpha,\zeta} \costGt
      \totalcostprime{t_i'}_{\alpha,\zeta} \),
      given by the IH\@.
      If \( s \) does not have base type, the reasoning is similar, only omitting the
      component \( \costinterpret{\afun \app \vec{s} \app t_1 \ldots t_i \ldots t_n}_{\alpha,\zeta} \).
    \item For the application case with a function root symbol where the
      reduction is done in a higher-type argument, we have that
      \( \afun \app s_1 \ldots s_i \ldots s_k \app t_1 \ldots t_n \arrzR
      \afun \app s_1 \ldots s_i' \ldots s_k \app t_1 \ldots t_n \) with
      \( s_i \arrzR s_i' \).
      Recall that by IH we get \( \totalcostprime{\aterm_i}_{\alpha,\zeta} \costGt
      \totalcostprime{\aterm_i'}_{\alpha,\zeta} \).
      If \( \aterm \) does not have base type, we conclude in a similar reasoning
      to the ones used above, simply counting the \( \totalcostprime{\cdot} \) for
      all arguments.
      If \( \aterm \) does have base type, note that \( s_i \) and \( s_i' \) are
      type-1 terms, so we can apply \cref{lemma:cost-arr-ty-costge} to obtain
      \( \costinterpret{\aterm_i} \costGe \costinterpret{\aterm_i'} \).
      With this in hand we reason as follows:
      \begin{align*}
        &\totalcostprime{\afun \app \aterm_1 \ldots \aterm_i \ldots \aterm_k \app \vec{t}}_{\alpha,\zeta}\\
        &= \funcinterpret{\afun}^\cost
        (\costinterpret{\aterm_1}_{\alpha,\zeta}, \sizeinterpret{\aterm_1}_\alpha,
          \ldots,
          \costinterpret{\aterm_i}_{\alpha,\zeta}, \sizeinterpret{\aterm_i}_\alpha
          \ldots,
          \costinterpret{\aterm_k}_{\alpha,\zeta}, \sizeinterpret{\aterm_k}_\alpha,
          \sizeinterpret{\vec{t}}_\alpha
        )\\
        &\phantom{AAAA} + (\sum\limits_{j = 1\ldots k,j \neq i} \totalcostprime{s_j}_{\alpha,\zeta}\ ) + \totalcostprime{s_i}_{\alpha,\zeta} +
          (\ \sum\limits_{j = 1}^n \totalcostprime{t_j}_{\alpha,\zeta}\ )\\
        &\text{By monotonicity of \( \funcinterpret{\afun}^\cost \) and \( \costinterpret{\aterm_i}_{\alpha,\zeta} \costGe \costinterpret{\aterm_i'}_{\alpha,\zeta} \),
        \( \sizeinterpret{\aterm_i}_{\alpha} \sizeGe \sizeinterpret{\aterm_i'}_{\alpha} \), we get}\\
        &\costGe
        \funcinterpret{\afun}^\cost
        (\costinterpret{\aterm_1}_{\alpha,\zeta}, \sizeinterpret{\aterm_1}_\alpha,
          \ldots,
          \costinterpret{\aterm_i'}_{\alpha,\zeta}, \sizeinterpret{\aterm_i'}_\alpha
          \ldots,
          \costinterpret{\aterm_k}_{\alpha,\zeta}, \sizeinterpret{\aterm_k}_\alpha,
          \sizeinterpret{\vec{t}}_\alpha
        )\\
        &\phantom{AAAA} +
        (\sum\limits_{j = 1\ldots k,j \neq i} \totalcostprime{s_j}_{\alpha,\zeta}\ )
        + \totalcostprime{s_i}_{\alpha,\zeta}
        + (\ \sum\limits_{j = 1}^n \totalcostprime{t_j}_{\alpha,\zeta}\ ) \\
        &\costGt
        \funcinterpret{\afun}^\cost
        (\costinterpret{\aterm_1}_{\alpha,\zeta}, \sizeinterpret{\aterm_1}_\alpha,
          \ldots,
          \costinterpret{\aterm_i'}_{\alpha,\zeta}, \sizeinterpret{\aterm_i'}_\alpha
          \ldots,
          \costinterpret{\aterm_k}_{\alpha,\zeta}, \sizeinterpret{\aterm_k}_\alpha,
          \sizeinterpret{\vec{t}}_\alpha
        )\\
        &\phantom{AAAA} +
        (\sum\limits_{j = 1\ldots k,j \neq i} \totalcostprime{s_j}_{\alpha,\zeta}\ )
        + \totalcostprime{s_i'}_{\alpha,\zeta}
        + (\ \sum\limits_{j = 1}^n \totalcostprime{t_j}_{\alpha,\zeta}\ ) \\
        &= \totalcostprime{\afun\app \aterm_1 \ldots \aterm_i' \ldots \aterm_k \app \vec{t}}_{\alpha,\zeta}
        \qedhere
      \end{align*}
  \end{itemize}

\end{proof}

\section{
  From Higher-Order Rewriting to
  \texorpdfstring{\( \bffT \)}{BFF} and Back
}%
\label{sec:rw-bff}

The main result of this paper roughly states that \( \bffT \)
consists exactly of those type-2 functionals computed by
an STRS compatible with a \cs{} tuple interpretation, whose main function is
polynomially bounded.
To formally state this result,
we must first define what it means for an STRS to \textit{compute} a type-2 functional and define precisely the class of \cs{} interpretations we are interested in.
The challenge is that such an STRS is required to do what it is supposed to do \emph{for every} input, and the inputs not only consist of words, but also of functions on words.
We thus have to find a way to encode type-1 functions seen as inputs to a type-2 program.

Indeed,
let us start by encoding words in \( \bffDom \) as terms.
We let \( \bit,\word \in \sortset \) and introduce symbols
\( \bito,\biti \hasType \bit \) and \( \nil \hasType \word,\ \consInfix \hasType
\bit \arrtype \word \arrtype \word \) (where \( \consInfix \) will be used in
infix notation).
Then for instance \( 001 \) is encoded as the term
\( \bito \consInfix (\bito \consInfix (\biti \consInfix \nil)) \).
In practice, we usually employ the cleaner list-like notation
\( [\bito ; \bito ; \biti ] \).
Let \( \encode{w} \) denote the
term encoding of a word \( w \).

\newcommand{\addAux}{\mathbin{\defFont{aux}}}
\newcommand{\xorF}{\mathbin{\defFont{xor}}}
\newcommand{\andF}{\mathbin{\defFont{and}}}
\newcommand{\orF}{\mathbin{\defFont{or}}}

\begin{example}[Implementing Binary Addition]\label{bff:implement-binary}
  Let us implement binary addition.
  For this purpose, we consider binary sequences written in \textit{little-endian}
  format, i.e., the least significant digit is at the head of the list.
  So the decimal number \( 6 \) is written as \( \mathtt{011} \) in little-endian notation.
  To start, we will need the following logical operations on bit symbols.
  \begin{align*}
    \bito \xorF \bito &\arrz \bito &
    \biti \xorF \biti &\arrz \bito &
    \bito \xorF \biti &\arrz \biti &
    \biti \xorF \bito &\arrz \biti \\
    \bito \andF \bito &\arrz \bito &
    \biti \andF \biti &\arrz \biti &
    \bito \andF \biti &\arrz \bito &
    \biti \andF \bito &\arrz \bito \\
    \bito \orF \bito  &\arrz \bito &
    \biti \orF \biti  &\arrz \biti &
    \bito \orF \biti  &\arrz \biti &
    \biti \orF \bito  &\arrz \biti
  \end{align*}
  The rules defining \( \addAux \hasType \word \arrtype \word \arrtype \bit \arrtype \word \)
  below compute the bitwise addition of two binary numbers, given a carrying value as the
  third argument.  This is done by recursing over the input lists.
  \begin{align*}
    &\addAux \app \nil \app \nil \app \bito \arrz \nil
    \\
    &\addAux \app \nil \app \nil \app \biti \arrz \biti \consInfix \nil
    \\
    &\addAux \app (a \consInfix as) \app \nil \app \accV \arrz (a \xorF \accV) \consInfix \addAux \app as \app \nil \app (a \andF \accV)
    \\
    &\addAux \app \nil \app (b \consInfix bs) \app \accV \arrz (b \xorF \accV) \consInfix \addAux \app \nil \app bs \app (b \andF \accV)
    \\
    &\addAux \app (a \consInfix as) \app (b \consInfix bs) \app \accV \arrz
      ((a \xorF b) \xorF \accV) \consInfix \addAux \app as \app bs \app (((a \xorF b) \andF \accV) \orF (a \andF b))
  \end{align*}
  Finally, we write the addition of binary numbers as the rule
  \( x \binAddInfix y \arrz \addAux \app x \app y \app \bito \).
  We can orient all rules by setting the following interpretation:
  \begin{align*}
  \funcinterpretsize{\bito} &= \funcinterpretsize{\biti} = \funcinterpretsize{\nil} = 0 &
  \funcinterpretcost{\bito} &= \funcinterpretcost{\biti} = \funcinterpretcost{\nil} = 0 \\
  \funcinterpretsize{\consInfix} &= \fatlambda x y.1 + y &
  \funcinterpretcost{\consInfix} &= \fatlambda x y.0 \\
  \funcinterpretsize{\mathit{op}} &= \fatlambda x y.0 &
  \funcinterpretcost{\mathit{op}} &= \fatlambda x y.1 & \text{for}\ \mathit{op} \in \{\xorF,\andF,\orF\} \\
  \funcinterpretsize{\addAux} &= \fatlambda x y a. 1 + {\max(x,y)} &
  \funcinterpretcost{\addAux} &= \fatlambda x y a. 1 + {7 * \max(x,y)} \\
  \funcinterpretsize{\binAddInfix} &= \fatlambda x y. 1 + {\max(x,y)} &
  \funcinterpretcost{\binAddInfix} &= \fatlambda x y. 2 + {7 * \max(x,y)}
  \end{align*}
  This interpretation assigns cost \( 0 \) to all ground constructor terms:
  ground terms that are exclusively built from the constructor symbols
  \( \biti \), \( \bito \), \( \nil \) and \( \consInfix \).
  The size of a binary word is its length.
  The cost of adding two numbers is bounded by \( 2 + 7 * \langle\text{the length of
  the longest}\rangle \): this accounts for iterating through the list, and doing 6
  bit operations for each step.
\end{example}

Next, we encode type-1 functions as a possibly infinite set of one-step
rewrite rules.

\begin{definition}\label{def:rw-oracle-simulation}
  Consider a type-1 function \( f : \bffDom \arrfunc \bffDom \) and
  let \( \funS_f \hasType \word \arrtype \word \) be a fresh function symbol.
  A set of rules \( \rules_f \)
  \defName{defines} a function
  \( f : \Nat \arrfunc \Nat \) by way of the symbol \( \funS_f \)
  if and only if
  \( \rules_f \) contains exactly the rules
  \( \funS_\afun \app \encode{n} \arrz \encode{m} \)
  where \( n,m \in \Nat \) and \( m = f(n) \).
\end{definition}

Intuitively,
this infinite set of rules is the rewriting counterpart of an oracle \( f \).
Indeed, in a single rewrite step \( \funS_\afun \app \encode{x} \)
rewrites to the value \( \encode{f(x)} \).
Henceforth,
we assume given that our STRS \( \trs \) at hand
is such that \( \syntaxSig \) contains
\( \bito, \biti, \nil, \consInfix \) typed as above
and a
distinguished
symbol \( \funF \hasType {(\word \arrtype\word)}^k \arrtype \word^l \arrtype \word \).
Given type-1 functions \( f_1, \dots, f_k \),
we write \( \syntaxSig_{\vec{f}} \) for \( \syntaxSig \) extended with function
symbols \( \funS_{f_i} \hasType \word \arrtype \word \), with \( 1 \leq i \leq k \),
and let \( \rulesExtdvec = \rules \cup \bigcup_{i=1}^k \rules_f \).
Now we can define the notion of type-2 computability for such STRSs.

\begin{definition}\label{def:type-2-rw-computability}
Let \( \trs \) be an STRS\@.
We say that \( \funF \) \defName{computes} the type-2 functional
\( \aTTFunc
\)
in \( (\syntaxSig,\rules) \) iff
for all type-1 functions
\( f_1, \dots, f_k
\) and all \( w_1,\dots,w_l \in \bffDom \),
\( \funF \app \funS_{f_1} \cdots \funS_{f_k} \app \encode{w_1} \cdots
\encode{w_l} \arrz_{\rulesExtdvec}^+ \encode{u} \), where \( u =
\aTTFunc(f_1,\dots,f_k,w_1,\dots,w_l) \).
\end{definition}

Next,
we define what we mean by polynomially bounded interpretation.

\begin{definition}\label{def:polynomially-bounded-int}
  We say that an STRS \( \trs \) \defName{admits} a polynomially bounded interpretation
  iff \( \trs \) is compatible with a \cs{} interpretation such that:
  \begin{itemize}
    \item \( \sizeInt{\word} = (\Nat,\geq) \);
    \item
      \( \funcinterpretcost{\bito} = \funcinterpretcost{\biti} =
      \funcinterpretcost{\nil} = 0 \),
      \( \funcinterpretcost{\consInfix} = \fatlambda xy.0\),
      and \( \funcinterpretsize{\consInfix} = \fatlambda xy.x + y +c \)
      for some \( c \geq 1 \);
    \item \( \funcinterpretcost{\funF} \) is bounded by a polynomial
    in \( \polySet{F_1^\cost,F_1^\size,\ldots,F_k^\cost,F_k^\size;x_1, \ldots,x_l} \).
    \end{itemize}
\end{definition}
Note that the condition on \( \funcinterpretsize{\consInfix} \) corresponds to what is called \textit{additive} quasi-interpretation in~\cite{DBLP:journals/tcs/BonfanteMM11}.
Finally, we can formally state our main result.

\begin{theorem}\label{thm:main}
A type-2 functional
\( \aTTFunc \)
is in \defName{\( \bffT \)}
if and only if
there exists a finite orthogonal STRS
\( \trs \)
such that the distinguished symbol \( \funF \) computes \( \aTTFunc \) in \( \trs \)
and \( \rules \) admits a polynomially bounded \cs{} interpretation.
\end{theorem}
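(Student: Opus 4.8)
The plan is to prove the two directions separately, mirroring the soundness/completeness split announced in the Technical Overview.

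For the forward direction (\emph{soundness}), I would assume given a finite orthogonal STRS \( \trs \) in which \( \funF \) computes \( \aTTFunc \), together with a polynomially bounded \cs{} interpretation in the sense of \cref{def:polynomially-bounded-int}, and show \( \aTTFunc \in \bffT \). The idea is to build an OTM that, on input \( \encode{w_1},\dots,\encode{w_l} \) and with oracles \( f_1,\dots,f_k \), normalizes the term \( \funF \app \funS_{f_1} \cdots \funS_{f_k} \app \encode{w_1} \cdots \encode{w_l} \) by innermost reduction and reads off the resulting word. By the Innermost Compatibility Theorem (\cref{thm:compatibility}), the number of reduction steps is bounded by \( \totalcostprime{\cdot} \) of the starting term, and the constraint on \( \funcinterpretcost{\funF} \) in \cref{def:polynomially-bounded-int} forces this quantity to be a second-order polynomial in the cost/size data of the oracles and the lengths \( |w_i| \); crucially, each oracle application \( \funS_{f_i} \app \encode{u} \) contributes a single step, and I would invoke the Oracle Subterm Lemma (\cref{lemma:oracle-subterm-lemma}) to guarantee that the arguments \( u \) handed to the oracle stay polynomially bounded in size, so the oracle answers have controlled length and the cost interpretation genuinely accounts for them.

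The central difficulty in this direction is the \emph{size-explosion} problem: a polynomial bound on the number of steps does not by itself bound the size of the intermediate terms, since duplicating subterms can make the string representation grow exponentially. I would resolve this exactly as flagged in the overview, by carrying out the simulation on term graphs rather than terms (\cref{subsec:graphs}), so that shared subterms are stored once, and by exploiting the imposed shape \( \funcinterpretsize{\consInfix} = \fatlambda xy.x + y + c \) with \( c \geq 1 \). This shape ensures that the size interpretation of a word term is linearly related to its actual length, and the size part of \cref{thm:compatibility} guarantees that \( \sizeinterpret{\cdot} \) is non-increasing along reduction; together these bound the graph size polynomially throughout the run, so the OTM simulation terminates within a second-order polynomial time bound and \( \aTTFunc \in \bffT \) (this is \cref{thm:soundness}).

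For the backward direction (\emph{completeness}), I would take \( \aTTFunc \in \bffT \), witnessed by an OTM \( \aOTM \) whose running time is bounded by a second-order polynomial \( P \), and construct an STRS realizing it. The construction encodes each configuration of \( \aOTM \) (state, tape contents, head positions, and query/answer tapes) as a base-type term and each machine transition as a rewrite rule, so that one transition corresponds to one rewrite step; correctness of this stepwise simulation is the content of \cref{lem:stepencode}, and a query step is modeled by a single application of the oracle symbol \( \funS_{f_i} \) in accordance with \cref{def:rw-oracle-simulation}. The symbol \( \funF \) is then defined to initialize the configuration from its inputs, iterate the step function under an explicit clock computed from \( P \) evaluated at the input sizes and the oracle lengths, and extract the main-tape output. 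The delicate point is to arrange that the \cs{} interpretation orienting these rules is polynomially bounded: I would design the cost interpretations of the step and iteration symbols so that \( \funcinterpretcost{\funF} \) lands in \( \polySet{F_1^\cost,F_1^\size,\ldots,F_k^\cost,F_k^\size;x_1, \ldots,x_l} \), while giving \( \consInfix \) the additive size shape \( \fatlambda xy.x + y + c \) demanded by \cref{def:polynomially-bounded-int}; this is the subject of \cref{thm:completeness}. I expect the main obstacle to lie in the soundness direction, where reconciling the purely numerical step bound from \cref{thm:compatibility} with an honest polynomial-time string-manipulating OTM forces the combined use of the term-graph argument (to defeat size explosion) and the Oracle Subterm Lemma (to keep queries small), and assembling these into a single second-order running-time polynomial is where the real work resides.
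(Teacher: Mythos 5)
Your soundness sketch follows the paper's proof essentially step for step: extend the interpretation to the oracle rules \( \funS_f \app \encode{x} \arrz \encode{f(x)} \) (cost \( 1 \), size \( \fatlambda x.\mu * |f|(x) + \nu \), as in \cref{lemma:rules-extd-compatible}), bound the step count via \cref{thm:compatibility}, keep query sizes small via \cref{lemma:oracle-subterm-lemma}, and defeat size explosion by simulating on term graphs. One inaccuracy worth flagging: the polynomial bound on intermediate graph sizes is \emph{not} obtained from the non-increasingness of \( \sizeinterpret{\cdot} \) along reduction, as you suggest --- size interpretations of intermediate terms need not track node counts at all (a constructor other than \( \consInfix \) may have, say, a max-shaped size interpretation). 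The paper instead bounds the growth \emph{per step}: each \( \rules \)-step adds at most a fixed constant number of nodes (\cref{lem:increaseconstant}), each oracle step adds at most \( |f|(B(|f|,|w|)) \) nodes by \cref{lemma:oracle-subterm-lemma}, and there are at most \( D(|f|,|w|) \) steps, giving the bound of \cref{lem:termsizebounded}. This is repairable, but your stated justification would not go through as written.

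The genuine gap is in your completeness direction, at exactly the point you pass over quickly: you propose to iterate the step function ``under an explicit clock computed from \( P \) evaluated at the input sizes and the oracle lengths.'' This is the naive approach that the paper explicitly rules out at the start of \cref{sec:completeness:bound}: the STRS has no access to \( |f| \), and computing \( \polM(|f|,|w|) \) would require probing \( f \) on exponentially many words, so no rewrite rules with a polynomially bounded \cs{} interpretation can produce that clock up front. The paper's key idea, following Kapron and Cook, is that it suffices to clock the run by \( \polM(\limitsize{f}{A},|w|) \) where \( A \) is the set of words on which the oracle has \emph{actually} been queried so far (\cref{lem:limitsizebounds}); the rules therefore thread \( A \) through the computation, and after every oracle call they \emph{recompute} the remaining budget as \( \polM(\limitsize{f}{A},|w|) + 1 - m \) (with \( m \) the number of steps already taken), using the \( \tryall \)/\( \tryapply \)/\( \limit \) machinery and \( \minus \). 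A second ingredient your sketch misses is needed to orient this dynamic clock: the step counter must live in a fresh sort \( \nnat \) interpreted by the \emph{reversed} order \( (\Nat,\leq) \), so that \( \funcinterpretsize{\minus} = \fatlambda xy.\max(x-y,0) \) is weakly monotonic and \( \execute \) admits a compatible interpretation in which \( \max(\polM(F^\size,z)+1-m,\, n) \) strictly decreases. Without the restricted length function and the dynamic re-clocking, your \( \funF \) cannot even be defined as a terminating STRS with the required interpretation, so the completeness argument stalls before the interpretation design you describe can begin.
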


Note that the polynomial bound \emph{only} refers to the cost-size interpretation for \( \funF \).
Indeed,
except for \( \bito \), \( \biti \), \( \nil \), and \( \consInfix \)
there are no requirements on the interpretations for other symbols.
We will prove this result in two parts.
First, we prove soundness in Section~\ref{sec:soundness}
which states that every type-2 functional computed by an STRS as above
is in \( \bffT \).
Then in Section~\ref{sec:completeness} we prove completeness,
which states that every functional in \( \bffT \) can be computed by such an STRS\@.
In order to simplify proofs,
we only consider type-2 functions of rank (1,1).
We claim that the results can be easily generalized, but the proofs become more
tedious when handling multiple arguments.

\begin{example}\label{example:nontrivial-bff}
  Let us consider the type-2 functional defined by
  \( \aTTFunc \coloneq \fatlambda f x. \sum_{i < |x|} f(i) \),
  where \( |x| \) refers to the length of the word \( x \).
  Notice that \( \aTTFunc \) adds all \( f(i) \) over each word \( i \in \bffDom \)
  whose value (as a binary number) is smaller than the length of \( x \).
  This functional was proved to lie in \( \bffT \) in~\cite{DBLP:journals/jfp/IrwinRK01},
  where the authors used an encoding of \( \aTTFunc \) as a
  \( \mathsf{BTLP}_2 \) program.
  We can encode \( \aTTFunc \) as an STRS as follows.
  We expand on the STRS from \cref{bff:implement-binary}.
  Let us also include the type \( \nat \) and data constructors \( \zero \)
  and \( \suc \) from \cref{ex:add}, and consider ancillary symbols
  \( \lengthOf \hasType \word \arrtype \nat \)
  and \( \tobin \hasType \nat \arrtype \word \), defined by the following rules:
  \begin{align*}
  \lengthOf \app \nil &\arrz \zero &
  \lengthOf \app (a \consInfix as) &\arrz \suc \app (\lengthOf \app as) \\
  \tobin \app \zero &\arrz \nil &
  \tobin \app (\suc\ n) &\arrz (\tobin \app n) \binAddInfix (\biti \consInfix \nil)
  \end{align*}
  The former computes the length of a given word and the latter
  converts a number from unary to binary representation (using the binary addition
  symbol \( \binAddInfix \hasType \word \arrtype \word \arrtype \word \), whose rules
  were given in \cref{bff:implement-binary}).
  Then \( \aTTFunc \) is computed by:
  \begin{align*}
    \compute \app \aFuncVar \app \zero \app \accV
    &\arrz
    \accV
    \\
    \compute \app \aFuncVar \app (\suc \app \counterV) \app \accV
    &\arrz
    \compute \app \aFuncVar \app \counterV \app
    (\accV \app \binAddInfix \app {F(\tobin \app \counterV)} )\\
    \main \app \aFuncVar \app \avar
    &\arrz
    \compute \app \aFuncVar \app (\lengthOf \app \avar) \app \nil
  \end{align*}
  That is,
  if we want to compute \( \aTTFunc(f,x) \) we simply reduce the term
  \( \main \app \funS_f \app \encode{x} \) to normal form.
  By \cref{thm:main}, to show that this system is in \( \bffT \) via our rewriting formalism
  we need to exhibit a \cs{} tuple interpretation for it that
  satisfies Definition~\ref{def:polynomially-bounded-int}.
  The interpretation functions for \( \bito \), \( \biti \), \( \nil \) and
  \( \consInfix \) from \cref{bff:implement-binary} satisfy the requirements
  (with $c = 1$); in addition let us set the following interpretations:
  \begin{align*}
  \funcinterpretsize{\zero} &= 0 &
  \funcinterpretcost{\zero} &= 0 &
  \funcinterpretsize{\suc} &= \fatlambda x.1+x &
  \funcinterpretcost{\suc} &= \fatlambda x.0 \\
  \funcinterpretsize{\lengthOf} &= \fatlambda x.x &
  \funcinterpretcost{\lengthOf} &= \fatlambda x.1+x &
  \funcinterpretsize{\tobin} &= \fatlambda x.1+x &
  \funcinterpretcost{\tobin} &= \fatlambda x.7 * x^2 + 3 * x + 1
  \end{align*}
  \begin{align*}
  \funcinterpretsize{\compute} &= \fatlambda F j a. j + \max(a,F(j)) \\
  \funcinterpretcost{\compute} &= \fatlambda F^\cost F^\size j a.
    4 * j^3 + 7 * j * \max(a, F^\size(j)) + j * F^\cost(j) + 1 \\
  \funcinterpretsize{\main} &= \fatlambda F x.x + F(x) \\
  \funcinterpretcost{\main} &= \fatlambda F^\cost F^\size x.
    4 * x^3 + 7 * x * F^\size(x) + x * F^\cost(x) + x + 3
  \end{align*}
  This orients all rules, and \( \funcinterpretcost{\main} \) is a polynomial.
\end{example}

\section{Soundness}\label{sec:soundness}

In order to prove soundness,
let us consider a fixed finite orthogonal STRS \( \rules \) with a distinguished
function symbol \( \funF \) which admits a polynomially bounded \cs{}
interpretation, %
that computes a type-2 functional \( \aTTFunc \).
We proceed to show that \( \aTTFunc \) is in \( \bffT \) roughly as follows.
\begin{enumerate}
  \item
  Since \( \rules \) computes \( \aTTFunc \) and admits a polynomially bounded
  interpretation,
  we show that so does the extended system \( \rulesExtd \)
  (Definition~\ref{def:type-2-rw-computability}).
  The restriction on \( \funcinterpretsize{\consInfix} \)
  (Definition~\ref{def:polynomially-bounded-int}) implies that
  \( \costinterpret{\funF \app \funS_f \app \encode{w}} \)
  is bounded by a second-order polynomial over \( |f|, |w| \).
  We show this in Lemma~\ref{lemma:rules-extd-compatible}.
  By compatibility (Corollary~\ref{cor:compatibility}),
  we can do at most polynomially many steps when reducing \( \funF \app \funS_f \app \encode{w} \).

  \item The cost polynomial restricts the size of any input that the
    function variable \( F \) is applied to (e.g., a cost bound
    of \( 3 + F^\cost(m) \) implies that \( F \) is never called on a term with
    size interpretation \( > m \)).
    This is the subject of Lemma~\ref{lemma:oracle-subterm-lemma}.
  \item Using the observations above,
  we then show that by graph rewriting we can simulate \( \rulesExtd \)
  and compute each \( \rulesExtd \)-reduction step in polynomial time on an OTM\@.
  This guarantees that \( \aTTFunc \) is in \( \bffT \), Theorem~\ref{thm:soundness}.
\end{enumerate}

\subsection{Interpreting The Extended STRS, Polynomially}\label{subsec:interpretextd}

Our first goal is to provide a polynomially bounded \cs{} interpretation
to the extended system \( \rulesExtd \).
We start with the observation that
the size interpretation of words in \( W \) is proportional to their length.
Indeed, since \( \funcinterpretsize{\consInfix} = \fatlambda xy.x+y+c \)
(Definition~\ref{def:polynomially-bounded-int})
let
\( \mu := \max(\funcinterpretsize{\bito},\funcinterpretsize{\biti}) + c \)
and
\( \nu := \funcinterpretsize{\nil} \).
Consequently, for all \( w \in \bffDom \):
\begin{equation}\label{encodebounds}
|w| \leq \sizeinterpret{\encode{w}} \leq \mu * |w| + \nu
\end{equation}

Recall that by Definition~\ref{def:rw-oracle-simulation} the extended system
\( \rulesExtd \) has possibly infinitely many rules of the form
\( \funS_f \app \encode{w} \arrz \encode{f(w)} \).
Such rules \( \funS_f \) represent calls for an oracle to compute \( f \) in a single step.
Thus, we set their cost to 1.
The size should be given by the length of the oracle output, taking the overhead
of interpretation into account.
Hence, we obtain:
\[
\funcinterpretcost{\sfSymbol} = \fatlambda x. 1
\quad\quad
\funcinterpretsize{\sfSymbol} = \fatlambda x. \mu * |f|(x) + \nu
\]
Recall that \( |f|(n) = \max\limits_{{|y|} \leq n} |f(y)| \).
Hence, \( |f| \) is weakly monotonic, and therefore so is \( \funcinterpretsize{\sfSymbol} \).
This orients the rules in \( \rules_f \) because
\( \costinterpret{\sfSymbol \app \encode{w}} = 1 > 0 = \totalcost{\encode{f(w)}}
\), and
\( \sizeinterpret{\sfSymbol \app \encode{w}} = \mu * |f|(\sizeinterpret{\encode{
  w}}) + \nu \geq \mu * |f|(|w|) + \nu \geq \mu * |f(w)| + \nu \) by definition of
  \( |f| \), which is superior or equal to \( \sizeinterpret{\encode{f(w)}} \).

\begin{lemma}\label{lemma:rules-extd-compatible}
There exists a second-order polynomial \( D \) so that
\( D(|f|,|w|) \) bounds the derivation height of \( \funF \app \sfSymbol \app
\encode{w} \) for any \(f \in \bffDom \arrfunc \bffDom \) and \( w \in
\bffDom \).
\end{lemma}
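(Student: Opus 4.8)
The plan is to read this off directly from the Innermost Compatibility Theorem, since the preceding discussion has already supplied every ingredient. The one thing I would first make explicit is that extending the given interpretation \( \aDom \) with \( \funcinterpretcost{\sfSymbol} = \fatlambda x.1 \) and \( \funcinterpretsize{\sfSymbol} = \fatlambda x.\mu * |f|(x) + \nu \) produces a genuine \cs{} interpretation of \( \syntaxSig_{\vec{f}} \) that orients every rule of \( \rulesExtd = \rules \cup \rules_f \). Both new components are weakly monotonic because \( |f| \) is non-decreasing; the rules of \( \rules \) are oriented by hypothesis; and each oracle rule \( \sfSymbol \app \encode{w} \arrz \encode{f(w)} \) is oriented by the cost inequality \( 1 > 0 \) and the size inequality verified just above.

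With orientation in hand, I would invoke \cref{thm:compatibility} and its corollary that every innermost reduction starting from a term \( s \) has length at most \( \totalcostprime{s} \). Hence the derivation height of \( \funF \app \sfSymbol \app \encode{w} \) is bounded by \( \totalcostprime{\funF \app \sfSymbol \app \encode{w}} \). Because \( \totalcostprime{\cdot} \) sums cost interpretations over a subset of the base-type subterms that \( \totalcost{\cdot} \) ranges over, and all these values lie in \( \Nat \), this is at most \( \totalcost{\funF \app \sfSymbol \app \encode{w}} \). Every base-type proper subterm here sits inside \( \encode{w} \) and so has cost interpretation \( 0 \) (since \( \funcinterpretcost{\consInfix} = \fatlambda xy.0 \) and \( \funcinterpretcost{\bito} = \funcinterpretcost{\biti} = \funcinterpretcost{\nil} = 0 \)), whence
\[
\totalcost{\funF \app \sfSymbol \app \encode{w}} = \costinterpret{\funF \app \sfSymbol \app \encode{w}} = \funcinterpretcost{\funF}(\funcinterpretcost{\sfSymbol}, \funcinterpretsize{\sfSymbol}, \sizeinterpret{\encode{w}}).
\]

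I would then close by chaining with the already-derived bound \( D(|f|,|w|) \geq \funcinterpretcost{\funF}(\funcinterpretcost{\sfSymbol}, \funcinterpretsize{\sfSymbol}, \sizeinterpret{\encode{w}}) \), giving derivation height \( \leq D(|f|,|w|) \). There is no real obstacle here; the only step that deserves care is confirming that \( D(F,n) := P(\fatlambda x.1, \fatlambda x.\mu * F(x) + \nu, \mu * n + \nu) \) is indeed a second-order polynomial. This I would justify by closure of \( \polySet{F;n} \) under substituting (second-order) polynomials for the function indeterminates and a polynomial for the numeric one, noting that each occurrence \( F^\size(Q) \) in \( P \) is thereby turned into \( \mu * F(Q') + \nu \) with \( Q' \) again a polynomial in \( F \) and \( n \).
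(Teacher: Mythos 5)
Your proposal is correct and takes essentially the same route as the paper: extend the interpretation to the oracle rules via \( \funcinterpretcost{\sfSymbol} = \fatlambda x.1 \) and \( \funcinterpretsize{\sfSymbol} = \fatlambda x.\mu * |f|(x) + \nu \), check weak monotonicity and orientation of \( \rules_f \), and then combine \cref{thm:compatibility} with \( D(F,n) := P(\fatlambda x.1, \fatlambda x.\mu * F(x) + \nu, \mu * n + \nu) \) and the bound \( \sizeinterpret{\encode{w}} \leq \mu * |w| + \nu \). The extra details you supply --- the inequality \( \totalcostprime{\funF \app \sfSymbol \app \encode{w}} \leq \totalcost{\funF \app \sfSymbol \app \encode{w}} \), the observation that all proper base-type subterms have cost \( 0 \), and the closure-under-substitution argument showing \( D \) is indeed a second-order polynomial --- merely make explicit steps the paper leaves implicit.
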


\begin{proof}
As \( \funcinterpretcost{\funF} \) is bounded by a second-order polynomial
\( \fatlambda F^c F^s x.P \), we can let \( D(F,n) :=
P(\fatlambda x.1,\fatlambda x.\mu*F(x)+\nu,\mu*n+\nu) \).  Then \( D \) is a
second-order polynomial, and \( D(|f|,|w|) \geq \funcinterpretcost{\funF}(
\funcinterpretcost{\sfSymbol},\funcinterpretsize{\sfSymbol},\sizeinterpret{
\encode{w}}) = \totalcost{\funF \app \sfSymbol \app \encode{w}} \)
(we omit \( \alpha \) and \( \zeta \) here because there are no variables in
\( \funF \app \sfSymbol \app \encode{w} \), and therefore the valuations are
not relevant to the \( \mathsf{cost} \) value).
By \cref{cor:compatibility}, this serves as a bound on the derivation height
of \( \funF \app \sfSymbol \app \encode{w} \).
\end{proof}

Notice that this lemma does not imply that \( \aTTFunc \) is in \( \bffT \).
It only guarantees that there is a polynomial bound to the
\textit{number of rewriting steps} for such systems.
However, it does not immediately follow that this number is a reasonable bound
for the actual computational cost of simulating a reduction on an OTM.
Consider for example a rule \( \afun \app (\suc \app n) \app t \arrz \afun \app n \app (\consFont{c} \app t \app t) \).
Every step doubles the size of the term.
A naive implementation --- which copies the duplicated term in each step --- would take exponential time.
Moreover, a single step using the oracle can create a very large output, which is not considered part of the cost of the reduction, even though an OTM would be unable to use it without first fully reading it.
Therefore,
in order to prove soundness,
we show how to realize a
reasonable implementation of rewriting w.r.t.~OTMs.
In essence, we will show that
(1) oracle calls are not problematic in the presence of polynomially bounded interpretations, and
(2) we can handle duplication with an appropriate representation of rewriting. This is very
much in the style of what has been done for first-order rewriting and the $\lambda$-calculus
in the past~\cite{AvanziniMoser,DalLagoMartini,AccattoliDalLago}.

\subsection{Bounding The Oracle Input}\label{subsec:oraclebound}

We first show that calling the oracle along a computation does not introduce
an exponential overhead along the way.
More precisely,
we will show that there
exists a second-order polynomial \( B \) such that if an oracle call
\( \sfSymbol \app \encode{x} \) occurs anywhere along the reduction \( \funF \app
\sfSymbol \app \encode{w} \arrzT \encode{v} \), then \( |x| \leq B(|f|,|w|) \).
From this, we know that the growth of the  overall term size during
an oracle call is at most \( |f|(B(|f|, |w|)) \).

Let \( P \) again be any polynomial bounding \( \funcinterpretcost{\funF} \).
Since \( P \) is a second-order polynomial, each occurrence of a sub-expression
\( F^c(E) \) in \( P \) is itself a second-order polynomial, and so is \( E \).
Let us enumerate these arguments as \( E_1, \dots, E_n \).
We can then construct the new polynomial \( Q \) as follows:
\[
Q \coloneq \sum\limits_i E_i \quad \text{where occurrences of}\ F^c(E_j)\
\text{inside}\ E_i\ \text{are replaced by}\ 1
\]

The idea here is that the polynomial \( Q \) sums up all those expressions \( E_i \)
(with \(1 \leq i \leq n \)) given to \( F^\cost \) as argument.  We do this because
intuitively, if during the computation a word \( \encode{v} \) is ever given to the oracle,
then the cost for the oracle computation must be accounted for in \( \funcinterpretcost{\funF} \).
Hence, \( \sizeinterpret{v} \) must be bounded by some \( E_i \), so certainly by the sum of all
\( E_i \).  We can safely replace occurrences of \( F^\cost(E_j) \) that occur inside another
\( E_i \) by \( 1 \) because \( F^\cost \) will be instantiated by \( \funcinterpretcost{\sfSymbol}
= \fatlambda x.1 \) (or, in Lemma~\ref{lem:technicalG}, by a function that maps
specific input to \( 1 \)).

Due to this replacement, \( F^\cost \) does not occur in \( Q \), so \( Q \) is only
parametrized by \( F^\size \) and some type-0 variable \( x \).
It is also possible for \( Q \) to be a constant polynomial, or to be parametrized
only by a type-0 variable.
Its final shape depends on the arguments provided
to the \( F^\cost \) in \( P \).
We let \( B(G,y) := Q(\fatlambda z.\mu * G(z) + \nu,\mu * y + \nu) \).

\begin{example}\label{example:building-poly-bounds-size}
  Let us illustrate the construction of such polynomial \( Q \).
  Consider the following polynomial \( P \).
  \[ P = \fatlambda F^c F^s x.x * F^\cost(3+F^\size(9*x)) + F^\cost(12) *
  F^\cost(3+x*F^\cost(2))+5, \]
  then \( Q \) is built by adding each of those arguments given to \( F^\cost \).
  We get
  \begin{align*}
    Q &= 3 + F^\size(9*x) + 12 + 3+x*1 + 2 \\
      &= 20 + F^\size(9*x) + x
  \end{align*}
  Finally, we construct the polynomial \( B \).
  \begin{align*}
    B(G,x) &= 20 + \mu * G(9 * (\mu * x + \nu)) + \nu + (\mu * x + \nu) \\
           &= 20 + 2 * \nu + G(9*\mu*x + 9*\nu) + \mu * x
  \end{align*}
\end{example}

Now \( B \) gives an upper bound to the argument values for \( F^\cost \) that
are considered: if a function differs from \( \funcinterpretcost{\sfSymbol} \)
only on argument values greater than \(  B(|f|,|w|) \), then we can use it in \( P \) and
obtain the same result.  Formally:

\begin{lemma}\label{lem:technicalG}
Fix \( f,w \).
Let \( G : \Nat \arrfunc \Nat \) with \( G(z) = \funcinterpretcost{\sfSymbol}(z)  \) if
\( z \leq B(|f|,|w|) \).
Then \( P(G,\funcinterpretsize{\sfSymbol},\sizeinterpret{\encode{w}}) =
P(\funcinterpretcost{\sfSymbol},\funcinterpretsize{\sfSymbol},\sizeinterpret{
\encode{w}}) \).
\end{lemma}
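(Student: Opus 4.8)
The plan is to exploit the fact that the two instantiations of \( P \) differ only in the type-1 slot \( F^\cost \) — on the left it is \( G \), on the right it is \( \funcinterpretcost{\sfSymbol} = \fatlambda x.1 \) — while \( F^\size \) is fixed to \( \funcinterpretsize{\sfSymbol} \) and the type-0 slot to \( \sizeinterpret{\encode{w}} \) on both sides. Since \( G \) and \( \funcinterpretcost{\sfSymbol} \) agree (both equal \( 1 \)) on every argument \( \leq B(|f|,|w|) \), it suffices to show that whenever \( F^\cost \) is applied during the evaluation of \( P \), its argument evaluates to a value \( \leq B(|f|,|w|) \). This is precisely what the construction of \( Q \), and hence of \( B \), was engineered to guarantee: \( Q \) sums up the arguments \( E_1,\dots,E_n \) handed to \( F^\cost \) in \( P \), having first neutralised every nested \( F^\cost \) call by replacing it with \( 1 \), and \( B(|f|,|w|) \) is \( Q \) evaluated with \( F^\size := \funcinterpretsize{\sfSymbol} \) and the type-0 slot set to \( \mu * |w| + \nu \geq \sizeinterpret{\encode{w}} \), the upper bound from~\eqref{encodebounds}.

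I would make this precise by an induction on the nesting depth of \( F^\cost \) applications inside \( P \). The invariant to establish, for every subexpression \( F^\cost(E) \) of \( P \), is: (i) \( E \) evaluates to the same value under both instantiations; (ii) that value is \( \leq B(|f|,|w|) \); and therefore (iii) \( F^\cost(E) \) evaluates to \( 1 \) under both instantiations. In the base case \( E \) contains no further \( F^\cost \) call, so it is a polynomial in \( F^\size \) and the type-0 variable only; it thus takes the same value on both sides, giving (i), and that value is a summand of \( Q \) evaluated at the larger type-0 argument, so (ii) follows from weak monotonicity of second-order polynomials with natural-number coefficients together with \( \sizeinterpret{\encode{w}} \leq \mu * |w| + \nu \). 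The inductive step is identical once the inner \( F^\cost \) calls have been shown, by the induction hypothesis, to evaluate to \( 1 \) on both sides: replacing each of them by \( 1 \) turns \( E \) into exactly the corresponding \( Q \)-summand, so (i) holds and (ii) again follows from monotonicity. Because \( G \) and \( \funcinterpretcost{\sfSymbol} \) coincide on \( [0,B(|f|,|w|)] \), claim (iii) is immediate from (i)--(ii).

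Finally I would conclude by a straightforward structural induction on \( P \): it is built from \( + \), \( * \), constants, the type-0 variable, applications \( F^\size(\cdot) \), and applications \( F^\cost(\cdot) \); the arithmetic constructors and the \( F^\size(\cdot) \) applications are evaluated identically on both sides since \( F^\size \) and the type-0 argument are fixed, while every \( F^\cost \) application yields the same value \( 1 \) on both sides by the invariant above. Hence \( P \) takes the same value under the two instantiations, which is exactly the claimed equality.

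The delicate point is the interaction between nesting and the bound, and it is precisely what the definition of \( Q \) handles: one must argue that, under the actual instantiation, each argument \( E_i \) genuinely collapses to its \( Q \)-summand (the version with nested \( F^\cost(E_j) \) replaced by \( 1 \)), so that the bound read off from \( Q \) is legitimate. Getting the induction invariant right — carrying the agreement of the two instantiations (i) and the bound (ii) \emph{together}, rather than attempting the bound in isolation — is the main thing to be careful about; everything else reduces to monotonicity of second-order polynomials and the encoding bound~\eqref{encodebounds}.
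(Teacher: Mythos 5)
Your proposal is correct and follows essentially the same route as the paper, which proves the lemma by induction on the form of \( P \), using precisely the observation that \( G \) is never applied to arguments larger than \( B(|f|,|w|) \) because every argument handed to \( F^\cost \) collapses to its corresponding \( Q \)-summand and is thus bounded via monotonicity and \( \sizeinterpret{\encode{w}} \leq \mu * |w| + \nu \). The paper leaves this as a one-line sketch; your nesting-depth invariant, carrying agreement of the two instantiations together with the bound, is a faithful and careful elaboration of exactly that argument.
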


This is proved by induction on the form of \( P \), using that \( G \) is never
applied on arguments larger than \( B(|f|,|w|) \).
Lemma~\ref{lem:technicalG} is used in the following key result:

\begin{lemma}[Oracle Subterm Lemma]\label{lemma:oracle-subterm-lemma}
    Let \( f : \bffDom \arrfunc \bffDom \) be a type-1 function and
    \( w \in \bffDom \).
    If \( \funF \app \funS_f \app \encode{w}
    \arrz_{\rulesExtd}^* C[\sfSymbol \app \encode{x}] \) for some context
    \( C \), then \( |x| \leq B(|f|,|w|) \).
\end{lemma}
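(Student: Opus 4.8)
The plan is to argue by contradiction, exploiting the freedom we have in choosing the cost interpretation of the oracle symbol $\sfSymbol$. Suppose $\funF \app \sfSymbol \app \encode{w} \arrz_{\rulesExtd}^* C[\sfSymbol \app \encode{x}]$ but $|x| > B(|f|,|w|)$. The idea is to replace $\funcinterpretcost{\sfSymbol} = \fatlambda z.1$ by a weakly monotonic function $G : \Nat \arrfunc \Nat$ that still equals $1$ on every input up to $B(|f|,|w|)$, but takes the huge value $N+1$ on every input strictly above $B(|f|,|w|)$, where $N := P(\funcinterpretcost{\sfSymbol},\funcinterpretsize{\sfSymbol},\sizeinterpret{\encode{w}})$ is the finite cost bound computed from the original interpretation and $P$ is any polynomial bounding $\funcinterpretcost{\funF}$.

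First I would check that swapping $\funcinterpretcost{\sfSymbol}$ for $G$ still yields a \cs{} interpretation orienting every rule of $\rulesExtd$. The rules of $\rules$ are untouched, since $\sfSymbol$ is fresh and does not occur in them, and their cost condition quantifies over all valuations of the function variable anyway; for a rule $\sfSymbol \app \encode{n} \arrz \encode{m}$ in $\rules_f$ the cost requirement is merely $G(\sizeinterpret{\encode{n}}) > 0$, which holds because $G \geq 1$ everywhere, while the size requirement is unchanged as we do not modify $\funcinterpretsize{\sfSymbol}$. The chosen $G$ is weakly monotonic, so it is a legitimate element of $\costInt{\word \arrtype \word} = \Nat \arrfuncwm \Nat$. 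Hence Theorem~\ref{thm:compatibility} applies to the modified interpretation, and along the reduction the quantity $\totalcostprime{\cdot}$ (computed with $G$) does not increase: $\totalcostprime{\funF \app \sfSymbol \app \encode{w}} \geq \totalcostprime{C[\sfSymbol \app \encode{x}]}$.

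Then the two bounds collide. The only base-type non-variable subterm of $\funF \app \sfSymbol \app \encode{w}$ that is not in normal form is the whole term itself (the encoding $\encode{w}$ and all its subterms being data already in normal form), so $\totalcostprime{\funF \app \sfSymbol \app \encode{w}} = \costinterpret{\funF \app \sfSymbol \app \encode{w}} = \funcinterpretcost{\funF}(G,\funcinterpretsize{\sfSymbol},\sizeinterpret{\encode{w}}) \leq P(G,\funcinterpretsize{\sfSymbol},\sizeinterpret{\encode{w}})$; by Lemma~\ref{lem:technicalG} this equals $P(\funcinterpretcost{\sfSymbol},\funcinterpretsize{\sfSymbol},\sizeinterpret{\encode{w}}) = N$. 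On the other side, in $C[\sfSymbol \app \encode{x}]$ the subterm $\sfSymbol \app \encode{x}$ is non-variable, of base type, and a redex (hence not in normal form), so it contributes its own cost: $\totalcostprime{C[\sfSymbol \app \encode{x}]} \geq \costinterpret{\sfSymbol \app \encode{x}} = G(\sizeinterpret{\encode{x}})$. Since $|x| > B(|f|,|w|)$ gives $\sizeinterpret{\encode{x}} \geq |x| > B(|f|,|w|)$ by \eqref{encodebounds}, we get $G(\sizeinterpret{\encode{x}}) = N+1$. Chaining these yields $N+1 \leq \totalcostprime{C[\sfSymbol \app \encode{x}]} \leq \totalcostprime{\funF \app \sfSymbol \app \encode{w}} = N$, a contradiction, so $|x| \leq B(|f|,|w|)$.

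The delicate step, and the one I would take the most care over, is the application of Lemma~\ref{lem:technicalG}: it is precisely what guarantees that inflating $G$ above $B(|f|,|w|)$ leaves $P(G,\dots)$ unchanged, so the ``detector'' $G$ is invisible to the cost bound of the source term while still being triggered by any oversized oracle call appearing downstream. This is where the construction of $Q$ and $B$ as the sum of all arguments fed to $F^\cost$ pays off. Everything else is routine bookkeeping about which subterms contribute to $\totalcostprime{\cdot}$ and the verification that $G$ can be taken weakly monotonic.
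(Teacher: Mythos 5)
Your proof is correct, and it takes a genuinely different --- and leaner --- route than the paper's. Both arguments proceed by contradiction and both pivot on Lemma~\ref{lem:technicalG} applied to a piecewise ``detector'' cost for the oracle (equal to \( 1 \) up to the threshold \( B(|f|,|w|) \), huge above it), but you realize the detector denotationally where the paper realizes it operationally. The paper builds a modified system: a fresh symbol \( \sfSymbol' \) whose rules take \( N+2 \) actual rewrite steps on oversized inputs (via the countdown \( \helper \app \unary{N} \)), transfers the assumed reduction to \( \rules \cup \rules_{f,w}' \) (using that \( \sfSymbol \) and \( \sfSymbol' \) compute the same function), and contradicts the derivation-height bound \( N = D(|f|,|w|) \) obtained from compatibility. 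You instead leave \( \rulesExtd \) untouched and swap only \( \funcinterpretcost{\sfSymbol} \) for \( G \); your re-verification that the modified interpretation still orients everything is exactly the right check (the rules of \( \rules \) do not mention \( \sfSymbol \), and the \( \rules_f \) rules need only cost \( > 0 \) and an unchanged size component), and the contradiction then falls out of Theorem~\ref{thm:compatibility} alone: \( \totalcostprime{\funF \app \sfSymbol \app \encode{w}} \leq P(G,\funcinterpretsize{\sfSymbol},\sizeinterpret{\encode{w}}) = N \) by Lemma~\ref{lem:technicalG}, while the reduct contributes at least \( G(\sizeinterpret{\encode{x}}) = N+1 \) because \( \sfSymbol \app \encode{x} \) is a base-type redex (here it is worth saying explicitly that \( f \) is total, so \( \rules_f \) contains a rule for \emph{every} input, which is what makes \( \sfSymbol \app \encode{x} \) reducible) and \( \sizeinterpret{\encode{x}} \geq |x| > B(|f|,|w|) \) by \eqref{encodebounds}. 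What each approach buys: yours avoids the auxiliary symbols, their orientation checks, and the slightly delicate transfer of the reduction between two systems, yielding a purely interpretation-level argument; the paper's version keeps the oracle's cost interpretation tied to the genuine operational step count of concrete rules, exhibiting a run that literally takes too many steps, but that operational detour is not needed for the contradiction. Two trivial edge cases you pass over silently are harmless but worth a half-sentence each: in the zero-step case \( \sfSymbol \app \encode{x} \) cannot occur as a subterm of \( \funF \app \sfSymbol \app \encode{w} \) (so the claim is vacuous there), and your identity \( \totalcostprime{\funF \app \sfSymbol \app \encode{w}} = \costinterpret{\funF \app \sfSymbol \app \encode{w}} \) presupposes that the start term is not in normal form, which holds in every non-vacuous instance.
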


\begin{proof}
By way of a contradiction, suppose there exist \( f, w \), and \( x \) such that
\( \funF \app \sfSymbol \app \encode{w} \arrz_{\rulesExtd}^* C[\sfSymbol \app
\encode{x}] \) for some context \( C \), and \( |x| > B(|f|, |w|) \).
Let us now construct an alternative oracle:
let \( \zero \hasType \nat, \suc \hasType \nat \arrtype \nat,
\sfSymbol' \hasType \word \arrtype \word \) and \( \helper \hasType \nat \arrtype
\nat \arrtype \nat \), and for \( N := D(|f|,|w|) \), let \( \rules_{f,w}' \) be
given by:
\[
\begin{array}{rcllcrcl}
\sfSymbol' \app \encode{x} & \arrz & \encode{f(x)} & \text{if}\ |x| \leq
  B(|f|, |w|) & \quad\quad &
\helper \app \zero \app \bvar & \arrz & \bvar \\
\sfSymbol' \app \encode{x} & \arrz & \helper \app \unary{N} \app
  \encode{f(x)} & \text{otherwise} & \quad\quad &
\helper \app (\suc \app \avar) \app \bvar & \arrz &
  \helper \app \avar \app \bvar \\
\end{array}
\]
Where \( \unary{N} \) is the unary number encoding of \( N \),
as introduced in Section~\ref{subsec:hotr}.
Notice that by definition the rules for \( \sfSymbol' \) will produce
\( \encode{f(x)} \) in a single step if \( |x| \leq B(|f|, |w|) \)
but they will take \(N+2\) steps otherwise.
Also observe that \( \funS_f \) and \( \funS_f' \) behave the same; that is,
\( \funS_f \app \encode{x} \) and \( \funS_f' \app \encode{x} \) have the
same normal form on any input \( \encode{x} \).
We extend the interpretation function of the original signature with:
\[
\funcinterpretcost{\funS_f'} = \fatlambda \avar.\left\{
\begin{array}{ll}
1 & \text{if}\ \avar \leq B(|f|, |n|) \\
N+2 & \text{if}\ \avar > B(|f|, |n|) \\
\end{array}
\right.
\quad\quad
\funcinterpretsize{\funS_f'} = \funcinterpretsize{\funS_f}
\]
\[
\begin{array}{rclcrclcrclcrcl}
\funcinterpret{\helper}^\cost & = & \fatlambda x y. x + 1 & \quad &
\funcinterpret{\helper}^\size & = & \fatlambda x y. y & \quad &
\funcinterpret{\zero}^\size & = & 0 & \quad &
\funcinterpret{\suc}^\size & = & \fatlambda x. x + 1 \\
\end{array}
\]
We easily see that this orients all rules in \( \rules_{f,w} \).
Then, by Lemma~\ref{lem:technicalG}, \( \totalcost{\funF \app \sfSymbol' \app
\encode{w}} \leq P(\funcinterpretcost{\sfSymbol'},\funcinterpretsize{\sfSymbol'},
\sizeinterpret{\encode{w}}) = P(\funcinterpretcost{\sfSymbol},\funcinterpretsize{
\sfSymbol},\sizeinterpret{\encode{w}}) \leq D(|f|,|w|) = N \).
Yet, as we have \( \funF \app \sfSymbol \app \encode{w} \arrz_{\rulesExtd}^*
C[\sfSymbol \app \encode{x}] \), we also have \( \funF \app \sfSymbol \app
\encode{w} \arrz_{\rules \cup \rules_{f,w}'} C'[\sfSymbol' \app \encode{x}] \),
where \( C' \) is obtained from \( C \) by replacing all occurrences of
\( \sfSymbol \) by \( \sfSymbol' \).
Since \( |x| > B(|f|,|w|) \) by assumption, the reduction
\( \funF \app \sfSymbol' \app \encode{w} \arrz_{\rules \cup \rules_{f,w}'}^*
C[\sfSymbol' \app \encode{w}] \arrz_{\rules \cup \rules_{f,w'}}^*
C[\encode{f(x)}] \) takes strictly more than \( N \) steps,
contradicting Corollary~\ref{cor:compatibility}.
\end{proof}

\subsection{Graph Rewriting}\label{subsec:graphs}

Lemma~\ref{lemma:rules-extd-compatible} guarantees that if \( \rules \) is
compatible with a suitable interpretation, then at most polynomially many
\( \rulesExtd \)-steps can be performed starting in \( \funF \app \sfSymbol \app
\encode{w} \).
However, as observed in Section~\ref{subsec:interpretextd}, this does not yet
imply that a type-2 functional computed by an STRS with such an
interpretation is in \( \bff \).  To simulate a reduction on an OTM,
we must find a representation whose size does not increase too much in any
given step.
The answer is \emph{graph rewriting}.

\begin{definition}
A \defName{term graph} for a signature \( \signature \) is a tuple
\( (\gVert,\gLab,\gSucc,\gRoot) \) with
  \( \gVert \) a finite nonempty set of vertices;
  \( \gRoot \in \gVert \) a designated vertex called the \emph{root};
  \( \gLab : \gVert \arrfunc \signature \cup \{ \gApp \} \)
  a partial function with \( \gApp \) fresh; and
  \( \gSucc : \gVert \arrfunc \gVert^* \) a total function such that
  \( \gSucc(\aVert) = \aVert_1\aVert_2 \) when \( \gLab(\aVert) = \gApp \)
  and \( \gSucc(\aVert) = \gEmptyW \) otherwise.
We view this as a directed graph, with an edge from \( v \) to \( v' \)
if \( v' \in \gSucc(v) \), and require that this graph is \emph{acyclic} (i.e.,
there is no path from any \( v \) to itself).
Given term graph \( \aGraph \), we will often directly refer to
\( \gVert_\aGraph, \gLab_\aGraph \), etc.
\end{definition}

\begin{figure}[H]
\centering
\begin{subfigure}[b]{0.29\textwidth}
  \begin{tikzpicture}[node distance=0.3cm]
      \node[draw, circle] (0) at (0,0) {\(v_0 : \gApp \)};
      \node (1) [below left=of 0] {\( v_1 : \gApp \)};
      \node (2) [below =of 1] {\( v_2 : \add \)};
      \node (3) [right=of 2] {\( v_3 : \gVar \)};
      \node (4) [below right=of 0] {\( v_4 : \gVar \)};
      \draw [->] (0) -- (1);
      \draw [->] (1) -- (2);
      \draw [->] (1) -- (3);
      \draw [->] (0) -- (4);
  \end{tikzpicture}
\caption{}\label{fig:termgraph:full}
\end{subfigure}
\begin{subfigure}[b]{0.16\textwidth}
  \begin{tikzpicture}[node distance=0.3cm]
      \node [draw, circle] (5) at (0,0) {\( \gApp \)};
      \node (6) [below left=of 5] {\(\gApp \)};
      \node (7) [below=of 6] {\(\add \)};
      \node (8) [right=of 7] {\(\gVar \)};
      \node (9) [below right=of 5] {\(\gVar \)};
      \draw [->] (5) -- (6);
      \draw [->] (6) -- (7);
      \draw [->] (6) -- (8);
      \draw [->] (5) -- (9);
  \end{tikzpicture}
\caption{}\label{fig:termgraph:simplified}
\end{subfigure}
\begin{subfigure}[b]{0.21\textwidth}
  \begin{tikzpicture}[node distance=0.3cm]
      \node[root] (0) at (0,0) {\(\gApp \)};
      \node (1) [below left=of 0] {\( \gApp \)};
      \node (2) [below left=of 1] {\( \add \)};
      \node (3) [right=of 2] {\( \gApp \)};
      \node (tmp) [below right=of 0] {};
      \node (4) [below of=0, node distance=2cm] {\( \gVar \)};
      \node (5) [below left =of 3] {\( \suc \)};
      \draw[->] (0) -- (1);
      \draw[->] (1) -- (2);
      \draw[->] (1) -- (3);
      \draw[->] (0) -- (4);
      \draw[->] (3) -- (5);
      \draw[->] (3) -- (4);
  \end{tikzpicture}
  \caption{}\label{fig:termgraph:shared1}
\end{subfigure}
\begin{subfigure}[b]{0.2\textwidth}
  \begin{tikzpicture}[node distance=0.3cm]
      \node[root] (0) at (0,0)        {\(\gApp \)};
      \node (1) [below left=of 0]     {\( \gApp \)};
      \node (11) [below left=of 1]    {\( \afun \)};
      \node (12) [below right=of 1]   {\( \gApp \)};
      \node (121) [below left=of 12]  {\( \bfun \)};
      \node (122) [below right=of 12] {\( \gVar \)};
      \draw[->] (0) -- (1);
      \draw[->] (0) -- (12);
      \draw[->] (1) -- (11);
      \draw[->] (1) -- (12);
      \draw[->] (12) -- (121);
      \draw[->] (12) -- (122);
  \end{tikzpicture}
  \caption{}\label{fig:termgraph:shared2}
\end{subfigure}
\caption{A term graph, its simplified version, and two graphs with sharing}\label{fig:termgraph}
\end{figure}

Term graphs can be denoted visually in an intuitive way.  For example, using
\( \signature \) from Example~\ref{ex:add}, the graph with
\( \gVert = \{v_0,\dots,v_4\} \),
\( \gLab = \{ v_0,v_1 \mapsto \gApp,\ v_2 \mapsto \add \} \),
\( \gSucc = \{ v_0 \mapsto v_1v_4,\ v_1 \mapsto v_2v_3,\ v_3,v_4,v_5 \mapsto
  \gEmptyW \} \) amd
\( \gRoot = v_0 \) is pictured in \cref{fig:termgraph:full}.
We use \( \gVar \) to indicate unlabeled vertices and a circle for \( \gRoot \).
We will typically omit vertex names, as done in \cref{fig:termgraph:simplified}.
Note that the definition allows multiple vertices to have the same vertex as successor;
these successor vertices with in-degree \( > 1 \) are \emph{shared}.
Two examples are denoted in Figures~\ref{fig:termgraph:shared1} and~\ref{fig:termgraph:shared2}.

Each term has a natural representation as a tree.
Formally, for a term \( s \) we let
\( \toGraph{s} = (\Pos(s),\gLab,\gSucc,\emptyPos) \) where
\( \gLab(p) = \gApp \) if \( s|_p = s_1 s_2 \) and
\( \gLab(p) = \afun \) if \( s|_p = \afun \);
\( \gLab(p) \) is not defined if \( s|_p \) is a variable;
and \( \gSucc(p) = (1 \cdot p)(2 \cdot p) \) if \( s|_p = s_1 \app s_2 \) and
\( \gSucc(p) = \gEmptyW \) otherwise.
Essentially, \( \toGraph{s} \) maintains the positioning structure of \( s \)
and forgets variable names.
For example, Figure~\ref{fig:termgraph:simplified} denotes both
\( \toGraph{\add \app x \app y} \) and \( \toGraph{\add \app x \app x} \).

Our next step is to \emph{reduce} term graphs using rules.  We limit interest to
\emph{left-linear} rules, which includes all rules in \( \rulesExtd \) (as
\( \rules \) is orthogonal, and the rules in \( \rules_f \) are ground).
To define reduction, we will need some helper definitions.

\begin{definition}
Let \( \aGraph = (\gVert,\gLab,\gSucc,\gRoot) \) and \( v \in \gVert \).
The \defName{subgraph} \( \restr{\aGraph}{v} \) of \( \aGraph \)
with root \( v \) is the term graph \( (\gVert',\gLab',\gSucc',v) \) where \( \gVert' \) contains those nodes \( v' \in \gVert \) such that there exists a path from \( v \) to \( v' \) and \( \gLab', \gSucc' \) are respectively the restrictions of \( \gLab \)
and \( \gSucc \) to \( \gVert' \).
\end{definition}

\begin{definition}\label{def:graph-homomorphism}
A \defName{homomorphism} between two term graphs
\( \aGraph \) and \( \bGraph \)
is a function \( \gHom : \gVert_\aGraph \arrfunc \gVert_\bGraph \)
with \( \gHom(\gRoot_{\aGraph}) = \gRoot_{\bGraph} \),
and for \( \aVert \in \gVert_\aGraph \) such that \( \gLab_\aGraph(v) \)
is defined,
\( \gLab_\bGraph(\gHom(\aVert)) = \gLab_\aGraph(\aVert) \) and
\( \gSucc_\bGraph(\gHom(\aVert)) = \gHom(\aVert_1) \ldots \gHom(\aVert_k) \)
when \( \gSucc_\aGraph(\aVert) = \aVert_1 \ldots \aVert_k \).
(If \( \gLab_\aGraph(\aVert) \) is undefined,
\( \gLab_\bGraph(\gHom(\aVert)) \) and \(\gSucc_\bGraph(\gHom(\aVert)) \)
may be anything.)
\end{definition}

\begin{definition}
  A \defName{redex} in \( \aGraph \) is a triple \( (\rho,\aVert,\gHom) \)
  consisting of some rule \( \rho = \ell \arrz r \in \rulesExtd \),
  a vertex \( \aVert \) in \( \gVert_\aGraph \),
  and a homomorphism \( \gHom : \toGraph{\ell} \arrfunc \restr{\aGraph}{\aVert} \).
\end{definition}

\begin{lemma}\label{lem:novarpath}
  Let \( (\ell \arrz r,\aVert,\gHom) \) be a redex in \( \aGraph \),
  \( x \in \vars{\ell} \), and let \( a_x \) be the corresponding vertex in
  \( \toGraph{\ell} \).
  Then \( \aVert \neq \gHom(a_x) \), nor is there a path from \( \gHom(a_x) \)
  to \( \aVert \).
\end{lemma}

\begin{proof}
Note that \( \ell \) is not a variable, so there is a (non-empty) path from
the root of \( \toGraph{\ell} \) to each \( a_x \); hence, by definition of
homomorphism, there is a path from \( \gHom(\gRoot_{\toGraph{\ell}}) =
\aVert \) to each \( \gHom(a_x) \).  Since \( \aGraph \) is acyclic, this means
that \( \gHom(a_x) \) cannot be \( \aVert \), nor can there be a path from
\( \gHom(a_x) \) back to \( \aVert \).
\end{proof}

\begin{definition}
    Let \( \aGraph \) be a term graph and \( \aVert_1, \aVert_2 \)
    vertices in \( \aGraph \) such that no path exists from \( \aVert_2 \) to
    \( \aVert_1 \).
    The \defName{redirection} of \( \aVert_1 \) to \( \aVert_2 \) is the term
    graph \( \aGraph[\aVert_1 \gRedirect \aVert_2] \equiv
    (\gVert_\aGraph, \gLab_\aGraph, \gSucc_{\aGraph}', \gRoot_\aGraph') \)
    with \( \gSucc_{\aGraph}' \) and \( \gRoot_\aGraph' \) given by:
    \[
    {\gSucc_{\aGraph}'(\aVert)}_i =
    \begin{cases*}
        \aVert_2, & if \( {\gSucc_\aGraph(\aVert)}_i = \aVert_1 \) \\
        {\gSucc_\aGraph(\aVert)}_i, & otherwise
    \end{cases*}
    \quad
    \gRoot_\aGraph' =
    \begin{cases*}
    v_2 & if \( \gRoot_\aGraph = v_1 \) \\
    \gRoot_\aGraph & otherwise \\
    \end{cases*}
    \]
\end{definition}
That is, we replace every reference to \( \aVert_1 \) by a reference to \( \aVert_2 \),
which does not introduce any cycles because there is no path from \( \aVert_2 \) to
\( \aVert_1 \).
With these definitions in hand, we can define \emph{contraction} of term graphs:

\begin{definition}\label{bff:def:graph-rewriting-contraction}
  Let \( \aGraph \) be a term graph, and \( (\rho,\aVert,\gHom) \) a redex in
  \( \aGraph \) with \( \rho \in \rulesExtd \), such that no other vertex \( v' \)
  in \( \restr{\aGraph}{\aVert} \) admits a redex (so \( v \) is an
  \emph{innermost redex position}).
  Denote \( a_x \) for the position of variable \( x \) in \( \ell \), and
  recall that \( a_x \) is a vertex in \( \toGraph{\ell} \).  By left-linearity,
  \( a_x \) is unique for \( x \in \vars{\ell} \).
  The \defName{contraction} of \( (\rho, \aVert, \gHom) \) in \( \aGraph \) is
  the term graph \( \dGraph \) produced after the following steps:
  \( \bGraph \) (building),
  \( \cGraph \) (redirection), and
  \( \dGraph \) (garbage collection).

  \begin{description}
  \item[(building)] Let \( \bGraph = (\gVert_\bGraph,\gLab_\bGraph,
    \gSucc_\bGraph,\gRoot_\aGraph) \) where:
    \begin{itemize}
    \item \( \gVert_\bGraph = \gVert_\aGraph \uplus \{ \overline{p} \in \Pos(r)
      \mid r|_p\ \text{is not a variable} \} \) (\( \uplus \) means
      disjoint union);
    \item for \( \aVert \in \gVert_\aGraph \): \( \gLab_\bGraph(\aVert) =
      \gLab_\aGraph(\aVert) \) and \( \gSucc_\bGraph(\aVert) = \gSucc_\aGraph(
      \aVert) \)
    \item for \( p \in \gVert_\bGraph \) with \( r|_p \) not a variable:
      \begin{itemize}
      \item \( \gLab_\bGraph(\overline{p}) = \afun \) if \( r|_p = \afun \) and
        \( \gLab_\bGraph(\overline{p}) = \gApp \) otherwise
      \item \( \gSucc_\bGraph(\overline{p}) = \gEmptyW \) if
        \( r|_p = \afun \); otherwise, \( \gSucc_\bGraph(\overline{p}) =
        \psi(1 \cdot p) \psi(2 \cdot p) \) \\
        Here, \( \psi(q) = \overline{q} \) if \( r|_q \) is not a variable; if
        \( r|_q = x \) then \( \psi(q) = \gHom(a_x) \).
      \end{itemize}
    \end{itemize}
  \item[(redirection)] If \( r \) is a variable \( x \)
    (so \( \bGraph = \aGraph \)), then let
    \( \cGraph = \aGraph[\aVert \gRedirect \gHom(a_x)] \).
    Otherwise, let \( \cGraph = \bGraph[\aVert \gRedirect \overline{\emptyPos}] \),
    so with all references to \( \aVert \) redirected to the root vertex for
    \( r \).
    (Note that in the first case, there is no path from \( \gHom(a_x) \) to
    \( \aVert \) by Lemma~\ref{lem:novarpath}; and in the second case, every path
    from \( \overline{\emptyPos} \) into \( \gVert_\aGraph \) passes through some
    \( \gHom(a_y) \), so again by Lemma~\ref{lem:novarpath} there is no path from
    \( \overline{\emptyPos} \) to \( \aVert \), and therefore the redirection is
    allowed.)
  \item[(garbage collection)] Let \( \dGraph := \restr{\cGraph}{\gRoot_\cGraph}
    \) (so remove unreachable vertices).
  \end{description}
We then write \( \aGraph \arrg \dGraph \) in one step,
and \( \aGraph \arrg^n \dGraph \) for the n-step reduction.
\end{definition}

We illustrate this with two examples.
First, we aim to rewrite the graph of
\cref{fig:reducibleterm} with a rule \( \add \app \zero \app y \arrz y \) at
vertex \( \aVert \).
Since the right-hand side is a variable, the building phase does nothing.
The result of the redirection phase is given in
\cref{fig:varredirect}, and the result of the garbage collection in
\cref{fig:vargarbage}.

\begin{figure}[H]
\centering
\hfill
\begin{subfigure}[b]{0.3\textwidth}
  \centering
  \begin{tikzpicture}[node distance=0.3cm]
    \node[root] at (0,0) (e) { \( \gApp \) };
    \node (1)   [below left=of e] {\( \suc \)};
    \node (2)   [below right=of e] {\(\aVert\): \( \gApp \)};
    \node (21)  [below left=of 2] {\( \gApp \)};
    \node (22)  [below right=of 2] {\( \gApp \)};
    \node (211) [below left=of 21] {\( \add \)};
    \node (212) [below right=of 21] {\( \zero \)};
    \node (221) [below left=of 22] {\( \suc \)};
    \draw [->] (e) -- (1);
    \draw [->] (e) -- (2);
    \draw [->] (2) -- (21);
    \draw [->] (2) -- (22);
    \draw [->] (21) -- (211);
    \draw [->] (21) -- (212);
    \draw [->] (22) -- (221);
    \draw [->] (22) to [out=-45,in=45,looseness=1] (212);
  \end{tikzpicture}
\caption{}\label{fig:reducibleterm}
\end{subfigure}
\hfill
\begin{subfigure}[b]{0.3\textwidth}
  \centering
  \begin{tikzpicture}[node distance=0.3cm]
    \node[root] at (0,0) (e) { \( \gApp \) };
    \node (1) [below left=of e] {\( \suc \)};
    \node (2) [below right=of e] {\(\aVert\): \( \gApp \)};
    \node (21) [below left=of 2] {\( \gApp \)};
    \node (22) [below right=of 2] {\( \gApp \)};
    \node (211) [below left=of 21] {\( \add \)};
    \node (212) [below right=of 21] {\( \zero \)};
    \node (221) [below left=of 22] {\( \suc \)};
    \draw [->] (e) -- (1);
    \draw [->] (2) -- (21);
    \draw [->] (2) -- (22);
    \draw [->] (21) -- (211);
    \draw [->] (21) -- (212);
    \draw [->] (22) -- (221);
    \draw [->] (22) to [out=-45,in=45,looseness=1] (212);
    \draw [redirect arrow] (e) to [out=-45,in=90,looseness=1] (22);
  \end{tikzpicture}
\caption{}\label{fig:varredirect}
\end{subfigure}
\hfill
\begin{subfigure}[b]{0.3\textwidth}
  \centering
  \begin{tikzpicture}[node distance=0.3cm]
    \node[root] at (0,0) (e) { \( \gApp \) };
    \node (1) [below left=of e] {\( \suc \)};
    \node (22) [below right=of e] {\( \gApp \)};
    \node (221) [below left=of 22] {\( \suc \)};
    \node (222) [below right=of 22] {\( \zero \)};
    \node (hidden) [below of=222, node distance=0.8cm] {};
    \draw [->] (e) -- (1);
    \draw [->] (e) -- (22);
    \draw [->] (22) -- (221);
    \draw [->] (22) -- (222);
  \end{tikzpicture}
\caption{}\label{fig:vargarbage}
\end{subfigure}
\caption{Reducing a graph with the rule \( \add \app \zero \app y \arrz y \)}
\end{figure}

Second, we consider a reduction of the graph in \cref{fig:multbase} by
\( \mult \app (\suc \app x) \app y \arrz \add \app y \app (\mult \app x \app y) \).
Unlike the previous example, this graph has sharing.
\cref{fig:multbuild} shows the
result of the building phase, with the vertices and edges added during this phase in red.
Redirection sets the root to the squared node
(the root of the right-hand side),
and the result after garbage collection is in \cref{fig:multcomplete}.
\begin{figure}[H]
  \centering
  \begin{subfigure}[b]{0.23\textwidth}
    \centering
      \begin{tikzpicture}[node distance=0.3cm]
          \node[root] at (0,0) (0) {\( \gApp \)};
          \node (1) [below left=of 0] { \( \gApp\)};
          \node (11) [below left=of 1 ] {\( \mult \)};
          \node (y) [below of=0, node distance=2cm] {\( \gApp \)};
          \node (121) [below left=of y] {\( \suc \)};
          \node (x) [below right=of y] {\( \zero \)};

          \draw[->] (0) -- (1);
          \draw[->] (0) -- (y);
          \draw[->] (1) -- (11);
          \draw[->] (1) -- (y);
          \draw[->] (y) -- (121);
          \draw[->] (y) -- (x);
      \end{tikzpicture}
  \caption{}\label{fig:multbase}
  \end{subfigure}
  \begin{subfigure}[b]{0.43\textwidth}
    \centering
      \begin{tikzpicture}[node distance=0.3cm]
          \node[root] at (0,0) (0) {\( \gApp \)};
          \node (1) [below left=of 0] { \( \gApp\)};
          \node (11) [below left=of 1 ] {\( \mult \)};
          \node (y) [below of=0, node distance=2cm] {\( \gApp \)};
          \node (121) [below left=of y] {\( \suc \)};
          \node (x) [below right=of y] {\( \zero \)};

          \draw[->] (0) -- (1);
          \draw[->] (0) -- (y);
          \draw[->] (1) -- (11);
          \draw[->] (1) -- (y);
          \draw[->] (y) -- (121);
          \draw[->] (y) -- (x);

          \node[rhsRoot,red] (0') at ([shift={(2.3cm,0.5cm)}]0) {\gApp};
          \node (1') [red,below left=of 0'] { \( \gApp\)};
          \node (11') [red,below left=of 1'] {\( \add \)};
          \node (2')  [red,below right=of 0'] {\( \gApp \)};
          \node (21') [red, below left=of 2' ] {\( \gApp \)};
          \node (211') [red,below left=of 21' ] {\( \mult \)};

          \draw[red,->] (0') -- (1');
          \draw[red,->] (0') -- (2');
          \draw[red,->] (1') -- (11');
          \draw[red,->] (1') to [out=-45,in=70,looseness=1] (y);
          \draw[red,->] (2') -- (21');
          \draw[red,->] (2') to [out=-45, in=20, looseness=1] (y);
          \draw[red,->] (21') -- (211');
          \draw[red,->] (21') to [out=-45,in=30,looseness=1] (x);
      \end{tikzpicture}
  \caption{}\label{fig:multbuild}
  \end{subfigure}
  \begin{subfigure}[b]{0.3\textwidth}
    \centering
      \begin{tikzpicture}[node distance=0.3cm]
          \node[root] (0') at ([shift={(3.5cm,0cm)}]0) {\gApp};
          \node (1') [below left=of 0'] { \( \gApp\)};
          \node (11') [below left=of 1'] {\( \add \)};
          \node (2')  [below right=of 0'] {\( \gApp \)};
          \node (21') [below left=of 2' ] {\( \gApp \)};
          \node (211') [below =of 21' ] {\( \mult \)};
          \node (y) [below of =1', node distance=1.8cm] {\( \gApp \)};
          \node (121) [below left=of y] {\( \suc \)};
          \node (x) [below right=of y] {\( \zero \)};

          \draw[->] (y) -- (121);
          \draw[->] (y) -- (x);

          \draw[->] (0') -- (1');
          \draw[->] (0') -- (2');
          \draw[->] (1') -- (11');
          \draw[->] (1') -- (y);
          \draw[->] (2') -- (21');
          \draw[->] (2') to [out=-45, in=45, looseness=1] (y);
          \draw[->] (21') -- (211');
          \draw[->] (21') to [out=-45,in=30,looseness=1] (x);
      \end{tikzpicture}
  \caption{}\label{fig:multcomplete}
  \end{subfigure}
  \hfill
  \caption{Reducing a term graph with substantial sharing}
\end{figure}

Note that, even when a term graph \( \aGraph \) is not a tree, we can find a
corresponding term: we assign a variable \( \mathit{var}(\aVert) \) to each
unlabeled vertex \( \aVert \) in \( \aGraph \), and let:
\[
  \theta(\aVert) =
  \left\{
  \begin{array}{ll}
  \theta(\aVert_1) \app \theta(\aVert_2) & \text{if}\
    \gLab(\aVert) = \gApp\ \text{and}\ \gSucc(\aVert) = \aVert_1\aVert_2 \\
  \afun & \text{if}\ \gLab(\aVert) = \afun \\
  \mathit{var}(\aVert) & \text{if}\ \gLab(\aVert)\ \text{is undefined} \\
  \end{array}
\right.
\]
Then we may define \( \fromGraph{\aGraph} = \theta(\gRoot_\aGraph) \).
For a linear term (that is, one in which no variable occurs more than once)
, clearly \( \fromGraph{\toGraph{s}} = s \) modulo variable renaming.
We make the following observation:

\begin{lemma}\label{lem:simulation}
Assume given a term graph \( \aGraph \) such that there is a path from
\( \gRoot_\aGraph \) to every vertex in \( \gVert_\aGraph \), and let
\( \fromGraph{\aGraph} = s \).
If \( \aGraph \arrg \bGraph \) then \( \fromGraph{\aGraph} \arrz_{\rulesExtd}^+
\fromGraph{\bGraph} \).
Moreover, if \( s \arrz_{\rulesExtd} t \) for some \( t \),
then there exists \( \bGraph \) such that \( \aGraph \arrg \bGraph \).
\end{lemma}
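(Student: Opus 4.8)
The plan is to prove the two directions separately, both resting on the same bookkeeping. The unravelling $\theta$ identifies each position of $s = \fromGraph{\aGraph}$ with a path from $\gRoot_\aGraph$, and hence identifies the subterm of $s$ at that position with the unravelling $\fromGraph{\restr{\aGraph}{\aVert}} = \theta(\aVert)$ of the vertex $\aVert$ reached at the end of the path. A vertex $\aVert$ of in-degree $k$ therefore produces exactly $k$ occurrences of the same subterm in $s$; these occurrences are pairwise disjoint because $\aGraph$ is acyclic (a proper subterm occurrence inside $\theta(\aVert)$ corresponds to a path of length $\geq 1$ out of $\aVert$, which by acyclicity cannot return to $\aVert$), and the reachability hypothesis guarantees $k \geq 1$ for every vertex.

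For the first direction, suppose $\aGraph \arrg \bGraph$ by contracting the innermost redex $(\rho,\aVert,\gHom)$ with $\rho = \ell \arrz r$. First I would show, by induction on $\ell$ from the homomorphism conditions of Definition~\ref{def:graph-homomorphism}, that $\theta(\aVert) = \ell\asub$ where $\asub(x) = \fromGraph{\restr{\aGraph}{\gHom(a_x)}}$; left-linearity is what makes the vertex $a_x$, and hence $\asub$, well defined for each $x \in \vars{\ell}$. Since $\aVert$ is an innermost redex position, no proper descendant admits a graph redex, which translates via $\theta$ into the statement that $\ell\asub$ has no proper redex subterm, i.e.\ it is an \emph{innermost} redex of $s$. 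Next I would read off the building and redirection phases: the fresh vertices $\overline{p}$ unravel to $r$ with each variable $x$ replaced by $\theta(\gHom(a_x)) = \asub(x)$, so the new root $\overline{\emptyPos}$ unravels to $r\asub$ (and in the degenerate case where $r$ is a variable and $\aVert$ is redirected to $\gHom(a_x)$, the target still unravels to $\asub(x) = r\asub$). After redirection and garbage collection, each of the $k$ occurrences of $\ell\asub$ in $s$ has been replaced by $r\asub$. As these occurrences are pairwise disjoint and each was innermost, they can be contracted one at a time, yielding exactly $k$ innermost steps $s \arrz_{\rulesExtd}^+ \fromGraph{\bGraph}$; since $k \geq 1$ by reachability, this is where the conclusion $\arrz^+$ (rather than a single step) comes from.

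For the second, progress, direction, suppose $s \arrz_{\rulesExtd} t$. Then $s$ contains a redex $\ell\asub$ at some position, which under the correspondence above determines a vertex $\aVert$ with $\theta(\aVert) = \ell\asub$; left-linearity again supplies a homomorphism $\toGraph{\ell} \arrfunc \restr{\aGraph}{\aVert}$, so $\aVert$ admits a graph redex. Because $\aGraph$ is finite and acyclic, I would then select a vertex $\aVert'$ admitting a graph redex such that no proper descendant of $\aVert'$ admits one; such $\aVert'$ is an innermost redex position, so the contraction of Definition~\ref{bff:def:graph-rewriting-contraction} is defined and produces some $\bGraph$ with $\aGraph \arrg \bGraph$.

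The main obstacle is the first direction: getting the unravelling bookkeeping precise enough to see one graph contraction as a block of $k$ parallel innermost term steps. The two facts that make this work are that left-linearity makes both $\asub$ and the matching homomorphism well defined, and that acyclicity makes the $k$ copies of the redex pairwise disjoint, hence independently contractible and each innermost; the remaining work, namely the induction showing that $\theta$ commutes with the building and redirection phases, is routine.
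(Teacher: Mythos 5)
The paper gives no proof of this lemma at all --- it is stated as an observation, with only its consequences spelled out afterwards --- so there is nothing to compare against line by line; your argument has to stand on its own, and in essence it does. Your route is the standard simulation argument for term graph rewriting: a single contraction at an innermost redex vertex \( \aVert \) unravels to a nonempty block of pairwise disjoint innermost term steps, one per occurrence of \( \theta(\aVert) \) in \( s \), and conversely a term redex of \( s \) lifts through \( \theta \) to a graph redex, after which finiteness and acyclicity let you pick an innermost one so that the contraction of Definition~\ref{bff:def:graph-rewriting-contraction} applies. Both key uses of the hypotheses (left-linearity for the well-definedness of \( a_x \), the substitution \( \asub(x) = \theta(\gHom(a_x)) \), and the lifted homomorphism; acyclicity for disjointness; reachability for \( k \geq 1 \)) are correctly placed, and you rightly handle the degenerate case where \( r \) is a variable.

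One bookkeeping slip should be repaired: the number \( k \) of occurrences of \( \theta(\aVert) \) in \( s \) is \emph{not} the in-degree of \( \aVert \) but the number of distinct paths from \( \gRoot_\aGraph \) to \( \aVert \). Sharing compounds multiplicatively along paths --- a vertex with two predecessors, each of which has two predecessors, produces four occurrences --- and the root itself has in-degree \( 0 \) yet one occurrence, so your claim that reachability guarantees ``\( k \geq 1 \)'' is only true under the path-count reading. Fortunately everything downstream of this sentence already uses the correct quantity: your disjointness argument (a proper subterm occurrence inside \( \theta(\aVert) \) would extend a root-path ending at \( \aVert \) by a nonempty path returning to \( \aVert \), contradicting acyclicity) and your replacement argument (redirection reroutes \emph{every} edge into \( \aVert \), hence every root-path formerly ending at \( \aVert \) now ends at the root of the instantiated right-hand side) are both statements about root-paths, not in-degrees, and they transfer verbatim. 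With that fix, and noting explicitly that contracting one of the \( k \) disjoint occurrences leaves the proper subterms of the remaining occurrences untouched --- so each remaining occurrence is still a legal innermost redex in the intermediate term, which is what licenses contracting them one at a time --- the proof is complete.
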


Consequently,
if \( \arrz_{\rulesExtd} \) is terminating, then so is \( \arrg \);
and if \( \toGraph{s} \arrg^n \aGraph \) for some ground term \( s \)
then \( s \arrz_{\rulesExtd}^* \fromGraph{\aGraph} \) in at least \( n \) steps.
Notice that
if \( \aGraph \) does not admit any redex,
then \( \fromGraph{\aGraph} \) is in normal form.
Moreover,
since
\( \rulesExtd = \rules \cup \rules_f \) is orthogonal
(as \( \rules \) is orthogonal and the \( \rules_f \) rules are non-overlapping)
and therefore confluent,
this is the \emph{unique} normal form of \( s \).
We conclude:

\begin{corollary}\label{cor:simulation}
If \( \toGraph{\funF \app \sfSymbol \app \encode{w}} \arrg^n \aGraph \), then
\( n \leq D(|f|,|w|) \);
and if \( \aGraph \) cannot be reduced by \( \arrg \),
then \( \fromGraph{\aGraph} = \encode{\aTTFunc(f,w)} \).
\end{corollary}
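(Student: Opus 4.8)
The plan is to obtain the corollary as a direct assembly of two previously established facts: the polynomial derivation-height bound of Lemma~\ref{lemma:rules-extd-compatible} and the faithful simulation recorded just after Lemma~\ref{lem:simulation}. Throughout I would write \( s = \funF \app \sfSymbol \app \encode{w} \), which is a ground term, so that the consequences of Lemma~\ref{lem:simulation} apply to it.

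First I would handle the step bound. From the hypothesis \( \toGraph{s} \arrg^n \aGraph \) and the observation following Lemma~\ref{lem:simulation}, a graph reduction of length \( n \) starting at \( \toGraph{s} \) lifts to a term reduction \( s \arrz_{\rulesExtd}^* \fromGraph{\aGraph} \) consisting of \emph{at least} \( n \) steps. On the other hand, Lemma~\ref{lemma:rules-extd-compatible} states that every \( \arrz_{\rulesExtd} \)-reduction issuing from \( s \) has length at most \( D(|f|,|w|) \). Combining these two inequalities immediately gives \( n \leq D(|f|,|w|) \).

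For the correctness claim, I would additionally assume that \( \aGraph \) admits no redex. By the observation that an irreducible term graph reads back to a normal form, \( \fromGraph{\aGraph} \) is then a normal form, and we retain \( s \arrz_{\rulesExtd}^* \fromGraph{\aGraph} \). Since \( \funF \) computes \( \aTTFunc \) (Definition~\ref{def:type-2-rw-computability}), we also have \( s \arrz_{\rulesExtd}^+ \encode{u} \) with \( u = \aTTFunc(f,w) \), where \( \encode{u} \) is a pure constructor term and hence itself a normal form. Because \( \rulesExtd \) is orthogonal --- \( \rules \) is orthogonal and the oracle rules \( \rules_f \) are ground and pairwise non-overlapping --- it is confluent, so \( s \) has a unique normal form. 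Therefore the two normal forms reachable from \( s \) must coincide, yielding \( \fromGraph{\aGraph} = \encode{u} = \encode{\aTTFunc(f,w)} \).

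Since every inference here is a direct appeal to an already proven result, I anticipate no substantial obstacle. The only points demanding care are using the ``at least \( n \) steps'' direction of the simulation (graph rewriting may fuse several term-level steps into one, so a one-to-one correspondence would be too strong to invoke), and confirming that orthogonality --- and hence confluence and uniqueness of normal forms --- genuinely transfers to the extended system \( \rulesExtd \), which is exactly the point noted immediately before the statement.
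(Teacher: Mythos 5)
Your proposal is correct and matches the paper's own justification essentially verbatim: the paper likewise obtains \( n \leq D(|f|,|w|) \) by combining the ``at least \( n \) steps'' direction of Lemma~\ref{lem:simulation} with the derivation-height bound of Lemma~\ref{lemma:rules-extd-compatible}, and obtains the second claim from the observation that a redex-free graph reads back to a normal form together with confluence of \( \rulesExtd \) (orthogonality of \( \rules \) plus non-overlapping ground oracle rules) and Definition~\ref{def:type-2-rw-computability}. Your two flagged points of care --- using only the ``at least \( n \)'' direction and checking that orthogonality transfers to \( \rulesExtd \) --- are exactly the subtleties the paper addresses in the paragraph preceding the corollary.
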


\subsection{Bringing Everything Together}

We are now ready to complete the soundness proof following the recipe at the
start of the section.  Towards the third bullet point, we make the following
observation.

\begin{lemma}\label{lem:increaseconstant}
There is a constant \( a \) such that, whenever \( \aGraph \arrg \bGraph \)
by a rule in \( \rules \), then \( |\bGraph| \leq |\aGraph| + a \), where
\( |\aGraph| \) denotes the total number of nodes in the graph \( \aGraph \).
\end{lemma}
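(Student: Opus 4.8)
The plan is to read the bound directly off the three-phase contraction procedure of \cref{bff:def:graph-rewriting-contraction}, observing that only the \emph{building} phase can create vertices and that the number it creates is determined entirely by the rule applied. Fix a rule $\rho = \ell \arrz r$ in $\rules$ and a contraction step $\aGraph \arrg \dGraph$ at an innermost redex $(\rho,\aVert,\gHom)$, with intermediate graphs $\bGraph$ (after building) and $\cGraph$ (after redirection). By definition of the building phase, $\gVert_\bGraph = \gVert_\aGraph \uplus \{ \overline{p} \in \Pos(r) \mid r|_p \text{ is not a variable} \}$, so exactly $a_\rho := |\{ p \in \Pos(r) \mid r|_p \notin \var \}|$ fresh vertices are added, namely one for each non-variable position of the right-hand side. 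The redirection phase leaves the vertex set unchanged (it only rewires successor pointers and possibly relocates the root), and garbage collection $\dGraph = \restr{\cGraph}{\gRoot_\cGraph}$ can only delete vertices. Therefore $|\dGraph| \leq |\bGraph| = |\aGraph| + a_\rho$.

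It remains to make this bound uniform. Since \cref{thm:main} restricts attention to a finite STRS $\trs$, the rule set $\rules$ is finite, so $a := \max_{\rho \in \rules} a_\rho$ is a well-defined natural number; substituting into the inequality above yields $|\bGraph| \leq |\aGraph| + a$ for every step $\aGraph \arrg \bGraph$ performed with a rule of $\rules$, as required.

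I would also record explicitly why the statement is deliberately confined to rules of $\rules$ and excludes the oracle rules of $\rulesExtd$: an oracle rule has the form $\funS_f \app \encode{n} \arrz \encode{m}$ with $m = f(n)$, whose right-hand side $\encode{m}$ encodes an output word of unbounded length, so contracting it introduces a number of fresh vertices growing with $|m|$ rather than a fixed constant. This is precisely why the single-step increase must be separated into two cases (bounded-increase $\rules$-steps versus oracle steps) in the subsequent argument. No genuine obstacle arises in this proof; it is pure bookkeeping on the contraction definition. The only points deserving care are recognizing that the disjoint union in the building phase contributes exactly the count of non-variable positions of $r$, and that finiteness of $\rules$ is what converts this per-rule quantity into a single global constant.
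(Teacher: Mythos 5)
Your proof is correct and takes essentially the same route as the paper: the paper likewise bounds the per-step increase by the number of nodes in $\toGraph{r}$ and invokes finiteness of $\rules$ to take the maximum over right-hand sides. Your version is marginally more detailed --- explicitly checking that redirection and garbage collection cannot add vertices, counting only non-variable positions of $r$, and noting why oracle rules are excluded --- but these refinements do not change the argument.
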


\begin{proof}
In a step using a rule \( \ell \arrz r \), the number of nodes in the graph
can be increased at most by \( |\toGraph{r}| \).  As there are only finitely
many rules in \( \rules \), we can let \( a \) be the number of nodes in the
largest graph for a right-hand side \( r \).
\end{proof}

To see that graph rewriting with \( \funS_f \) can be implemented in an
efficient way, we observe that the size of any intermediate graph in the
reduction \( \toGraph{\funG \app \encode{w}} \arrzT \toGraph{q} \) is
polynomially bounded by a second-order polynomial over \( |f|, |w| \):

\begin{lemma}\label{lem:termsizebounded}
There is a second-order polynomial \( Q \) such that
if
\( \toGraph{\funF \app \sfSymbol \app \encode{w}} \arrg^* \bGraph \),
then \( |\bGraph| \leq Q(|f|, |w|) \).
\end{lemma}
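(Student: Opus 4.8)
The plan is to bound the total node count by (i) bounding the number of graph-reduction steps and (ii) bounding how much each step can enlarge the graph. For (i), \cref{cor:simulation} already gives that any reduction \( \toGraph{\funF \app \sfSymbol \app \encode{w}} \arrg^n \aGraph \) satisfies \( n \leq D(|f|,|w|) \), where \( D \) is the second-order polynomial from \cref{lemma:rules-extd-compatible}. So if I can find a uniform per-step increase bound \( b \), then any reachable \( \bGraph \) will satisfy \( |\bGraph| \leq |\toGraph{\funF \app \sfSymbol \app \encode{w}}| + n \cdot b \), and it only remains to check that the resulting expression is a second-order polynomial.

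For the per-step bound I would split on the rule used. Steps that fire a rule of \( \rules \) are already handled by \cref{lem:increaseconstant}: they increase the node count by at most the constant \( a \). The only steps that can add more than a constant are the oracle steps, which fire a rule \( \sfSymbol \app \encode{x} \arrz \encode{f(x)} \); here the building phase introduces \( |\toGraph{\encode{f(x)}}| \) fresh vertices (and redirection and garbage collection never increase the count), and this is linear in \( |f(x)| \) since \( \encode{u} \) has only linearly many nodes in \( |u| \). To control \( |f(x)| \) I would first bound \( |x| \): because every intermediate graph is reachable from its root (true of the initial tree and preserved by garbage collection), \cref{lem:simulation} gives \( \funF \app \sfSymbol \app \encode{w} \arrz_{\rulesExtd}^* \fromGraph{\aGraph_i} \); and since the left-hand side \( \sfSymbol \app \encode{x} \) is ground, the matching subgraph is tree-shaped, so \( \fromGraph{\aGraph_i} = C[\sfSymbol \app \encode{x}] \) for some context \( C \). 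The Oracle Subterm Lemma (\cref{lemma:oracle-subterm-lemma}) then yields \( |x| \leq B(|f|,|w|) \), and monotonicity of \( |f| \) gives \( |f(x)| \leq |f|(|x|) \leq |f|(B(|f|,|w|)) \).

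Putting this together, each step increases the node count by at most \( b := a + c_1 \cdot |f|(B(|f|,|w|)) + c_2 \) for suitable constants \( c_1, c_2 \) coming from the linear size of word encodings, and the initial graph has size at most linear in \( |w| \). Hence \( |\bGraph| \) is bounded by \( D(|f|,|w|) \cdot b \) plus a linear term in \( |w| \), which is a second-order polynomial \( Q \) in \( |f|, |w| \) because \( D \) and \( B \) are second-order polynomials and \( |f|(B(|f|,|w|)) \) is simply the indeterminate \( F \) applied to the second-order polynomial \( B \). The main obstacle I anticipate is the bookkeeping in the second step: the Oracle Subterm Lemma is phrased on the term rewriting relation \( \arrz_{\rulesExtd} \), so I must transport every graph-level oracle redex back to a genuine occurrence \( \sfSymbol \app \encode{x} \) in the unfolded term. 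This relies crucially on the groundness of the oracle left-hand sides (which forces exact matching with no hidden sharing) and on the reachability invariant that legitimizes \cref{lem:simulation}.
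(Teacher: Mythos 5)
Your proposal is correct and takes essentially the same route as the paper's proof, which simply sets \( Q(F,x) := x + D(F,x) * (a + F(B(F,x))) \) and argues exactly as you do: at most \( D(|f|,|w|) \) steps, each adding at most \( a \) nodes for \( \rules \)-steps and at most (linearly many in) \( |f|(B(|f|,|w|)) \) nodes for oracle steps, with the Oracle Subterm Lemma bounding the query size via the term-level reduction obtained from \cref{lem:simulation}. One minor slip: the matched subgraph for an oracle redex need not be tree-shaped (it may contain sharing); what you actually need --- and have --- is that the homomorphism from the fully labeled, ground left-hand side forces the unfolding at the redex vertex to be exactly \( \sfSymbol \app \encode{x} \), so \( \fromGraph{\aGraph} = C[\sfSymbol \app \encode{x}] \) still holds.
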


\begin{proof}
Let \( Q(F,x) := x + D(F,x) * (a + F(B(F, x))) \), where
\( D \) is the polynomial from Lemma~\ref{lemma:rules-extd-compatible},
\( a \) is the constant from Lemma~\ref{lem:increaseconstant}, and
\( B \) is the polynomial from Section~\ref{subsec:oraclebound}.
This suffices, because there are at most \( D(|f|,|w|) \) steps
(Lemma~\ref{lemma:rules-extd-compatible}, Corollary~\ref{cor:simulation}),
each of which increases the graph size by at most \( \max(a,|f|(B(|f|,|w|))) \)
according to \cref{lem:increaseconstant,lemma:oracle-subterm-lemma}.
(Note that, by the Oracle Subterm lemma, \( B \) takes the size \( |w| \) of the
original input \( w \) as its second argument, \emph{not} the current size of
the term.)
\end{proof}

All in all, we are finally ready to prove the \emph{soundness} side of the
main theorem:

\begin{theorem}[Soundness]\label{thm:soundness}
  Let \( \rules \) be a finite orthogonal STRS admitting a
  polynomially bounded interpretation.
  If \( \funF \) computes a type-2 functional \( \aTTFunc \),
  then \( \aTTFunc \in \bff \).
\end{theorem}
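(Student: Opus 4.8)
The plan is to construct an oracle Turing machine $\aOTM$ that, given an oracle $f$ and an input $w$, computes $\aTTFunc(f,w)$ by faithfully simulating \emph{innermost graph rewriting} on $\rulesExtd$, starting from the initial graph $\toGraph{\funF \app \sfSymbol \app \encode{w}}$. Correctness is then immediate from the machinery already developed: by \cref{cor:simulation}, reducing this graph to a normal form $\aGraph$ yields $\fromGraph{\aGraph} = \encode{\aTTFunc(f,w)}$, which the machine reads off onto its main tape. The entire content of the proof is therefore the \emph{time} analysis: I must check that this simulation runs within a second-order polynomial bound in $|f|$ and $|w|$, as required by \cref{def:bff-class}.

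First I would fix a concrete encoding of term graphs on the tape (e.g.\@ an adjacency-list representation listing each vertex with its label and successors) and describe one elementary simulation round: scan the graph for an innermost redex, then perform the three phases of \cref{bff:def:graph-rewriting-contraction} (building, redirection, garbage collection) by direct tape manipulation. Since $\rules$ is finite and all left-hand sides have bounded size, testing whether a given vertex roots a redex — and whether that redex is innermost — costs time polynomial in the current graph size, and a single contraction likewise modifies only polynomially many cells. Two global bounds then control the total cost. By \cref{cor:simulation} the number of graph-rewriting steps is at most $D(|f|,|w|)$, and by \cref{lem:termsizebounded} every intermediate graph has size at most $Q(|f|,|w|)$. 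Hence the total running time is bounded by $D(|f|,|w|)$ times a polynomial in $Q(|f|,|w|)$, which is again a second-order polynomial in $|f|$ and $|w|$.

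The delicate point, and the reason the preceding lemmas were needed, is the treatment of oracle redexes $\sfSymbol \app \encode{x}$. When the innermost-redex search locates such a vertex, the machine writes $x$ on the query tape, enters the query state, and in a single OTM step receives $f(x)$ on the answer tape; it then builds the graph of $\encode{f(x)}$ and performs the redirection. Two size issues must be controlled. The \emph{query} size is handled by the Oracle Subterm Lemma (\cref{lemma:oracle-subterm-lemma}): every oracle argument $x$ arising in the reduction satisfies $|x| \leq B(|f|,|w|)$, so writing the query costs only polynomially many steps. The \emph{answer} $f(x)$ has length at most $|f|(B(|f|,|w|))$; although the query itself is free beyond the single transition, the machine must still \emph{read} the answer to splice it into the graph, and this reading cost — together with the extra nodes it introduces — is precisely what the summand $F(B(F,x))$ in the definition of $Q$ (\cref{lem:termsizebounded}) accounts for. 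Thus both the per-step cost and the graph size remain polynomially bounded across oracle calls.

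The main obstacle is exactly this interaction between the three separate bounds — the step count $D$, the oracle-argument bound $B$, and the graph-size bound $Q$ — which must be combined so that no oracle output, and no duplicated subterm, is ever copied more than polynomially often. This is where graph rewriting is essential: the sharing it provides prevents the size explosion that ordinary term rewriting would suffer (a single duplicating rule applied $D(|f|,|w|)$ times could blow up a naive term representation exponentially), while \cref{lem:increaseconstant} guarantees that each non-oracle step adds only a constant number of nodes. Combining the correctness statement of \cref{cor:simulation} with this product bound on the running time then yields a polynomial-time OTM computing $\aTTFunc$, so $\aTTFunc \in \bff$.
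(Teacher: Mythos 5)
Your proposal is correct and follows essentially the same route as the paper's proof: an OTM simulating innermost graph rewriting on \( \rulesExtd \), with correctness from \cref{cor:simulation}, the step count bounded by \( D \), graph sizes bounded by \( Q \) via \cref{lem:increaseconstant} and \cref{lem:termsizebounded}, and oracle queries controlled by \cref{lemma:oracle-subterm-lemma}. In fact, your write-up spells out details (the tape encoding of graphs, the per-round redex search, and how the \( F(B(F,x)) \) summand in \( Q \) absorbs the cost of reading oracle answers) that the paper's proof only sketches.
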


\begin{proof}
  Given \( \trs \), we can construct an OTM \( \aOTM \) so that for a given
  \( \aTOFunc \in \bffDom \arrfunc \bffDom \), the machine \( \aOTM_f \) executed
  on \( w \in \bffDom \) computes the normal form of
  \( \funF \app \sfSymbol \app \encode{w} \) under \( \arrz_{\rulesExtd} \)
  using graph rewriting.
  We omit the details of the construction, but observe:
  \begin{itemize}
  \item
    that we can represent each graph in polynomial space in the size of the
    graph; simply, we can make use of an adjacency-matrix representation of
    the graph;
  \item
    that we simulate any rewriting step that does not call the oracle (so using a
    rule in \( \rules \)) following the contraction
    algorithm we defined in Definition~\ref{bff:def:graph-rewriting-contraction}, which
    is clearly feasible to do in polynomial time in the size of the graph; of course
    this could in principle involve the creation of new nodes and edges, but the amount of them is statically bounded by the size of the underlying set or rewrite rules (excluding those involving the oracle).
  \item
    and that each oracle call (implemented in rewriting by a
    \( \rules_\aTOFunc \)-step \( \funS_\aTOFunc \app \encode{x}
    \arrz \encode{y} \)) is resolved by copying \( \encode{x} \) to the query
    tape, transitioning to the query state, and from the answer state copying
    \( \encode{y} \) from the answer tape to the main tape.
    By Lemma~\ref{lemma:oracle-subterm-lemma} this is doable in polynomial time
    in \( |f|, |w| \) and the graph size.
  \end{itemize}
  By Lemma~\ref{lem:termsizebounded}, graph sizes are bounded by a
  polynomial over \( |f|, |w| \), so using the above reasoning, the same
  holds for the cost of each reduction step.
  In summary: the total cost of \( \aOTM_\aTOFunc \) running on \( w \) is
  bounded by a second-order polynomial in terms of \( |\aTOFunc| \) and \( |w| \).
  As \( \aOTM_\aTOFunc \) simulates \( \rulesExtd \) via graph rewriting and
  \( \rulesExtd \) computes \( \aTTFunc \), \( \aOTM \) also computes \( \aTTFunc \).
  By Definition~\ref{def:bff-class}, \( \aTTFunc \) is in \( \bffT \).
\end{proof}

\section{Completeness}\label{sec:completeness}

Recall from Section~\ref{sec:rw-bff} that
to prove completeness we have to show the following:
if a given \mbox{type-2} functional \( \aTTFunc \) is in \( \bffT \),
then there exists an orthogonal STRS computing \( \aTTFunc \)
and admitting a polynomially bounded interpretation.
In this section,
we prove this implication by providing an encoding of polynomial time Oracle Turing Machines
as STRSs that admit a polynomially bounded interpretation.

The encoding is divided into three steps.
In Section~\ref{sec:completeness:constructors}, we define the function symbols
that allow us to encode any possible machine configuration as terms.
In Section~\ref{sec:completeness:steps}, we encode machine transitions as reduction
rules that rewrite configuration terms.
This allows us to simulate one transition in an OTM as one or more rewriting steps
on the corresponding rewrite system.
Lastly, we design an STRS which simulates a complete execution of an OTM in
polynomially many steps.
Achieving this polynomial bound is non-trivial and is done in
Section~\ref{sec:completeness:bound} and Section~\ref{sec:completeness:execution}.

Henceforth, we assume given a fixed OTM \( \aOTM \), and a second-order
polynomial \( \polM \), such that \( \aOTM \) operates in time \( \polM \).
For simplicity, we assume the machine has only three tapes (one input/output
tape, one query tape, one answer tape); that each non-oracle transition only
operates on one tape (i.e., reading/writing and moving the tape head); and
that we only have tape symbols \( \{ 0,1,\blank \} \).

\subsection{Encoding Machine Configurations as Terms}\label{sec:completeness:constructors}

Recall from \cref{sec:rw-bff} that we have \( \bito,\biti \hasType \bit,\
\consInfix \hasType \bit \arrtype \word \arrtype \word \) and \( \nil \hasType
\word \), which are the basic constructors for encoding binary words as terms.
To represent a (partial) tape, we also introduce \( \bitb \hasType
\bit \) for the blank symbol.
Now for instance a tape with content
\( 011\blank 01\blank\blank\cdots \) (followed by infinitely
many blanks) may be represented as the list
\( \makelist{\bito\listSep\biti\listSep\biti\listSep\bitb\listSep\bito
\listSep\biti} \) of type \( \word \).
We may also add an arbitrary number of blanks at the end of the
representation; e.g.,
\( \makelist{\bito\listSep\biti\listSep\biti\listSep\bitb\listSep\bito
\listSep\biti\listSep\bitb\listSep\bitb} \).

We can think of a \emph{tape configuration} --- the combination of a tape
and the position of the tape head --- as a finite word
\( w_1 \ldots w_{p - 1} \cfgSep w_p w_{p + 1} \ldots w_k \) (followed by
infinitely many blanks). Here, the tape's head is reading the symbol
\( w_p \).  We can split this tape into two
components: the \emph{left} word \( w_1 \ldots w_{p - 1} \), and the
\emph{right} word \( w_p \ldots w_k \).
To represent a tape configuration as a term, we introduce three symbols:
\begin{align*}
    \tapeL &\hasType \word \arrtype \leftTy
    &\tapeR &\hasType \word \arrtype \rightTy
    &\tapeSplit &\hasType \leftTy \arrtype \rightTy \arrtype \tapeTy
\end{align*}
Here, \( \tapeL, \tapeR \) hold the content%
\footnote{While we technically do not need these two constructors (we could have
\( \tapeSplit \hasType \word \arrtype \word \arrtype \tapeTy \)), they serve
to make configurations more human-readable.
}
of the left and right
split of the tape, respectively.
For convenience in rewriting transitions, later on, we encode the
left side of the split in reverse order.
More precisely, we encode the configuration
\( w_1 \ldots w_{p - 1} \cfgSep w_p w_{p + 1} \ldots w_k \) as the term
\[ \tapeSplit \app (\tapeL \app \makelist{w_{p-1}\listSep\dots\listSep w_2
\listSep w_1}) \app (\tapeR \app \makelist{w_p\listSep\dots\listSep
w_{k-1}\listSep w_k}) \]
The symbol currently being read is the first element of the list below
\( \tapeR \); in case of \( \tapeR \app \nil \), this symbol is \( \blank \).
For a concrete example, a tape configuration \( 1\blank 0\cfgSep 10 \) is
represented by:
\( \tapeSplit \app (\tapeL \app \makelist{\bito\listSep\bitb\listSep\biti})
\app (\tapeR \app \makelist{\biti\listSep\bito}) \).
Since we have assumed an OTM with three tapes, a configuration of the machine
at any moment is a tuple \( (\aState, t_1,t_2,t_3) \), with \( \aState \) a state
and \( t_1,t_2,t_3 \) tape configurations.  To represent machine configurations,
we introduce, for each state \( \aState \), a symbol \( \stateSymb \hasType
\tapeTy \arrtype \tapeTy \arrtype \tapeTy \arrtype \cfgTy \).  Thus, a
configuration \( (\aState,t_1,t_2,t_3) \) is represented by a term
\( \stateSymb \app T_1 \app T_2 \app T_3 \). %

\begin{example}\label{ex:initialconfig}
The initial configuration for a machine \( \aOTM_f \) on input \( w \) is a
tuple of the form
\( (\startState,\cfgSep w,\cfgSep\blank,\cfgSep\blank) \).  This is
represented by the term
\[ \init{w} := \startStateSymb \app
  (\tapeSplit \app (\tapeL \app \nil) \app (\tapeR \app \encode{w})) \app
  (\tapeSplit \app (\tapeL \app \nil) \app (\tapeR \app \nil)) \app
  (\tapeSplit \app (\tapeL \app \nil) \app (\tapeR \app \nil)) \]
\end{example}

To interpret the symbols from this section, we let
\( (\sizeInt{\asort},\sizeGe_{\asort}) := (\Nat,\geq) \) for all \( \asort \),
let \( \funcinterpretcost{\afun} = \fatlambda x_1 \dots x_m.0 \) whenever
\( \afun \) takes \( m \) arguments, and for the sizes:
\[
\begin{array}{rclcrclcrclcrl}
\funcinterpretsize{\bito} & = & 0 & \quad &
  \funcinterpretsize{\nil} & = & 0 & \quad &
  \funcinterpretsize{\consInfix} & = & \fatlambda xy.x+y+1 \\
\funcinterpretsize{\biti} & = & 0 & \quad &
  \funcinterpretsize{\tapeL} & = & \fatlambda x.x & \quad &
  \funcinterpretsize{\tapeSplit} & = & \fatlambda xy.x+y \\
\funcinterpretsize{\bitb} & = & 0 & \quad &
  \funcinterpretsize{\tapeR} & = & \fatlambda x.x & \quad &
  \funcinterpretsize{\aStateSymb} & = & \fatlambda xyz.x+y &
  \quad \text{(for all states \( q \))} \\
\end{array}
\]
Hence, we satisfy the size component of the requirements in
\cref{def:polynomially-bounded-int}.
We have \( \sizeinterpret{\encode{w}} = |w| \);
the size of a tape configuration \( w_1\ldots w_{p-1}\cfgSep w_p\ldots w_k \) is
\( k \), and the size of a configuration is the size of its
first and second tapes combined.  We do \emph{not} include the third tape, as
it does not directly affect either the result yielded by the final configuration
(this is read from the first tape), nor the size of a word the oracle
\( f \) is applied on.

\subsection{Encoding Machine Transitions as Rules}\label{sec:completeness:steps}

\newcommand{\transition}[6]{#1 \xRightarrow[#3]{#2/#4,\ #5} #6}

A single step in an OTM can either be an oracle call (a transition from the
\( \mathtt{query} \) state to the \( \mathtt{answer} \) state), or a traditional step: we
assume that an OTM \( \aOTM \) has a fixed set \( \transitions \) of
\emph{transitions} \( \transition{q}{r}{t}{i}{d}{l} \) where
\( q \) is the \emph{input state},
\( l \) the \emph{output state},
\( t \in \{1,2,3\} \) the tape considered (recall that we have assumed that a
non-oracle transition only operates on one tape),
\( r,i \in \{ 0, 1, \blank \} \) respectively the symbol being read and the symbol
being written, and
\( d \in \{\leftM,\rightM\} \) the direction for the read head of tape \( t \)
to move.
We will model the computation of \( \aOTM \) as rules that simulate the small
step semantics of the machine.
Let us describe such rules as follows:

\begin{itemize}
  \item To encode a single transition, let
\( \machinestep \hasType (\word \arrtype \word) \arrtype \cfgTy
   \arrtype \cfgTy \).
For any transition of the form \( \transition{q}{r}{1}{i}{\leftM}{l} \) (so
a transition operating on tape 1, and moving left), we introduce a rule
(where we write \( \encode{0} = \bito,\ \encode{1} = \biti,\ \encode{\blank} =
\bitb \)):
\[
    \machinestep \app \aFuncVar \app ( \aStateSymb \app
    (
        \tapeSplit \app
        (\tapeL \app (\avar \consInfix \bvar)) \app
        (\tapeR \app (\encode{r} \consInfix \cvar ))
    ) \app u \app v )
    \arrz
    \bStateSymb \app
    (
        \tapeSplit \app
        (\tapeL \app \bvar) \app
        (\tapeR \app (\avar \consInfix \encode{i} \consInfix \cvar))
    )
    \app u \app v \\
\]
Moreover, for transitions \( \transition{q}{\blank}{1}{w}{\leftM}{l} \) (so
where \( \blank \) is read), we add a rule:
\[
    \machinestep \app \aFuncVar \app ( \aStateSymb \app
    (
        \tapeSplit \app
        (\tapeL \app (\avar \consInfix \bvar)) \app
        (\tapeR \app \nil )
    ) \app u \app v )
    \arrz
    \bStateSymb \app
    (
        \tapeSplit \app
        (\tapeL \app \bvar) \app
        (\tapeR \app (\avar\consInfix\encode{i}\consInfix \nil))
    )
    \app u \app v
\]
These rules respectively handle the steps where a tape configuration is changed
from \( u_1\ldots u_{p-1}u_p \cfgSep r u_{p+2} \ldots u_k \) to
\( u_1\ldots u_{p-1}\cfgSep u_p i u_{p+2} \ldots u_k \), and where a tape
configuration is changed from \( u_1\ldots u_k \cfgSep \) to
\( u_1\ldots \cfgSep u_k i \).

\item Transitions where \( d = \rightM \), or on the other two tapes, are encoded
similarly.

\item
Next, we encode oracle calls.
Recall that to query the machine for the value
of \( f \) at \( u \), we write \( u \) on the second tape,
move its head to the leftmost position, and enter the query state. Then,
the content of this tape is erased and the image of \( f \)
over \( u \) is written in the third tape.
Visually, this step is represented as:
\[
    (\querystate, \langle\text{tape}_1\rangle,
      v_1 \ldots v_p\cfgSep \encode{u} \blank \ldots,
      \langle\text{tape}_3\rangle)
    \cfgTrans
    (\answerstate, \langle\text{tape}_1\rangle,
      \cfgSep \blank,
      \cfgSep \encode{f(u)} )
\]
This is implemented by the following rules:
\[
  \begin{array}{rcl}
  \machinestep \app \aFuncVar \app ( \querystate \app
    t_1 \app (\tapeSplit \app x \app (\tapeR \app y)) \app t_3 ) &
  \arrz &
  \answerstate \app t_1 \app
    (\tapeSplit \app (\tapeL \app \nil) \app (\tapeR \app \nil))
    \phantom{ABCDEFG} \\
\multicolumn{3}{r}{
    (\tapeSplit \app (\tapeL \app \nil) \app
    (\tapeR \app (F \app (\clean \app y))))}
\end{array}
\]
\[
\begin{array}{rclcrclcrcl}
\clean \app (\bito \consInfix \avar) & \arrz & \bito \consInfix (\clean \app \avar)
& \quad &
\clean \app (\bitb \consInfix \avar) & \arrz & \nil \\
\clean \app (\biti \consInfix \avar) & \arrz & \biti \consInfix (\clean \app \avar)
& \quad &
\clean \app \nil & \arrz & \nil \\
\end{array}
\]
Here, \( \clean \hasType \word \arrtype \word \) turns a word that may have
blanks in it into a bitstring, by reading until the next blank; for instance
replacing
\( \makelist{\bito\listSep\biti\listSep\bitb\listSep\biti} \) by
\( \makelist{\bito\listSep\biti} \).
\end{itemize}

The various \( \machinestep \) rules and the \( \clean \) rules defined above are
non-overlapping because we consider \emph{deterministic} OTMs.
Additionally, they are also left-linear.
We can orient such rules as follows:
\[
\begin{array}{rclcrcl}
\funcinterpretsize{\clean} & = & \fatlambda x.x & \quad &
\funcinterpretcost{\clean} & = & \fatlambda x.x+1 \\
\funcinterpretsize{\machinestep} & = & \fatlambda Fx.x + 1 & \quad &
\funcinterpretcost{\machinestep} & = & \fatlambda F^c F^s x. F^c(x) + x + 2 \\
\end{array}
\]
Note that \( \funcinterpretsize{\machinestep} \) is so simple because the size
of a configuration does not include the size of the answer tape.
From these simulation rules, the following result is straightforward.

\newcommand{\fromOTMtoTm}[1]{{[#1]}}

\begin{lemma}\label{lem:stepencode}
Let \( \aOTM_f \) be an OTM and \( C, D \) be machine configurations of
\( \aOTM_f \) such that \( C \cfgTrans D \).
Then \( \machinestep \app \funS_f \app \fromOTMtoTm{C} \arrzT
\fromOTMtoTm{D} \),
where \( \fromOTMtoTm{C} \) is the term encoding of %
\(C\).
\end{lemma}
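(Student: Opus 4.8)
The plan is to prove the lemma by case analysis on the transition $C \cfgTrans D$, distinguishing ordinary (non-oracle) transitions from oracle calls, and in each case exhibiting a concrete innermost reduction from $\machinestep \app \funS_f \app \fromOTMtoTm{C}$ to $\fromOTMtoTm{D}$. The common starting observation is that a configuration encoding $\fromOTMtoTm{C}$ contains none of the defined symbols $\machinestep$, $\clean$, $\funS_f$ as a head, so it is a normal form; since moreover $\funS_f$ here occurs unapplied, the term $\machinestep \app \funS_f \app \fromOTMtoTm{C}$ has no proper subterm that is a redex and is therefore itself an innermost redex rooted at $\machinestep$.

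For an ordinary transition $\transition{q}{r}{t}{i}{d}{l}$, I match $\machinestep \app \funS_f \app \fromOTMtoTm{C}$ against the corresponding $\machinestep$ rule, reading off the matching substitution $\gamma$ directly from the shape of the tape configuration on tape $t$. This match is well-defined precisely because the left-hand sides of the $\machinestep$ rules are built to split off the read head in exactly the way the encoding stores it, using the convention that the left word is kept reversed. As the right-hand side of each such rule is again a pure configuration term, a single step $\machinestep \app \funS_f \app \fromOTMtoTm{C} \arrzR r\gamma$ already gives the result, and it remains to check $r\gamma = \fromOTMtoTm{D}$ by comparing the updated tape word component by component. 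There is one subcase for each tape $t \in \{1,2,3\}$ and each direction $d \in \{\leftM,\rightM\}$, together with a boundary subcase per combination distinguishing whether the relevant split is nonempty or empty (the machine operating at the blank frontier, handled by the rules matching $\tapeR \app \nil$).

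For an oracle call the reduction takes several steps. Firing the oracle $\machinestep$-rule at the root rewrites $\machinestep \app \funS_f \app \fromOTMtoTm{C}$, with $C$ in the query state, to a term whose only redex is the subterm $\clean \app y$, where $y$ is the right split of the query tape, i.e.\ $\encode{u}$ followed by a blank and arbitrary trailing content. I then prove the auxiliary fact $\clean \app y \arrzT \encode{u}$ by induction on $|u|$: the $\clean$ rules copy bits while pattern-matching on the list head and return $\nil$ at the first blank, so the inductive step rebuilds $\encode{u}$ one bit at a time, and this is a valid innermost reduction. Once $\clean \app y$ has been reduced to the data term $\encode{u}$, the oracle rule $\funS_f \app \encode{u} \arrzR \encode{f(u)}$ fires in one step, and the whole term has reduced to $\answerstate \app T_1 \app (\tapeSplit \app (\tapeL \app \nil) \app (\tapeR \app \nil)) \app (\tapeSplit \app (\tapeL \app \nil) \app (\tapeR \app \encode{f(u)}))$, which is exactly $\fromOTMtoTm{D}$ for the answer-state configuration $D$.

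I expect the main difficulty to be bookkeeping rather than conceptual: one must ensure that the matching substitution is consistent with the reversed left-word convention, and that every boundary subcase (empty left or right split, a blank read beyond the written portion of the tape) is covered by the rule set, so that each instantiated right-hand side is \emph{syntactically} the encoding $\fromOTMtoTm{D}$ and not merely a term reducing to it. The only place where more than one rewrite step is genuinely required is the oracle call, and there the small induction showing that $\clean$ extracts $\encode{u}$ from $\encode{u}\,\bitb\cdots$ is the one step that is not immediate.
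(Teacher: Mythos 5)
Your proposal is correct and is precisely the intended argument: the paper itself gives no explicit proof, stating only that the result is ``straightforward'' from the simulation rules, and your case analysis --- a single root step for ordinary transitions (checked per tape, direction, and empty/nonempty split boundary subcases against the reversed-left-word convention), plus the multi-step oracle case with the small induction showing \( \clean \) extracts \( \encode{u} \) before the \( \funS_f \)-rule fires --- fills in exactly what the authors leave implicit. Your observation that \( \fromOTMtoTm{C} \) is built purely from constructors (so the root \( \machinestep \)-redex is innermost, and after the oracle rule fires the only redex is \( \clean \app y \)) is the key well-formedness point, and it is handled correctly.
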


\subsection{A Bound on the Number of Steps}\label{sec:completeness:bound}

To generalize from performing a single step of the machine to tracing a full
computation on the machine level, the natural idea would be to
define rules such as:
\[
\begin{array}{rcll}
\execute \app F \app (\aStateSymb \app x \app y \app z) & \arrz &
  \execute \app F \app (\machinestep (\aStateSymb \app x \app y \app z)) &
  \text{for}\ \aStateSymb \neq \finalstate \\
\execute \app F \app (\finalstate \app (\tapeSplit \app (\tapeL \app x) \app
  (\tapeR \app w)) \app y \app z) & \arrz & \clean \app w \\
\end{array}
\]
Then, reducing \( \execute \app \sfSymbol \app \init{w} \) to normal form
simulates a full OTM execution of \( \aOTM_f \) on input \( w \).
Unfortunately, this rule does not admit an interpretation, as it may be
non-terminating.
A solution could be to give \( \execute \) an additional argument
\( \unary{N} \)
suggesting an execution in at most \( N \) steps; this argument would ensure
termination, and could be used to find an interpretation.

The challenge, however, is to compute a bound on the number of steps in the OTM:\@
the obvious thought is to compute \( \polM(|f|,|w|) \), but this cannot in
general be done in polynomial time because the STRS does not have access to
\( |f| \): since \( |f|(i) = \max \{ x \in \Nat \mid |x| \leq i \} \), there
are exponentially many choices for \( x \).

To solve this, and following~\cite[Proposition 2.3]{kapron:cook:96}, we observe that
it suffices to know a bound for \( f(x) \) for only those \( x \) on
which the oracle is actually questioned.
That is, for \( A \subseteq \bffDom \), let \( \limitsize{f}{A} = \fatlambda
n. \max \{ |f(x)| \mid x \in A \wedge |x| \leq n \} \).  Then:

\begin{lemma}\label{lem:limitsizebounds}
Suppose an OTM \( \aOTM_f \) runs in time bounded by \( \polM(|f|,|w|) \) on
input \( w \).
If \( \aOTM_f \) transitions in \( N \) steps from its initial state to some
configuration \( C \), calling the oracle only on words in \( A \subseteq
\bffDom \), then \( N \leq \polM(\limitsize{f}{A},|w|) \).
\end{lemma}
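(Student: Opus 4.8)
The plan is to exploit the fact that, although the running-time guarantee $\polM(|f|,|w|)$ is phrased in terms of the global length $|f|$, the first $N$ steps of the computation depend on the oracle only through its values on the set $A$ of words actually queried. So I would replace $f$ by a ``cheaper'' oracle $g$ that coincides with $f$ on $A$ but is trivial elsewhere, and then read the bound off the running-time guarantee applied to $\aOTM_g$. This is the machine-level analogue of the observation of Kapron and Cook~\cite[Proposition 2.3]{kapron:cook:96}.

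Concretely, first I would define $g : \bffDom \arrfunc \bffDom$ by $g(x) = f(x)$ for $x \in A$ and $g(x) = \epsilon$ (the empty word) otherwise. The key computation is that $|g| = \limitsize{f}{A}$: for every $n$ the words outside $A$ contribute only length $0$, so $|g|(n) = \max_{|y| \leq n} |g(y)| = \max\{|f(x)| \mid x \in A \wedge |x| \leq n\} = \limitsize{f}{A}(n)$.

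Second, I would show by induction on the step count that the runs of $\aOTM_f$ and $\aOTM_g$ on $w$ coincide for the first $N$ steps. Since $\aOTM$ is deterministic, two runs with different oracles can diverge only at an oracle query; but by hypothesis every query issued during these $N$ steps is on a word in $A$, where $g$ returns exactly the answer $f$ would. Hence $\aOTM_g$ also reaches the configuration $C$ after exactly $N$ steps, so that the running time of $\aOTM_g$ on $w$ is at least $N$ (and in particular $\aOTM_g$ has not halted before step $N$, as otherwise $\aOTM_f$ would not reach $C$ at step $N$ either).

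Finally, the running-time guarantee of $\aOTM$ is the universally quantified bound of \cref{def:bff-class}, so it is available for \emph{every} oracle, in particular for the substituted $g$: the running time of $\aOTM_g$ on $w$ is at most $\polM(|g|,|w|) = \polM(\limitsize{f}{A},|w|)$. Combining this upper bound with the lower bound $N$ from the previous step yields $N \leq \polM(\limitsize{f}{A},|w|)$, as required. I expect the only delicate points to be the two verifications above, namely the exact equality $|g| = \limitsize{f}{A}$ (which requires fixing the empty-set convention in the definition of $\limitsize{f}{A}$ so that it agrees with $|g|(n)$ even when no queried word has length $\le n$) and the deterministic coincidence of the two runs; making explicit that $\polM$ bounds the running time of $\aOTM_g$, and not merely of $\aOTM_f$, is what turns these into the desired inequality.
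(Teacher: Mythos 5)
Your proposal is correct and follows essentially the same route as the paper's own (sketched) proof: the paper constructs \( f' \) with \( f'(x) = f(x) \) on \( A \) and a trivial value elsewhere, observes \( |f'| = \limitsize{f}{A} \), and notes that \( \aOTM_{f'} \) runs identically to \( \aOTM_f \) on \( w \). You additionally make explicit the points the paper leaves implicit --- the deterministic step-by-step coincidence of the two runs, the use of the universally quantified time bound of \cref{def:bff-class} for the substituted oracle, and the empty-set convention needed for \( |g| = \limitsize{f}{A} \) (your choice of the empty word for \( g \) outside \( A \) is in fact slightly cleaner than the paper's \( f'(x) = 0 \)) --- which is exactly the right elaboration.
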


\begin{proof}[Proof sketch]
We construct \( f' \) with \( f'(x) = 0 \) if \( x \notin A \) and \( f'(x) =
f(x) \) if \( x \in A \).  Then \( |f'| = \limitsize{f}{A} \), and \( \aOTM_{f'}
\) runs the same on input \( w \) as \( \aOTM_f \) does.
\end{proof}

Now, for \( A \) encoded as a term \( \mathsf{A} \) (using symbols \(\setnil
\hasType \setsort,\ \setcons \hasType \word \arrtype \setsort \arrtype \setsort
\)), we can compute \( \limitsize{f}{A} \) using the rules below, where we use
unary integers as in Example~\ref{ex:add} (\( \zero \hasType \nat, \suc \hasType \nat
\arrtype \nat \)), and defined symbols \(
\len \hasType \word \arrtype \nat,\
\maxsymb \hasType \nat \arrtype \nat \arrtype \nat,\
\limit \hasType \word \arrtype \nat \arrtype \word,\
\returnif \hasType \word \arrtype \nat \arrtype \word \arrtype \word,\
\tryapply \hasType (\word \arrtype \word) \arrtype \word \arrtype \nat \arrtype
  \nat,\
\tryall \hasType (\word \arrtype \word) \arrtype \setsort \arrtype \nat \arrtype
  \nat
\).
By design, \( \returnif \app \encode{x} \app \unary{n} \app \encode{y} \)
reduces to \( \encode{y} \) if \( |x| \leq n \) and to \( \nil \) otherwise;
\( \tryapply \app \sfSymbol \app \encode{x} \app \unary{n} \) reduces to
the unary encoding of \( \limitsize{F}{\{x\}}(n) \) and
\( \tryall \app  \sfSymbol  \app \textsf{a} \app  \unary{n} \) yields
\( \limitsize{F}{A}(n) \).
\[
\begin{array}{rclcrclrcl}
\len \app \nil & \arrz & \zero & \quad &
  \len \app (x \consInfix y) & \arrz & \suc \app (\len \app y) \\
\maxsymb \app \zero \app m & \arrz & m & \quad &
  \maxsymb \app (\suc \app n) \app \zero & \arrz & \suc \app n &
  \maxsymb \app (\suc \app n) \app (\suc \app m) & \arrz &
    \suc \app (\maxsymb \app n \app m) \\
\limit \app \nil \app n & \arrz & \nil & \quad &
  \limit \app (x \consInfix y) \app \zero & \arrz & \nil &
  \limit \app (x\consInfix y) \app (\suc \app n) & \arrz &
    x \consInfix (\limit \app y \app n) \\
\returnif \app \nil \app n \app z & \arrz & z & \quad &
  \returnif \app (x \consInfix y) \app \zero \app z & \arrz & \nil &
  \returnif \app (x \consInfix y) \app (\suc \app n) \app z & \arrz &
    \returnif \app y \app n \app z
\end{array}
\]
\[
\begin{array}{rclcrcl}
& & & & \tryapply \app F \app a \app n & \arrz &
  \len \app (\returnif \app a \app n \app (F \app (\limit \app a \app n))) \\
\tryall \app F \app \setnil \app n & \arrz & \zero & \quad &
\tryall \app F \app (\setcons \app a \app tl) \app n & \arrz &
  \maxsymb \app (\tryapply \app F \app a \app n) \app (\tryall \app F \app tl
  \app n) \\
\end{array}
\]

The \cs{} interpretations of these rules are as follows.
\[
\begin{array}{rclcrcl}
  \funcinterpretsize{\len} & = & \fatlambda x.x & \quad\quad &
  \funcinterpretcost{\len} & = & \fatlambda x.x + 1 \\
  \funcinterpretsize{\maxsymb} & = & \fatlambda nm.\max(n,m) & &
  \funcinterpretcost{\maxsymb} & = & \fatlambda nm.n+1 \\
  \funcinterpretsize{\limit} & = & \fatlambda xn.n & &
  \funcinterpretcost{\limit} & = & \fatlambda xn.n+1 \\
  \funcinterpretsize{\returnif} & = & \fatlambda xnz.z & &
  \funcinterpretcost{\returnif} & = & \fatlambda xnz.n+1 \\
  \end{array}
\]
It is easy to see that the corresponding rules are all oriented.

For \( \tryapply \), note that \( \tryapply \app F \app a \app \unary{n} \)
reduces to \( \unary{|F(a)|} \) if \( |a| \leq n \), and to \( \unary{0} \)
otherwise.  Thus, it indeed returns exactly \( \limitsize{F}{\{a\}}(n) \).
\[
\begin{array}{rclcrcl}
\funcinterpretsize{\tryapply} & = & \fatlambda Fan.F(n) & \quad\quad &
\funcinterpretcost{\tryapply} & = & \fatlambda F^c F^s an.
  F^c(n) + F^s(n) + 2 * n + 4 \\
\end{array}
\]
We easily see that \( \sizeinterpret{\tryapply \app a \app n} =
\sizeinterpret{\len \app (\returnif \app a \app n \app
(F \app (\limit \app a \app n)))} \). As for the cost, note that
\[
\begin{array}{cl}
& \totalcost{\len \app (\returnif \app a \app n \app
(F \app (\limit \app a \app n)))} \\
= &
\costinterpret{\len \app (\returnif \app a \app n \app
(F \app (\limit \app a \app n)))}  +
\costinterpret{\returnif \app a \app n \app
(F \app (\limit \app a \app n))}\ + \\ &
\costinterpret{F \app (\limit \app a \app n} +
\costinterpret{\limit \app a \app n} \\
= & (F^c(n) + 1) + (n + 1) + F^s(n) + (n + 1) = F^c(n) + F^s(n) + 2n + 3 \\
\end{array}
\]
Hence, also the \( \tryapply \) rule is oriented.

To interpret sets and the apply rule, we use:
\[
\begin{array}{rclcrclcrclcrcl}
\funcinterpretsize{\setnil} & = & 0 & \quad\quad &
\funcinterpretcost{\setnil} & = & 0 & \quad\quad &
\funcinterpretsize{\setcons} & = & \fatlambda xy.y+1 & \quad\quad&
\funcinterpretcost{\setcons} & = & \fatlambda xy.0 \\
\funcinterpretsize{\tryall} & = &
  \multicolumn{11}{l}{\fatlambda F a n.F(n)} \\
\funcinterpretcost{\tryall} & = &
  \multicolumn{11}{l}{\fatlambda F^c F^s a n.
  1 + a * (F^c(n) + 2 * F^s(n) + 2 * n + 6)}
\end{array}
\]
To see that the rule is oriented, note:
\[
\begin{array}{rcl}
\sizeinterpret{\tryall \app F \app (\setcons \app a \app tl) \app n}
& = & F^s(n) \\
& = & \max(F^s(n),F^s(n)) \\
& = & \sizeinterpret{\maxsymb \app (\tryapply \app F \app a \app n) \app
  (\tryall \app F \app tl \app n)} \\
\end{array}
\]
and
\[
\begin{array}{ll}
& \costinterpret{\tryall \app F \app (\setcons \app a \app tl) \app n} \\
 = & 1 + (tl + 1) * (F^c(n) + 2 * F^s(n) + 2 * n + 6) \\
 = & 1 + tl * (F^c(n) + 2 * F^s(n) + 2 * n + 6) \\
   & \phantom{ABi} + 1 * (F^c(n) + 2 * F^s(n) + 2 * n + 6) \\
 = & \costinterpret{\tryall \app F \app tl \app n} +
   (F^c(n) + 2 * F^s(n) + 2 * n + 6) \\
 = & \costinterpret{\tryall \app F \app tl \app n} +
   \costinterpret{\tryapply \app F \app a \app n} +
   F^s(n) + 2 \\
 = & \costinterpret{\tryall \app F \app tl \app n} +
     \costinterpret{\tryapply \app F \app a \app n} +
     \costinterpret{\maxsymb \app (\tryapply \app F \app a \app n) \app
      (\tryall \app F \app tl \app n)} + 1 \\
 > & \totalcost{\maxsymb \app (\tryapply \app F \app a \app n) \app
      (\tryall \app F \app tl \app n)} \\
\end{array}
\]

Importantly, the \( \limit \) function ensures that, in \( \tryall \app F \app
n \) we never apply \( F \) to a word \( w \) with \( |w| > n \).
Therefore we can let \( \sizeinterpret{\mathsf{A}} = |A| \), the number of words
in \( A \), and have \( \funcinterpretsize{\tryall} = \fatlambda F a n.F(n) \)
and \( \funcinterpretcost{\tryall} = \fatlambda F^c F^s a n.1+a + F^c(n) +
2 * F^s(n) + 2 * n + 6 \).

Now, for a given second-order polynomial \( P \), fixed \( f,n \), and a term
\( \textsf{A} \) encoding a set \( A \subseteq \bffDom \), we can construct
a term \( \Theta^P_{\sfSymbol;\unary{n};\textsf{A}} \) that computes
\( P(\limitsize{f}{A},n) \) using \( \tryall \) and the functions
\( \add,\mult \) from Example~\ref{ex:add}.
By induction on \( P \), we have
\( \sizeinterpret{\Theta^P_{\sfSymbol;\unary{n};\textsf{A}}} = P(|f|,n) \),
while its cost is bounded by a polynomial
over \( |f|,n,|A| \).

\subsection{Finalising Execution}\label{sec:completeness:execution}

Now, we can define execution in a way that can be bounded by a polynomial
interpretation.
We let \( \execute \hasType (\word \arrtype \word) \arrtype
\nat \arrtype \nnat \arrtype \nat \arrtype \setsort \arrtype \cfgTy \arrtype
\word \) and will define rules to reduce expressions \( \execute \app F \app
n \app m \app z \app a \app c \) where
\begin{itemize}
\item \( F \) is the function to be used in oracle calls.
\item \( n - 1 \) is a bound on the number of steps that can be done before
  the next oracle call (or until the machine completes execution).
\item \( m \) is essentially a natural number that represents the number of
  steps that have been done so far.  We use a new sort \( \nnat \) with
  function symbols \( \natz \hasType \nnat \) and \( \natsuc \hasType \nnat
  \arrtype \nnat \) because we will let \( \sizeInt{\nnat} = (\Nat,\leq) \),
  so ordered in the other direction.  This will be essential to find an
  interpretation for \( \execute \).
\item \( z \) is a unary representation of \( |w| \), where \( w \) is the
  input to the OTM.\@
\item \( c \) is the current configuration.
\end{itemize}

Using helper symbols
\( \funF' \hasType (\word \arrtype \word) \arrtype \nat \arrtype \cfgTy
  \arrtype \word \),
\( \execute' \hasType (\word \arrtype \word) \arrtype \nat \arrtype \nnat
  \arrtype \nat \arrtype \setsort \arrtype \cfgTy \arrtype \word \),
\( \extract \hasType \tapeTy \arrtype \word \) and
\( \minus \hasType \nat \arrtype \nnat \arrtype \nat \), we introduce the
rules:
\[
\begin{array}{l}
\funF \app F \app w \arrz \funF' \app F \app (\len \app w) \app
  (\startStateSymb \app
  (\tapeSplit (\tapeL \app \nil) \app (\tapeR \app w)) \app
  (\tapeSplit (\tapeL \app \nil) \app (\tapeR \app \nil)) \app
  (\tapeSplit (\tapeL \app \nil) \app (\tapeR \app \nil))) \\
\funF' \app F \app z \app c \arrz \execute \app F \app
  \Theta^{\polM+1}_{F;z;\setnil} \app \natz \app z \app \setnil \app c \\
\end{array}
\]
\[
\begin{array}{l}
\execute \app F \app (\suc \app n) \app m \app z \app a \app
  (\aStateSymb \app t_1 \app t_2 \app t_3) \arrz \\
  \hfill
  \execute \app F \app n \app (\natsuc \app m) \app z \app
  (\machinestep \app F \app (\aStateSymb \app t_1 \app t_2 \app t_3))
  \ \text{for}\ \aStateSymb \notin \{ \querystate, \finalstate \} \\
\execute \app F \app (\suc \app n) \app m \app z \app a \app
  (\querystate \app t_1 \app t_2 \app t_3) \arrz \\
  \hfill
  \execute' \app F \app n \app
  (\natsuc \app m) \app z \app (\setcons \app (\extract \app t_2) \app a) \app
  (\querystate \app t_1 \app t_2 \app t_3) \\
\execute' \app F \app n \app m \app z \app a \app c \arrz
  \execute \app F \app (\minus \app \Theta^{\polM+1}_{F;z;a} \app m) \app m
  \app z \app a \app (\machinestep \app F \app c) \\
\execute \app F \app n \app m \app z \app a \app (\finalstate \app t_1 \app
  t_2 \app t_3) \arrz \extract \app t_1 \\
\extract \app (\tapeSplit \app (\tapeL \app x) \app (\tapeR \app y)) \arrz
  \clean \app y \\
\minus \app x \app \natz \arrz x
  \quad\quad
\minus \app \zero \app (\natsuc \app y) \arrz \natz
  \quad\quad
\minus \app (\suc \app x) \app (\natsuc \app y) \arrz \minus \app x \app y
  \\
\end{array}
\]
\noindent
That is, an execution on \( \funF \app \sfSymbol \app \encode{w} \) first computes the length of \( w \) and \( \polM(\limitsize{f}{\emptyset},|w|) \),
and uses these as arguments to \( \execute \).
Each normal transition lowers the number \( n \) of steps we are allowed to do and increases the number
\( m \) of steps we have done.
Each oracle transition updates \( A \), and
either lowers \( n \) by one, or updates it to the new value
\( \polM(\limitsize{f}{A},|w|) - m \), since we have already done \( m \)
steps.
Once we read the final state, the answer is read off the first tape.

For the interpretation, note that the unusual size set of \( \nnat \) allows us
to choose \( \funcinterpretsize{\minus} = \fatlambda xy.\max(x-y,0) \) without
losing monotonicity.  Hence, in every step \( \execute \app F \app n \app
m \app z \app a \app c \), the value
\( \max(\polM(\sizeinterpret{F},\sizeinterpret{z}) + 1 - \sizeinterpret{m},
\sizeinterpret{n}) \) decreases by at least one.
Since \( \sizeinterpret{\Theta^{\polM+1}{F;z; a}} = \polM(\sizeinterpret{F},
\sizeinterpret{z}) \) regardless of \( a \), we can use this component as part
of the interpretation.
So let us provide the interpretations for the \( \nnat \) functions symbols and
the two simple rules for \( \extract \) and \( \minus \), given as follows:
\[
\begin{array}{rclcrcl}
\funcinterpretsize{\natz} & = & 0 & \quad\quad &
\funcinterpretcost{\natz} & = & 0 \\
\funcinterpretsize{\natsuc} & = & \fatlambda x.x+1 & &
\funcinterpretcost{\natsuc} & = & \fatlambda x.0 \\
\funcinterpretsize{\extract} & = & \fatlambda x.x & &
\funcinterpretcost{\extract} & = & \fatlambda x.x + 2 \\
\funcinterpretsize{\minus} & = & \fatlambda xy.\max(x-y,0) \\
\funcinterpretcost{\minus} & = & \fatlambda xy.x \\
\end{array}
\]
These functions are all    monotonic, and their rules are oriented
(as can easily be checked).
\newcommand{\interdecrease}[2]{\theta_{F,z,#1,#2}}
\newcommand{\costpoly}[1]{\mathtt{POLY}_{F,z}[#1]}
\newcommand{\makecalc}[2]{\Theta_{#2}^{#1}}
By induction on the polynomial \( P \),
we can find polynomials \( A_P, B_P \) such that
\[
  \totalcost{\makecalc{P}{F;z;a}} \leq
  \sizeinterpret{a} * A_P(F^c,F^s,\sizeinterpret{z}) +
  B_P(F^c,F^s,\sizeinterpret{z})
\]
assuming \( F \), \( z \) and \( a \) are in
normal form.

To define our remaining interpretation functions, first let:
\begin{itemize}
\item \( \interdecrease{n}{m} := \max(\polM(F^s,z) + 1 - m, n) \)
\item \( \costpoly{x} := x * A_{\polM+1}(F^c,F^s,\sizeinterpret{z}) +
  B_{\polM+1}(F^c,F^s,\sizeinterpret{z}) \), so the polynomial bounding
  \( \totalcost{\makecalc{\polM+1}{F;z;a}} \) if \( \sizeinterpret{a} = x \).
\end{itemize}
Then, we can orient the size interpretations of the rewrite rules by the
following interpretation:
\[
\begin{array}{rcl}
\funcinterpretsize{\funF} & = & \fatlambda F n.n + \polM(F,n) + 1 \\
\funcinterpretsize{\funF'} & = & \fatlambda F z c.c + \polM(F,z) + 1 \\
\funcinterpretsize{\execute} & = & \fatlambda F n m z a c.c +
  \interdecrease{n}{m} \\
\funcinterpretsize{\execute'} & = & \fatlambda F n m z a c.c + 1 +
  \interdecrease{n}{m} \\
\end{array}
\]
And the cost interpretations by:
\[\begin{array}{rcll}
\funcinterpretcost{\funF} & = &
\fatlambda F n.
  (\polM(F^s,n)+1) * \\
  & & \phantom{AB} (8 + 3 * \polM(F^s,n) + 2 * n + F^c(\polM(F^s,n) + n + 1) + \\
  & & \phantom{AB} \costpoly{\polM(F^s,n)+1}) + 6 + 2 * n + \costpoly{0} \\
  \\
\funcinterpretcost{\funF'} & = & \fatlambda F z c.
  (\polM(F^s,z)+1) * \\
  & &\phantom{AB} (8 + 3 * \polM(F^s,z) + 2 * c + F^c(\polM(F^s,z)+1 + c) + \\
  & & \phantom{AB} \costpoly{\polM(F^s,z)+1}) + 4 + c + \costpoly{0} \\
  \\
\funcinterpretcost{\execute} & = & \fatlambda F n m z a c.
  \interdecrease{n}{m} * \\
  & & \phantom{AB} (5 + 2 * (\interdecrease{n}{m} + c) +
  F^c(\interdecrease{n}{m} + c) + \\
  & & \phantom{AB} \costpoly{\interdecrease{n}{m} + a} + \polM(F^s,z)) + 3 + \interdecrease{n}{m} + c \\
  \\
  \funcinterpretcost{\execute'} & = & \fatlambda F n m z a c.
  (\interdecrease{n}{m} + 1) * \\
  & & \phantom{AB} (5 + 2 * (\interdecrease{n}{m} + c + 1) +
  F^c(\interdecrease{n}{m} + c + 1) + \\
  & & \phantom{AB} \costpoly{\interdecrease{n}{m} + a} + \polM(F^s,z)) + 1
\end{array}\]
To see that these interpretations are correct, we first observe:
\[\begin{array}{rcl}
  \interdecrease{\suc \app n}{m} & = & \max(\polM(F^s,z) + 1 - m, n + 1) \\
  & = & \max(\polM(F^s,z) + 1 - (m+1), n) + 1 \\
  & = & \interdecrease{n}{m+1} + 1 \\
\end{array}\]
Observe that this holds since \( \max(a + 1, b + 1) = \max(a, b) + 1 \).
We also have the following:
\[\begin{array}{rcl}
  \interdecrease{n}{m} & = & \max(\polM(F^s,z) + 1 - m, n) \\
  & \geq & \max(\polM(F^s,z) + 1 - m, 0) \\
  & = & \max(\polM(F^s,z) + 1 - m, \max(\polM(F^s, z) + 1 - m), 0) \\
  & = & \interdecrease{\sizeinterpret{\minus \app \makecalc{\polM+1}{F;z;m}}}{m}
\end{array}\]
The inequalities now follow by writing out definitions.

\begin{theorem}\label{thm:completeness}
  If \( \aTTFunc \in \bffT \),
  then there exists a finite orthogonal STRS \( \rules \)
  such that \( \funF \) computes \( \aTTFunc \) in \( \rules \)
  and \( \rules \) admits a polynomially bounded interpretation.
\end{theorem}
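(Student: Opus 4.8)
The plan is to instantiate the construction of Sections~\ref{sec:completeness:constructors}--\ref{sec:completeness:execution} at the witnesses for $\aTTFunc \in \bffT$: fix an OTM $\aOTM$ and a second-order polynomial $\polM$ such that $\aOTM$ computes $\aTTFunc$ and runs within $\polM(|f|,|w|)$ steps, and let $\rules$ be the union of all rule schemas introduced there, namely the $\machinestep$ transition rules, the $\clean$ rules, the arithmetic and traversal rules ($\len$, $\maxsymb$, $\limit$, $\returnif$, $\tryapply$, $\tryall$, $\add$, $\mult$, $\minus$, $\extract$), and the driver rules for $\funF$, $\funF'$, $\execute$, $\execute'$. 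Three obligations then remain: that $\rules$ is finite and orthogonal, that $\funF$ computes $\aTTFunc$, and that the interpretation already exhibited meets \cref{def:polynomially-bounded-int}. Finiteness and orthogonality are immediate: the $\machinestep$ rules are schematic over the finite transition set $\transitions$ and the finitely many tape symbols, all other groups are fixed finite sets, every rule is left-linear by inspection, and non-overlap follows from determinism of $\aOTM$ (so distinct $\machinestep$ left-hand sides differ in state or read symbol) together with the disjoint constructor patterns defining the auxiliary symbols.

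For correctness, I would trace the reduction of $\funF \app \funS_f \app \encode{w}$. The $\funF$ and $\funF'$ rules first compute $|w|$ and reduce to an $\execute$-term on $\init{w}$ whose budget argument evaluates, by the construction of \cref{sec:completeness:bound}, to $\polM(\limitsize{f}{\emptyset},|w|)+1$ with empty query accumulator. I would then show by induction on the run of $\aOTM_f$ that $\execute$ faithfully tracks the machine: each non-oracle transition is realized by one $\machinestep$ reduction (\cref{lem:stepencode}) while the budget drops by one and the step counter rises by one, and each oracle transition additionally records the queried word in the accumulator $a$ and, through $\execute'$, resets the budget to $\polM(\limitsize{f}{A'},|w|)-m$ for the enlarged query set $A'$. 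The invariant that makes this work is that after querying exactly the set $A$ the number of steps performed is at most $\polM(\limitsize{f}{A},|w|)$ (\cref{lem:limitsizebounds}); since $\limitsize{f}{\cdot}$ is monotone in its set argument and the accumulator only grows, each reset leaves a budget that stays positive until the next oracle call or the final state, so the guarded $\execute$ rules never block prematurely. When $\finalstate$ is reached, $\extract$ and $\clean$ read the output word off tape~$1$, giving $\encode{\aTTFunc(f,w)}$ as demanded by \cref{def:type-2-rw-computability}.

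For the interpretation, the size set $\sizeInt{\word}=(\Nat,\geq)$ and the constructor constraints of \cref{def:polynomially-bounded-int} hold with $c=1$ by the choices of \cref{sec:completeness:constructors}, and $\funcinterpretcost{\funF}$ is a second-order polynomial since it is built from $\polM$, the polynomials $A_{\polM+1},B_{\polM+1}$ bounding $\totalcost{\makecalc{\polM+1}{F;z;a}}$, and the operations $+,\ast,F^\cost,F^\size$. It then remains to check that every rule is oriented in both components. The auxiliary rules were verified inline; for the four driver rules I would discharge the size inequalities using the identity $\interdecrease{\suc\app n}{m}=\interdecrease{n}{m+1}+1$ and the lower bound on $\interdecrease{n}{m}$ recorded just before the statement, and then expand the definitions of $\funcinterpretcost{\execute}$ and $\funcinterpretcost{\execute'}$, bounding each embedded $\totalcost{\makecalc{\polM+1}{F;z;a}}$ by $\costpoly{\sizeinterpret{a}}$, to obtain the cost inequalities.

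The hard part will be the $\execute$/$\execute'$ rules, where three demands conflict: the interpretation must strictly decrease on every step (for compatibility, hence termination, via \cref{thm:compatibility}); the operational budget $n$ must be reset \emph{upward} after each oracle call to remain sufficient; and the total cost must stay polynomial. Reconciling these is exactly why the sort $\nnat$ is ordered by $\leq$ and why the combined measure $\interdecrease{n}{m}=\max(\polM(F^s,z)+1-m,\,n)$ is used: it fuses the rising step counter $m$ and the resettable budget $n$ into a single quantity that provably falls by at least one per reduction while still admitting a polynomial cost bound. Getting the four interpretation functions to orient all driver rules simultaneously is the delicate, calculation-heavy core of the proof; the rest is assembly and bookkeeping.
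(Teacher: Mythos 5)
Your proposal is correct and follows essentially the same route as the paper: the theorem's proof there is exactly the construction of Sections~\ref{sec:completeness:constructors}--\ref{sec:completeness:execution}, with orthogonality from determinism, correctness via \cref{lem:stepencode} and the budget invariant of \cref{lem:limitsizebounds}, and orientation via the measure \( \interdecrease{n}{m} \) over the \( \leq \)-ordered sort \( \nnat \). You even make explicit the monotonicity-of-\( \limitsize{f}{\cdot} \) argument for why budget resets never strand the simulation, a point the paper leaves implicit.
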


\section{Conclusion}\label{sec:conclusion}

This paper gives the first characterization of the class of type-2 basic
feasible functionals by term rewriting based on the interpretation method and in particular
on polynomial \cs{} interpretations.
Surely, it would make sense to try to go beyond types of order 2 and try to characterize classes of basic feasible functionals of order at least 3. Such classes are less well understood than their order-2 counterparts and perhaps an analysis based on tools from rewriting could help to shed some light on their nature (see, e.g., Hugo F\'er\'ee's work on this subject~\cite{Feree}).

\bibliographystyle{alphaurl}
\bibliography{references}

\end{document}